\documentclass{amsart}
\date{July 23, 2008}

\def\betal{\left(l+\frac12\right)}
 \def\x#1{x_{#1}(l)}
\def\bok{B^{(0)}_\kappa} 
\def\bokb{B^{(0)}_{\kappa^B}}
\def\cok{C^{(0)}_\kappa} 
\def\cokc{C^{(0)}_{\kappa^C}} 

\def\const{\mathrm{const}\;}
\def\tr{\mathop{\mathrm{tr}}\nolimits} % Spur

\def\TF{\mathrm{TF}}
\def\bp{\mathbf{p}}
\def\bq{\mathbf{q}}
\def\bx{\mathbf{x}}
\def\by{\mathbf{y}}

\def\bJ{\mathbf{J}}
\def\bL{\mathbf{L}}

\def\cE{\mathcal{E}}
\def\cO{\mathcal{O}}
\def\cR{\mathcal{R}}
\def\cU{\mathcal{U}} 

\def\gH{\mathfrak{H}}
\def\gQ{\mathfrak{Q}}
\def\gS{\mathfrak{S}}

\def\rd{\mathrm{d}}
\def\ri{\mathrm{i}}
\def\eh{\tfrac12}
\def\H{\mathrm{H}}

\newcommand{\cz}{\mathbb{C}} % Komplexe Zahlen
\newcommand{\zz}{\mathbb{Z}} % Ganze Zahlen
\newcommand{\nz}{\mathbb{N}} % Nat"urliche Zahlen
\newcommand{\rz}{\mathbb{R}} % Relle Zahlen

\newtheorem{theorem}{Theorem}[section]
\newtheorem{proposition}[theorem]{Proposition}
\newtheorem{lemma}[theorem]{Lemma}
\newtheorem{corollary}[theorem]{Corollary}
\theoremstyle{definition}

\title[Scott Correction]{The Energy of Heavy Atoms According to Brown
  and Ravenhall: The Scott Correction}

\author[R. Frank]{Rupert L. Frank} 
\address{Department of Mathematics \\ Princeton
  University\\ Princeton, NJ 08544-1000\\USA}
\email{rlfrank@math.princeton.edu}

\author[H. Siedentop]{Heinz Siedentop}
\address{Mathematisches Institut\\ Ludwig-Maximilians-Universit\"at
  M\"unchen\\ Theresienstra\ss e 39\\ 80333 M\"unchen\\ Germany}
\email{h.s@lmu.de}

\author[S. Warzel]{Simone Warzel}
\address{Department of Mathematics \\ Princeton
  University\\ Princeton, NJ 08544-1000\\USA}
\email{swarzel@princeton.edu}

\subjclass{81V45, 81V55, 35Q40, 46N50, 47N50}

\keywords{Heavy atoms, ground state energy, relativistic Coulomb
  system, Scott correction, Brown-Ravenhall operator}

\begin{document}

\begin{abstract}
  We consider relativistic many-particle operators which -- according
  to Brown and Ravenhall -- describe the electronic states of heavy
  atoms.  Their ground state energy is investigated in the limit of
  large nuclear charge and velocity of light.  We show that the
  leading quasi-classical behavior given by the Thomas-Fermi theory is
  raised by a subleading correction, the Scott correction. Our result
  is valid for the maximal range of coupling constants, including the
  critical one.  As a technical tool, a
  Sobolev-Ga\-gliar\-do-Nirenberg-type inequality is established for
  the critical atomic Brown-Ravenhall operator.  Moreover, we prove
  sharp upper and lower bound on the eigenvalues of the hydrogenic
  Brown-Ravenhall operator up to and including the critical coupling
  constant.%\\[-1.2cm]
\end{abstract}
\maketitle

\tableofcontents
\section{Introduction and main result}
\label{sec:1}
The description of atoms and molecules, in particular of their
energies, has been a primer for the development of quantum mechanics.
However, it became soon clear that atoms with more than one electron
are not accessible to explicit solutions. This motivated the
development of approximate models for large Coulomb systems. One of
the most simple and -- simultaneously -- the most fundamental models
was introduced by Thomas \cite{Thomas1927}, Fermi
\cite{Fermi1927,Fermi1928}, and Lenz \cite{Lenz1932} who proposed the
energy functional which we will also use here. It predicts that the
ground state energy of atoms would decrease with the atomic number $Z$
to leading order as $Z^{7/3}$. In order to get a refined description,
Scott \cite{Scott1952} conjectured that the electrons close to the
nucleus should raise the energy by $Z^2/2$.  Considerably later
Schwinger \cite{Schwinger1980} argued also for Scott's prediction;
Schwinger \cite{Schwinger1981} and Englert and Schwinger
\cite{EnglertSchwinger1984StatisticalAtom:H,EnglertSchwinger1984StatisticalAtom:S,EnglertSchwinger1985A}
even refined these considerations by adding more lower order terms
\cite{Schwinger1981} (see also Englert \cite{Englert1988}).  The
challenge to address the question whether the predicted formulae would
yield asymptotically correct results when compared with the
$N$-particle Schr\"odinger theory was for a long time unsuccessful. It
was Lieb and Simon who proved in their seminal paper
\cite{LiebSimon1977} that the prediction of Thomas, Fermi, and Lenz is
indeed asymptotically correct. However, establishing the Scott
correction resisted the mathematical efforts and became Problem 10B of
Simon's 15 Problems in Mathematical Physics \cite{Simon1984}.
Eventually, the Scott correction was established mathematically by
Hughes \cite{Hughes1986,Hughes1990} (lower bound), and Siedentop and
Weikard
\cite{SiedentopWeikard1987O,SiedentopWeikard1987U,SiedentopWeikard1988,SiedentopWeikard1989,SiedentopWeikard1991}
(lower and upper bound). In fact even the existence of the
$Z^{5/3}$-correction conjectured by Schwinger was proved by Fefferman
and Seco
\cite{FeffermanSeco1989,FeffermanSeco1990,FeffermanSeco1990O,FeffermanSeco1992,FeffermanSeco1993,FeffermanSeco1994,FeffermanSeco1994T,FeffermanSeco1994Th,FeffermanSeco1995}.
Later these results were extended in various ways, e.g., to ions and
molecules.

Despite of the mathematical success in establishing the large $Z$
asymptotics of the Schr\"odinger theory, these considerations remain
questionable from a physical point of view, since large atoms force
electrons into orbits that are close to the nucleus where the
electrons move with high speed which should require a relativistic
treatment. The atom is shrinking with increasing $Z$: already in
non-relativistic quantum mechanincs the bulk of the electrons has a
distance $Z^{-1/3}$ from the nucleus; the electrons contributing to
the Scott correction even live on the scale $Z^{-1}$. Schwinger
\cite{Schwinger1981} has estimated these effects concluding that a
correction to the Scott correction occurs whereas the leading term
should be unaffected by the change of model.  S\o rensen
\cite{Sorensen2005} was the first who proved that the latter is indeed
the case for a simplified ad hoc naive relativistic model, the
Chandrasekhar multi-particle operator, in the limit of large $Z$ and
large velocity of light $c$.  In a previous paper \cite{Franketal2008}
we established the value of the Scott correction which is again of
order $ Z^2 $, a result which was independently announced by Solovej,
S\o rensen, and Spitzer \cite{Solovej2006} (see also S\o rensen
\cite{Sorensen1998} for the non-interacting case). Nevertheless, a
question from the physical point of view remains: Although the
Chandrasekhar model is believed to represent some qualitative features
of relativistic systems, there is no reason to assume that it should
give quantitative correct results.  Therefore, to obtain not only
qualitatively correct results it is interesting, in fact mandatory, to
consider a Hamiltonian which -- as the one by Brown and Ravenhall
\cite{BrownRavenhall1951} -- is derived from QED such that it yields
the leading relativistic effects in a quantitative correct manner.
(See also Sucher \cite{Sucher1980,Sucher1984,Sucher1987}.) The first
step in this direction was taken by Cassanas and Siedentop
\cite{CassanasSiedentop2006} who showed that, similarly to the
Chandrasekhar case, the leading energy is not affected. To show in
which way the Scott correction is changed for this model is our
concern in this paper.

\subsection{Relativistic energy form} 
According to Brown and Ravenhall \cite{BrownRavenhall1951} the energy
of an atom with $N$ electrons in a state $\psi\in \gQ_N^B$ is given by
\begin{equation}\label{eq:1}
   \cE_N^B(\psi) := \left\langle\psi, \left[\sum_{\nu=1}^N \left(
       c\, \boldsymbol\alpha_\nu\cdot \bp_\nu +c^2\beta_\nu -c^2- Z|\bx_\nu|^{-1}\right) + \mkern-8mu \sum_{1\leq\mu<\nu\leq
        N} \mkern-3mu |\bx_\mu-\bx_\nu|^{-1}\right] \psi\right\rangle .
\end{equation}
This involves the free Dirac operator reduced by the rest mass, acting
in $ L^2(\mathbb{R}^3, \mathbb{C}^4) $, with the four Dirac matrices
in standard representation,
$$
\boldsymbol{\alpha}=   \begin{pmatrix} 
    0 & \boldsymbol\sigma\\
    \boldsymbol\sigma & 0
  \end{pmatrix},\quad
  \beta =   \begin{pmatrix}
    1 & 0\\
    0 & -1
  \end{pmatrix},
$$
where $\boldsymbol\sigma$ are the three Pauli matrices in standard
representation, i.e.,
\begin{equation*}
  \sigma_1 =   \begin{pmatrix}
    0& 1\\
    1& 0
  \end{pmatrix}, \quad
  \sigma_2 =  \begin{pmatrix}
    0& -\ri\\
    \ri& 0
  \end{pmatrix}, \quad
  \sigma_3 =  \begin{pmatrix}
    1&0\\
    0&-1
  \end{pmatrix}.
\end{equation*}
We use atomic units in which $m=e^2=\hbar=1$. The parameter $Z$ is the
atomic number and $c$ the velocity of light.  

The Hilbert space of an electron is chosen as the positive spectral
subspace of the Dirac operator,
$$
\gH^B:=\chi_{[c^2,\infty)}(c\, \boldsymbol\alpha\cdot \bp
+c^2\beta)\left(L^2(\rz^3,\cz^4)\right),
$$ 
and, correspondingly, the Hilbert space of $N$ electrons $\gH_N^B$ is
the antisymmetric tensor product of the one-particle space, i.e.,
$\gH_N^B:=\bigwedge_{\nu=1}^N\gH^B$.  Finally, the form domain of
\eqref{eq:1} is $\gQ_N^B:=\gH_N^B \cap \gS(\rz^{3N},\cz^{4^N})$ with
$\gS$ the Schwartz space of rapidly decreasing functions. As is shown
in \cite{Evansetal1996}, the Brown-Ravenhall form $ \cE_N^B $ is
closable and bounded from below if and only if
\begin{equation}
  \label{eq:critical}
  \kappa:= \frac Zc \leq  \kappa^B:=\frac2{2/\pi+\pi/2}.
\end{equation} 
(See also Tix \cite{Tix1997,Tix1998} who improved the bound given in
\cite{Evansetal1996} to an explicite positive bound.) For the physical
value, about $1/137$, of the Sommerfeld fine structure, which equals
$1/c$ in atomic units used here, the critical atomic number $Z$
exceeds $124$ slightly.  This includes all known elements.

In the following we will assume that the atom described by
\eqref{eq:1} is neutral, i.e., $Z=N$, an assumption that we make
mainly for the sake of brevity and clarity of presentation, since the
Scott correction is independent of the ionization degree $N/Z\geq
\const>0$.  Similarly, it might seem that our treatment is restricted
to spherically symmetric systems (atoms).  However, on the energy
scale considered here, molecular Hamiltonians essentially separate --
in nature the distances between nuclei with charges $Z Z_1,..., Z Z_K$
remain on a scale much larger than $Z^{-1/3}$ -- into spherically
symmetric one-center problems (atoms). Therefore, the molecular case
follows from the atomic case by additional localization. However, for
the sake of brevity and clarity, we will spare the reader the
corresponding tedious technicalities, restrict to the atomic case, and
freely use the resulting symmetry.

Thus, according to Friedrichs, the one-particle form $\cE^B_1$
defines for $\kappa\leq \kappa^B$ a distinguished self-adjoint
operator in $ \gH^B $. Through a unitary transformation it may be
represented as a self-adjoint operator in the Hilbert space $\gH :=
L^2(\rz^3,\cz^2)$ of two-spinors.  More precisely, using the notation
$p := |\bp |$, $\boldsymbol{\omega}_\bp := \bp / p$ we set
\begin{align}\label{eq:scalarphi}
  E(p) :=& \sqrt{\bp^2+1},
\qquad \phi_\nu(p):= \sqrt{\frac{E(p) + (-1)^\nu}{2 E(p) }}, \ \nu=0,1,
\end{align} 
and introduce the following bounded operators on $\gH$,
\begin{equation}
  \label{fi1}
  \Phi_0(\bp)  := \phi_0(p) ,\quad
  \Phi_1(\bp):= \phi_1(p) \;\, \boldsymbol\sigma\cdot \boldsymbol{\omega}_\bp .
\end{equation}
The operator $ \boldsymbol{\Phi}_c : \, \gH \to \gH^B $, $ \psi
\mapsto \left( \Phi_0(\bp/c) \,\psi , \Phi_1(\bp/c)\, \psi \right) $,
embeds $\gH $ unitarily into $\gH^B$ \cite{CassanasSiedentop2006}.
Therefore, the form $ \cE^B_1 $ defines the (two-spinor)
Brown-Ravenhall operator in $ \gH $,
\begin{equation}\label{eq:BR}
  B_c[Z/|\bx|]:=\boldsymbol{\Phi}_c^{-1} \left(c
    \boldsymbol\alpha\cdot \bp +c^2\beta -c^2 - Z/|\bx| \right)
  \boldsymbol{\Phi}_c = c^2 E(p/c) - c^2 - \cU_c(Z/|\bx|) ,
\end{equation}
where $ \cU_c(A) := \Phi_0(\bp/c)\, A \,\Phi_0(\bp/c) + \Phi_1(\bp/c)
\, A \, \Phi_1(\bp/c) $.  In the case $c=1$ we denote this operator by
$B_Z$.  Further properties properties of $B_Z$ and its relation to the
corresponding Chandrasekhar operator and Schr\"odinger operator
\begin{equation}
  \label{CS}
  C_Z:=(\bp^2+1)^{1/2}-1-Z/|\bx|,\qquad S_Z:=\tfrac12\bp^2-Z/|\bx|
\end{equation}
all realized in $\gH$, can be found in Sections~\ref{sec:crit} and
\ref{sec:BC} below and in Appendix \ref{wasserstoff}.

%%%%%%%%%%%%%%%%%%%%%%%%%%%%%%%%%%%%%%%%%%%%%%%%%%%%%%%%%%%%%%%%%%%%%5

\subsection{Main result} 
We are interested in the ground state energy
\begin{equation*}
  E^B_c(Z) := \inf\{\cE_Z^B(\psi)\, |\, \psi\in\gQ_Z^B,\ \|\psi\|=1\} 
\end{equation*} 
of the energy form \eqref{eq:1} for large atomic number $Z$ and large
velocity of light $ c $ satisfying \eqref{eq:critical}. Note that we
picked $N=Z$. It was shown in \cite{CassanasSiedentop2006}, that
similarly to the Chandrasekhar case \cite{Sorensen2005}, the leading
behavior of $E^B_c(Z)$ is not affected by relativistic effects and, as
in the Schr\"odinger case \cite{LiebSimon1977}, given by the minimal
Thomas-Fermi energy
\begin{equation}
  \label{eq:minimum}
  E_\mathrm{TF}(Z):= \inf\{\cE_\mathrm{TF}(\rho)\, | \, \rho\in
  L^{5/3}(\rz^3),\ \rho\geq0,\ D(\rho,\rho)<\infty\} .
\end{equation}
The latter is defined in terms of the 
Thomas-Fermi energy functional 
\begin{equation*}
  \cE_\mathrm{TF}(\rho)
  :=  \int_{\rz^3}\left[\frac{3}{5}\gamma_\mathrm{TF}\,\rho(\bx)^{5/3} - \frac
    Z{|\bx|} \rho(\bx)\right]\rd \bx + D(\rho,\rho)
\end{equation*}
where, in our units, $\gamma_\mathrm{TF}=(3\pi^2)^{2/3}/2$ and
\begin{equation*}
  D(\rho,\sigma):=\frac12\int_{\rz^3 }
  \int_{\rz^3}\frac{\overline{\rho(\bx)}\sigma(\by)}{|\bx-\by|} \,\rd \bx \rd\by
\end{equation*}
is the Coulomb scalar product.  By scaling, one finds
$E_\mathrm{TF}(Z) = E_\mathrm{TF}(1) \, Z^{7/3}$.

This paper concerns the correction to the leading behavior. For the
formulation of the main result, we abbreviate the negative part of an
operator by $A_- := - A\chi_{(-\infty,0)}(A)$ and introduce for
$0<\kappa\leq\kappa^B$ the spectral shift
\begin{equation}
  \label{eq:scott}
  s(\kappa) := \kappa^{-2} \tr_{\gH}\left[\left({B_\kappa}\right)_- -\left( {S_\kappa}\right)_- \right] .
\end{equation}
(We use the term ``spectral shift'' for $s$ for convenience although
it is used in slightly different meaning otherwise.)  It describes the
shift of the Brown-Ravenhall bound state energies compared to those of
the Schr\"odinger operator. In Section~\ref{sec:BC} we show that $s$
is well-defined and discuss some of its properties. In particular, we
prove that the function $s$ is continuous and non-negative on the
interval $(0,\kappa^B]$ and satisfies
\begin{equation}\label{eq:szero}
	s(\kappa)=\cO(\kappa^2)
\qquad\text{as}\ \kappa\to 0.
\end{equation}

We are now ready to state our main result.
\begin{theorem}[\textbf{Scott correction}]
  \label{t2}
  There exists a constant $C>0$ such that for all $c\geq Z/\kappa^B$
  and for all $Z\geq 1$ one has
\begin{equation}\label{eq:main}
  \left|E_{c}^B(Z)-E_\mathrm{TF}(Z)-\left(\tfrac12-s(Z/c)\right)Z^2\right| 
  \leq C Z^{47/24}.
\end{equation}
\end{theorem}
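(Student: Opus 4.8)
The plan is to establish the asymptotics \eqref{eq:main} by the standard strategy of matching independent upper and lower bounds, combining a Thomas-Fermi–type analysis of the bulk energy with a careful treatment of the innermost electrons responsible for the Scott term $(\tfrac12 - s(Z/c))Z^2$. The essential physical picture is a separation of scales: on the scale $|\bx|\sim Z^{-1/3}$ the relativistic kinetic energy $c^2E(p/c)-c^2$ is indistinguishable from $\tfrac12 p^2$ up to errors that do not reach the $Z^2$ level (here one uses $c\geq Z/\kappa^B$, so that momenta of size $Z^{1/3}$ are small compared to $c$), and this region yields $E_\mathrm{TF}(Z)$ exactly as in \cite{LiebSimon1977,CassanasSiedentop2006}; on the scale $|\bx|\sim Z^{-1}$, where momenta are of order $Z\sim \kappa c$, the full relativistic operator must be kept, and the electrons there feel an effectively hydrogenic potential, contributing $\tr_\gH\bigl[(B_{Z/c}\cdot c^2\text{-scaled})_-\bigr]$, which after the rescaling built into \eqref{eq:scott} is precisely $(\tfrac12 - s(Z/c))Z^2$ relative to the Schrödinger comparison already contained in the TF term.

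For the \textbf{lower bound} I would first replace the interaction $\sum_{\mu<\nu}|\bx_\mu-\bx_\nu|^{-1}$ by a mean-field term using the Thomas-Fermi density $\rho_Z^\mathrm{TF}$ (the Lieb–Oxford / correlation estimate), reducing the $N$-body problem to a sum of one-particle operators $B_c[Z/|\bx|] - |\bx|^{-1}*\rho_Z^\mathrm{TF} + \text{(self-energy correction)}$. The negative spectral sum of this one-particle operator must then be estimated from below by $E_\mathrm{TF}(Z) + (\tfrac12 - s(Z/c))Z^2$ up to $O(Z^{47/24})$. This is done by a localization in $|\bx|$ (an IMS-type partition at radius $\sim Z^{-\theta}$ for a suitable $\theta\in(1,1/3)$... more precisely the outer region $|\bx|\gtrsim Z^{-\sigma}$ and inner region $|\bx|\lesssim Z^{-\sigma}$ with $\sigma$ chosen so the crossover errors are $\ll Z^2$): in the outer region the relativistic symbol is compared to $\tfrac12 p^2$ and a semiclassical (coherent-state / Weyl) lower bound reproduces the TF energy; in the inner region the screening potential is a bounded perturbation of $Z/|\bx|$ and the negative spectrum is controlled by the hydrogenic operator $B_{Z/c}$, where the sharp eigenvalue bounds for the hydrogenic Brown-Ravenhall operator announced in the abstract (and the critical Sobolev–Gagliardo–Nirenberg inequality) are exactly what guarantee the error is of lower order uniformly up to $\kappa = \kappa^B$. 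The \textbf{upper bound} is obtained by a variational trial density matrix: take the TF minimizer to build a Slater determinant of the outer electrons and add the negative eigenfunctions of the hydrogenic operator for the inner electrons, then estimate the interaction cross-terms; the bookkeeping is symmetric to the lower bound.

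The \textbf{main obstacle} is the uniformity of all error estimates up to and including the critical coupling $\kappa^B$. At criticality the hydrogenic Brown-Ravenhall operator $B_{\kappa^B Z/Z}$ has a zero-energy resonance-type behavior and its ground state has only borderline regularity, so the usual Kato/Hardy-type bounds degenerate; this is precisely why the critical Sobolev-Gagliardo-Nirenberg inequality and the sharp two-sided eigenvalue bounds for the hydrogenic operator (Sections~\ref{sec:crit} and \ref{sec:BC} and Appendix~\ref{wasserstoff}) have to be proved first — they provide the replacement for the coercivity one would otherwise get for free in the subcritical regime. A secondary difficulty is that the unitary transformation $\boldsymbol{\Phi}_c$ is nonlocal, so localization in $\bx$ does not commute with the kinetic term; one must control the resulting commutators, and showing these commutator errors stay below $Z^{47/24}$ — which fixes the final exponent — requires the momentum-space mapping properties of $\Phi_0,\Phi_1$ together with the scale separation $Z^{-1}\ll Z^{-\sigma}\ll Z^{-1/3}$. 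Once these technical inputs are in place, the matching of the two bounds and the identification of the constant in front of $Z^2$ via \eqref{eq:scott} is routine.
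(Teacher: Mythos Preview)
Your physical picture and identification of the critical-coupling obstacle are sound, but the route you sketch diverges substantially from the paper's. You propose a spatial IMS localization at radius $Z^{-\sigma}$, with a semiclassical (coherent-state) analysis in the outer zone and a hydrogenic comparison in the inner zone. The paper does neither. Its central device is to \emph{compare directly with the Schr\"odinger ground state energy}: since $E^S(Z) = E_\mathrm{TF}(Z) + \tfrac12 Z^2 + \cO(Z^{47/24})$ is already known from \cite{SiedentopWeikard1987O}, the whole problem reduces to proving $E^S(Z) - E^B_c(Z) = s(\kappa)Z^2 + \cO(Z^{47/24})$. Both sides of this difference are expressed via one-particle traces (correlation inequality for the lower bound on $E^B_c$, reduced Hartree--Fock trial density matrices $d^S,d^B$ for the upper bound), and the separation of ``inner'' and ``outer'' electrons is carried out not in position space but in \emph{angular momentum}: channels with $j$ below a cutoff $J\sim Z^{1/9}$ (respectively $L\sim Z^{1/12}$) see the bare Coulomb potential, large $j$ see the screened Thomas--Fermi potential. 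The estimate that makes this work is Theorem~\ref{t:3}, the $\kappa^4 j^{-2}$ decay of the partial-wave spectral shift $\tr_j\bigl([B[v]+\mu]_- - [S[v]+\mu]_-\bigr)$, which replaces any spatial localization argument.

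What the paper's approach buys: it never performs an IMS cut on the nonlocal Brown--Ravenhall operator, so the commutator errors you flag as a ``secondary difficulty'' simply do not arise in that form; the nonlocality shows up only through the twisting $\cU_c$ in the electrostatic terms and is handled by the pointwise estimates of Section~\ref{sec:twisted} (Lemmas~\ref{lemma:bu} and~\ref{lemma:Uv}) applied channel by channel. It also inherits the exponent $47/24$ directly from the Schr\"odinger result rather than having to re-derive a sharp semiclassical bound for the relativistic operator. Your approach is closer in spirit to the Solovej--S\o rensen--Spitzer program and is not a priori wrong, but it would require a genuine semiclassical lower bound for the screened Brown--Ravenhall operator in the outer zone (not just the pointwise kinetic comparison $c^2E(p/c)-c^2\leq \tfrac12 p^2$, which goes the wrong way for a lower bound on the energy), and controlling IMS localization errors for the twisted nonlocal potential $\cU_c(|\bx|^{-1})$ across a spatial cut is considerably harder than the angular-momentum route taken here. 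Your remark that the inner-zone contribution is ``$(\tfrac12 - s(Z/c))Z^2$ relative to the Schr\"odinger comparison already contained in the TF term'' is also off: the TF term is purely $Z^{7/3}$ and contains no Scott piece, so in a spatial-localization scheme you would still have to extract $\tfrac12 Z^2$ separately before the subtraction defining $s$ makes sense.
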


Put differently, Theorem~\ref{t2} asserts that in the limit
$Z\to\infty$ we have uniformly in the quotient $\kappa=Z/c \in
(0,\kappa^B]$
\begin{equation}
  \label{th:eqt1}
  E_c^B(Z) =E_\mathrm{TF}(Z) + \left(\tfrac12 - s(\kappa)\right) Z^2 + o(Z^2) .
\end{equation}
(We do not claim that the error $Z^{47/24}$ in \eqref{eq:main} is
sharp, so we only write $o(Z^2)$ here.) The second term
$\left(\tfrac12 - s(\kappa)\right)Z^2$ in \eqref{th:eqt1} is the
so-called Scott correction in the Brown-Ravenhall model.  It does not
exceed the Scott correction $Z^2/2$ in the non-relativistic model
\cite{SiedentopWeikard1987O}. Indeed, if $\kappa=Z/c$ stays away from
zero then there is a relativistic lowering of the ground state energy
at order $Z^2$. On the other hand, in the non-relativistic limit
$c\to\infty$ with $ \kappa=Z/c \to 0 $, one recovers --
non-surprisingly -- the value of the Schr\"odinger case. In this case
\eqref{eq:szero} implies
\begin{equation}
  \label{explicit}
  E_{c}^B(Z) = E_\mathrm{TF}(Z) + \tfrac12Z^2 + \cO(c^{-2}Z^{4} + Z^{47/24}).
\end{equation}
The Scott correction in the Brown-Ravenhall model, however, exceeds
the Scott correction predicted by the naive Chandrasekhar model
treated in \cite{Franketal2008} and announced in \cite{Solovej2006}.
This follows from the fact that sums of bound state energies of the
atomic Chandrasekhar operator are dominated by those of the
Brown-Ravenhall operator, cf. the proof of Theorem~\ref{t:3} below.

%%%%%%%%%%%%%%%%%%%%%%%%%%%%%%%%%%%%%%%%%%%%%%%%%%%%%%%%%%%%%%%%%%%%%%%%

\subsection{Outline of the paper}\label{sec:outline}

The central strategy of our paper is to compare the ground state
energy of the Brown-Ravenhall operator with that of the Schr\"odinger
operator. The latter is known up to the required accuracy $o(Z^2)$ and
the leading contribution agrees with the Brown-Ravenhall energy. The
subtraction of the corresponding ground state energies results in a
renormalized effective model which accurately describes the energy
differences and is amenable to analysis. The germ of this idea has
been presented in the simpler context of the Chandrasekhar
model~\cite{Franketal2008}.  The full blown renormalization required
is developed in this paper. A virtue of our approach is that it leads
to an explicit formula for the spectral shift which can be evaluated
numerically. We believe it would be interesting to compare this
formula with experimental data.

We show that the difference between the Brown-Ravenhall and
Schr\"odinger ground state energies on the multi-particle level
coincides, up to the required accuracy, with a spectral shift on the
one-particle level. A crucial step in our analysis is therefore a
bound on the corresponding spectral shift for rather general
spherically symmetric potentials.  This is presented in Section
\ref{sec:BC}, where we show that sums of differences of
Brown-Ravenhall and Schr\"odinger eigenvalues decay rather rapidly as
the angular momentum increases.

In Section \ref{sec:crit} we address various aspects of hydrogenic
Brown-Ra\-ven\-hall operators. An essential feature and source of
difficulties, which does not occur in the naive Chandrasekhar model,
is the non-locality of the potential energy. In particular, instead of
the usual Coulomb potential $ |\bx|^{-1} $ we face the `twisted'
non-local operator $ \cU_c( |\bx|^{-1} ) $. Estimating the difference
between the corresponding potential energies is the topic in
Subsection~\ref{sec:twisted}. Since, in contrast to the Schr\"odinger
case, the eigenvalues of the hydrogenic Brown-Ravenhall operator are
not known explicitly, we prove upper and lower bounds in Subsection
\ref{wasserstoff2}. Our bounds are sharp with respect to their
dependence on the quantum numbers $n$ and $l$. An upper bound is given
by the Dirac eigenvalues, a consequence of the mini-max principle for
eigenvalues in the gap. For the lower bound we overcome the
non-locality of the potential by a non-trivial comparison argument
with a super-critical Chandrasekhar operator. In Subsection
\ref{sec:sobolev} we prove a new Sobolev-type inequality, from which
we derive estimates on the eigenfunctions of the hydrogenic
Brown-Ravenhall operator. The technical challenge here is to prove
such a result up to and including the critical coupling constant.

Finally, we present the proof of our main result, Theorem \ref{t2}, in
Section~\ref{sec:2}.

For the readers' convenience we collect various facts in the
appendices. Appendix~\ref{app:J} recalls the partial wave
decomposition of the Hilbert space of two-spinors, Appendix
\ref{twisting} establishes some useful properties of the twisting
operators, and Appendix \ref{wasserstoff} collects basic facts on
hydrogenic Brown-Ravenhall and Chandrasekhar operators. Appendix
\ref{sec:liebyau} fills in some details in the proof of Theorem
\ref{thm:evhydc} and, eventually, Appendix \ref{app:a} defines the
one-particle density matrix giving the main contribution of the
energy.

\section{The hydrogenic Brown-Ravenhall operator\label{sec:crit}}

In this section we set $ c = 1 $ and investigate the Brown-Ravenhall
operator with Cou\-lomb potential 
\begin{equation}
  \label{eq:12a}
  B_\kappa =\sqrt{\bp^2 - 1} - 1 - \kappa\cU(|\bx|^{-1})
\end{equation}
in the Hilbert space $\gH=L^2(\mathbb{R}^3,\mathbb{C}^2)$ of
two-spinors, where we recall that
\begin{equation}
  \label{def:Ucal}
  \cU(|\bx|^{-1})=\Phi_0(\bp)|\bx|^{-1}\Phi_0(\bp)+\Phi_1(\bp)|\bx|^{-1}\Phi_1(\bp)
\end{equation}
with $\Phi_\nu$ defined in \eqref{fi1}.  In Subsection
\ref{wasserstoff2} we prove sharp upper and lower bounds on the
eigenvalues of $ B_\kappa $. In Subsection \ref{sec:sobolev} we prove
$L^p$ estimates on the eigenfunctions of this operator.  Technically,
this is expressed as a Sobolev-type inequality for the massless
version of $B_\kappa$, which is a non-negative operator. Finally, in
Subsection \ref{sec:twisted} we compare the potential energy of the
operator $B_\kappa$, namely $\left\langle
  \psi,\cU(|\bx|^{-1})\psi\right\rangle$, with the corresponding local
potential energy $\left\langle \psi,|\bx|^{-1} \psi\right\rangle$. For
comparison purpose also the corresponding Chandrasekhar and
Schr\"odinger operator $C_\kappa$ and $S_\kappa$ occur (see
\eqref{CS}).

According to \cite{Evansetal1996} and \cite{Kato1966}
the operators $ B_\kappa $ and $ C_\kappa $ are well-defined
for all $ \kappa \leq \kappa^\# $ with $ \# = B, C $ and 
\begin{equation}\label{eq:couplingcrit}
\kappa^B = \frac{2}{2/\pi+\pi/2},
\qquad
\kappa^C := 2/\pi ;
\end{equation}
see also Appendix~\ref{wasserstoff}.
Of course, for the
Schr\"o\-din\-ger operator no upper bound on $\kappa$ is needed. 
%%%%%%%%%%%%%%%%%%%%%%%%%%%%%%%%%%%%%%%%%%%%%%%%%%%%%%%%%%%%%%%

\subsection{Estimates on eigenvalues of the hydrogen atom\label{wasserstoff2}}
In contrast to the Schr\"odinger or Dirac models, the eigenvalues of
$B_\kappa$ and $C_\kappa$ are not known explicitly. In order to obtain
upper and lower bounds on these eigenvalues, we use that the spectra
of $B_\kappa$, $ C_\kappa $ and $ S_\kappa$ may be classified in terms
of angular momenta.

As usual write $\bL:=\bx\times \bp$ for the operators of orbital angular
momentum and $\bJ:= \bL +\eh \boldsymbol\sigma$ for the operators of
total angular momentum.  The four operators $B_\kappa$, $\bJ^2$, $J_3$,
$\bL^2$ commute pairwise, and this also holds, if $C_\kappa$ or $S_\kappa$
replace $B_\kappa$.  This allows a decomposition of the Hilbert space
$\gH$ into orthogonal subspaces which reduce such a quadruple of
operators, i.e.,
\begin{equation}\label{eq:decomp}
  \gH = \bigoplus_{j \in \nz_0 + \frac{1}{2} } \bigoplus_{l=j \pm 1/2} \gH_{j,l} ,
  \qquad \gH_{j,l}:=\bigoplus_{m=-j}^j \gH_{j,l,m} .
\end{equation}
Here $\gH_{j,l,m}$ is the maximal joint eigenspace of $\bJ^2$ with
eigenvalues $j(j+1)$, of $\bL^2 $ with eigenvalue $l(l+1)$, and $J_3$
with eigenvalue $m$. More details concerning the partial wave
decomposition~\eqref{eq:decomp} can be found in Appendix \ref{app:J}.

We denote by $b_{j,l}(\kappa)$, $c_l(\kappa)$, and $s_l(\kappa)$ the
reduced operators corresponding to fixed angular momenta $ j $ and $ l
$, where, strictly speaking, we consider $ b_{j,l}(\kappa) $ and $
c_l(\kappa) $ in momentum space whereas $s_l(\kappa) $ in position
space. We refer to Appendix \ref{wasserstoff} for precise definitions
and further discussion.

The main result of this subsection is that for large quantum numbers
$n$, $j$, and $ l $, the eigenvalues of $ b_{j,l}(\kappa) $ and $
c_l(\kappa) $ behave similarly to the explicitly known ones of the
Schr\"odinger operator $ s_l(\kappa) $.

\begin{theorem}[\textbf{Energies of Brown-Ravenhall hydrogen}\label{thm:evhydbr}]
  There is a constant $C< \infty $ such that for all
  $j\in\nz_0+\tfrac{1}{2}$, and $l = j \pm \tfrac{1}{2}$, $n\in\nz$
  and $\kappa\in (0,\kappa^B]$ one has
  \begin{equation}\label{eq:evhydbr}
    -C\frac{\kappa^2}{(n+l)^{2}}\leq \lambda_n(b_{j,l}(\kappa)) \leq -\frac{\kappa^2}{2 (n+l)^{2}}.
  \end{equation}
\end{theorem}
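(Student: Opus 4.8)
The plan is to prove the two inequalities in \eqref{eq:evhydbr} separately, since they have quite different natures: the upper bound is essentially a variational statement, while the lower bound requires controlling the non-locality of $\cU(|\bx|^{-1})$.

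\emph{Upper bound.} For the upper bound $\lambda_n(b_{j,l}(\kappa)) \leq -\kappa^2/(2(n+l)^2)$ I would invoke the relation between the Brown-Ravenhall operator and the Dirac operator: the eigenvalues $\lambda_n(b_{j,l}(\kappa))$ in the spectral gap of $B_\kappa$ are characterized by a min-max principle (the Talman/Esteban-S\'er\'e type principle for operators with gap), and since the Brown-Ravenhall operator arises from the Dirac operator by projection onto the positive spectral subspace, the $n$-th min-max value is bounded above by the corresponding Dirac eigenvalue in the channel $(j,l)$. The Dirac eigenvalues of the Coulomb problem are explicitly known; reduced by the rest mass and expanded, the $n$-th one in the channel labeled by $j$ behaves like $-\kappa^2/(2(n+l)^2)$ up to higher-order corrections that improve the bound (the relativistic correction to the Balmer formula lowers the energy). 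One must be a little careful that the principal quantum number in the Dirac/Sommerfeld formula matches $n+l$ in our labeling, and that the bound holds uniformly up to $\kappa = \kappa^B$ (which is strictly below the Dirac critical value $1$), so no delicacy at the endpoint arises here.

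\emph{Lower bound.} The lower bound is where the real work lies, and I expect it to be the main obstacle. The difficulty is that $\cU(|\bx|^{-1})$ is not the Coulomb potential but the twisted non-local operator $\Phi_0 |\bx|^{-1}\Phi_0 + \Phi_1 |\bx|^{-1}\Phi_1$, so one cannot directly compare $B_\kappa$ with a local Schr\"odinger operator fiberwise. The strategy announced in the outline is a comparison with a \emph{super-critical Chandrasekhar operator}: using operator monotonicity and the explicit form of $\Phi_0,\Phi_1$ one shows $\cU(|\bx|^{-1}) \leq (1+\varepsilon)|\bx|^{-1}$ in a suitable sense (or rather $b_{j,l}(\kappa) \geq c_{l'}(\kappa(1+\varepsilon))$ for an appropriate $l'$), at the price of a coupling constant that may exceed $\kappa^C = 2/\pi$. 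This is acceptable because one only needs a \emph{lower} bound on eigenvalues, and the eigenvalues of a (formally super-critical, but still bounded-below on each high angular-momentum channel) Chandrasekhar operator can be bounded below by $-C\kappa^2/(n+l)^2$; for large $l$ the centrifugal term tames the potential and the operator behaves like the Schr\"odinger operator whose eigenvalues are exactly $-\kappa^2/(2(n+l)^2)$. Concretely I would (i) establish the fiberwise operator inequality relating $b_{j,l}(\kappa)$ to a Chandrasekhar fiber with enlarged coupling, controlling the error terms coming from the difference $\phi_1(p)\boldsymbol\sigma\cdot\boldsymbol\omega_\bp$ versus the scalar piece; (ii) bound the Chandrasekhar fiber eigenvalues below by comparison with the Schr\"odinger fiber $s_l$, using $\sqrt{\bp^2+1}-1 \geq \tfrac12\bp^2 - \tfrac18(\bp^2)^2$ and absorbing the quartic correction into the Coulomb term on the relevant length scale $\sim (n+l)$; (iii) check that all constants are uniform in $\kappa \in (0,\kappa^B]$ and in $n,j$, $l = j\pm\tfrac12$.

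The crux — and the step I expect to consume most of the argument — is step (i): making precise and quantitative the passage from the non-local twisted potential to a local Coulomb potential with a controlled (possibly super-critical) coupling, in a way that survives all the way to $\kappa = \kappa^B$ and does not lose the correct $(n+l)^{-2}$ decay. The identity $\Phi_0^2 + \Phi_1^2 = 1$ and the commutation of $\Phi_\nu$ with the angular momentum operators are the structural facts that make this feasible, together with the fact that in each fixed $(j,l)$ channel the operators act on radial functions where $|\bx|^{-1}$ is relatively form-bounded with respect to $\bp$ with known constants.
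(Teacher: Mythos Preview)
Your upper bound strategy matches the paper's: compare $b_{j,l}(\kappa)$ with the Dirac eigenvalues via the min-max principle for eigenvalues in gaps, then use that the Dirac eigenvalues reduced by the rest mass lie below the Schr\"odinger eigenvalues.

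For the lower bound, your general idea of reducing to the Chandrasekhar operator is right, but the mechanism differs from the paper's and your plan has a genuine gap. The paper does not produce an operator inequality with enlarged coupling. Instead, it notes that $b_{j,l}(\kappa)$ is unitarily equivalent to $\Lambda_+(d_{j,l}(\kappa)-1)\Lambda_+$ and applies the variational principle: relaxing the constraint $f\in\Lambda_+ L^2(\rz_+,\cz^2)$ to $f\in L^2(\rz_+,\cz^2)$ gives $\lambda_{2n-1}(b_{j,l}(\kappa))\geq\lambda_n(c_l(\kappa))$ with the \emph{same} coupling $\kappa$. Since $\kappa^C_l>\kappa^B$ for all $l\geq 1$, this keeps $c_l(\kappa)$ subcritical whenever $j\geq 3/2$. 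The Chandrasekhar lower bound is then obtained via a momentum-space cutoff (using $\sqrt{p^2+1}-1\geq(1-\delta)p$ for large $p$ and $\geq c_\delta p^2/2$ for small $p$), not your Taylor bound $\sqrt{p^2+1}-1\geq\tfrac12 p^2-\tfrac18(p^2)^2$; the latter yields an operator unbounded below, and ``absorbing the quartic correction into the Coulomb term'' does not work as stated.

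The real gap is the case $j=1/2$ with $\kappa$ near $\kappa^B$. Here $l\in\{0,1\}$, and the Chandrasekhar fiber $c_0(\kappa)$ is unbounded below for every $\kappa>\kappa^C=2/\pi$, in particular for $\kappa=\kappa^B$. Thus you cannot pass to a super-critical Chandrasekhar operator that is ``still bounded below on each high angular-momentum channel'': the problematic channel is precisely $l=0$. The paper's solution (Lemma~\ref{lem:reduction}) exploits the algebraic coincidence $(\kappa^C_0)^{-1}+(\kappa^C_1)^{-1}=2(\kappa^B)^{-1}$ to decompose $b_{1/2,l}(\kappa^B)$ into two pieces, each controlled by a \emph{critical} (not super-critical) Chandrasekhar operator in the channel $l'=0$ or $l'=1$ respectively, using also the pointwise inequality $(E(p)-1)\phi_0(p)^{-2}\geq 2(E(p/2)-1)$. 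One then needs eigenvalue-counting bounds $N(-\tau,c_{l'}(\kappa^C_{l'}))\leq\const\,\tau^{-1/2}$ for $l'=0,1$, obtained via a Lieb--Yau type localization argument. This decomposition and the arithmetic identity behind it are the missing ingredients in your proposal.
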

Here and below, we denote by $\lambda_1(A)\leq \lambda_2(A)\leq
\ldots$ the eigenvalues, repeated according to multiplicities, below
the bottom of the essential spectrum of the self-adjoint, lower
semi-bounded operator $A$.  Note that $-\kappa^2(2 (n+l)^{2})^{-1} =
\lambda_n(s_l(\kappa))$ on the left hand side of \eqref{eq:evhydbr} is
the $n$-th eigenvalue of the Schr\"odinger operator corresponding to
angular momentum $l$. In particular, we conclude from
\eqref{eq:evhydbr} that for all $\mu\geq 0$
  \begin{equation}\label{eq:t3a}
    0\leq \tr_{j,l} \left( \left[B_\kappa+\mu \right]_- 
      - \left[ S_\kappa + \mu \right]_-\right) < \infty.
  \end{equation}

  In the proof of Theorem \ref{thm:evhydbr} we use heavily the
  corresponding result for the Chandrasekhar case, which we state
  next.
\begin{theorem}[\textbf{Energies of Chandrasekhar hydrogen}\label{thm:evhydc}]
  There is constant $C< \infty$ such that for all $l\in\nz_0$, $n\in\nz$ and
  $\kappa\in (0,\kappa^C]$ one has
  \begin{equation}\label{eq:evhydc}
    -C \frac{\kappa^2}{(n+l)^{2}}\leq \lambda_n(c_{l}(\kappa)) 
    \leq -\frac{\kappa^2}{2(n+l)^{2}}.
  \end{equation}
\end{theorem}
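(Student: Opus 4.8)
The plan is to establish the two bounds in \eqref{eq:evhydc} separately, treating the upper bound (which is the easy direction) first and then the lower bound.

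\textbf{Upper bound.} I would obtain $\lambda_n(c_l(\kappa)) \leq -\kappa^2/(2(n+l)^2)$ by a variational argument using the Schr\"odinger eigenfunctions as trial states. The key inequality is the pointwise operator bound $\sqrt{\bp^2+1} - 1 \leq \tfrac12 \bp^2$, valid on all of $\gH$ (equivalently $\sqrt{t+1}-1 \leq t/2$ for $t \geq 0$), so that $c_l(\kappa) \leq s_l(\kappa)$ as quadratic forms on each angular momentum channel. By the min-max principle $\lambda_n(c_l(\kappa)) \leq \lambda_n(s_l(\kappa)) = -\kappa^2/(2(n+l)^2)$, the latter being the explicitly known hydrogenic Schr\"odinger eigenvalue in the channel of angular momentum $l$ (with principal quantum number $n+l$). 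This requires only that one checks the essential spectrum of $c_l(\kappa)$ starts at $0$ and that the trial functions lie in the form domain, which is routine since $\kappa \leq \kappa^C = 2/\pi$ is the form-boundedness threshold.

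\textbf{Lower bound.} This is the main obstacle. The difficulty is that $\sqrt{\bp^2+1}-1$ is \emph{not} bounded below by a constant times $\bp^2$, so one cannot simply compare with a Schr\"odinger operator from below; the relativistic kinetic energy is weaker than $\tfrac12\bp^2$ at high momenta, which is exactly the regime that controls small $(n+l)$. I would instead split the kinetic energy: for any $P>0$, write $\sqrt{\bp^2+1} - 1 \geq \tfrac{1}{2}\bp^2\,\chi_{\{p\leq P\}}(\bp) + (\sqrt{P^2+1}-1)\chi_{\{p > P\}}(\bp)$ up to a correction, or, more cleanly, use the elementary bound $\sqrt{t+1}-1 \geq c_P t$ on $\{t \leq P^2\}$ with $c_P = (\sqrt{P^2+1}-1)/P^2$, combined with a localization in momentum. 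Alternatively — and this is likely the cleaner route given the paper's later reliance on ``a non-trivial comparison argument with a super-critical Chandrasekhar operator'' for the Brown--Ravenhall case — I would exploit the known decay rate of the Chandrasekhar eigenvalues: for large $n+l$ the relevant momenta are small (of order $\kappa/(n+l)$), where $\sqrt{\bp^2+1}-1$ \emph{is} comparable to $\tfrac12\bp^2$, and for bounded $n+l$ one uses a crude lower bound on $c_l(\kappa)$ from the operator inequality $c_l(\kappa) \geq -\const\cdot\kappa^2$ (uniform in $l$, available since the whole spectrum of $C_\kappa$ below zero is contained in $[-\const\,\kappa^2, 0)$ by scaling and the critical-coupling stability). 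Combining: for $n+l$ large, $\lambda_n(c_l(\kappa)) \geq -C\kappa^2/(n+l)^2$ via the comparison with $s_l$ on the low-momentum subspace plus an error controlled by the high-momentum tail; for $n+l$ bounded by some fixed $N_0$, the crude bound $-\const\,\kappa^2 \geq -\const\,N_0^2\,\kappa^2/(n+l)^2$ suffices. The quantitative heart of the argument is the scaling/Birman--Schwinger analysis showing that the number of eigenvalues of $c_l(\kappa)$ below $-E$ is $O(\kappa/\sqrt{E}) - l$ for $E$ small, which pins down $\lambda_n(c_l(\kappa)) \geq -C\kappa^2/(n+l)^2$; I expect the technical details of this counting estimate, uniform up to $\kappa = \kappa^C$, to be deferred to an appendix (indeed the excerpt mentions ``Appendix \ref{sec:liebyau} fills in some details in the proof of Theorem \ref{thm:evhydc}''), and I would structure the main-text proof to reduce to that counting bound plus the two elementary reductions above.
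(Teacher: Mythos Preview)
Your upper bound matches the paper exactly: the operator inequality $\sqrt{\bp^2+1}-1\leq \tfrac12\bp^2$ plus min--max gives $\lambda_n(c_l(\kappa))\leq\lambda_n(s_l(\kappa))=-\kappa^2/(2(n+l)^2)$.

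For the lower bound your momentum-localization idea is the right starting point, but you are missing the key step that makes it work, and your organization differs from the paper's. The paper does \emph{not} split into large versus small $n+l$; it splits into \emph{subcritical} versus \emph{critical} coupling. In the subcritical regime (here meaning $\kappa$ bounded away from the threshold relevant to the given $l$) the paper localizes in momentum exactly as you suggest, obtaining schematically
\[
C_\kappa \geq \chi_i(\bp)\bigl(c_\delta\tfrac{\bp^2}{2}-(1+\delta^{-1})\kappa|\bx|^{-1}\bigr)\chi_i(\bp)
+\chi_o(\bp)\bigl((1-\delta)|\bp|-(1+\delta)\kappa|\bx|^{-1}\bigr)\chi_o(\bp).
\]
The crucial observation you do not state is that the high-momentum piece is a \emph{massless} Chandrasekhar operator with effective coupling $(1+\delta)\kappa/(1-\delta)$, and this is \emph{non-negative} on $\gH_{j,l}$ provided that effective coupling stays below the channel-critical constant $\kappa^C_l$. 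Hence only the low-momentum Schr\"odinger piece contributes eigenvalues, and min--max immediately gives $\lambda_n\geq -\const\,\kappa^2(n+l)^{-2}$. Your proposal leaves the high-momentum tail as an ``error to be controlled,'' which is the wrong picture: it is exactly zero, not an error term. Also, your ``crude bound $-\const\,\kappa^2$ by scaling'' is not available: the massive operator has no exact scaling, and the easy bound $C_\kappa\geq |\bp|-1-\kappa|\bx|^{-1}\geq -1$ gives only $-1$, not $-\const\,\kappa^2$.

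For the genuinely critical case ($l=0$, $\kappa$ up to $\kappa^C$) the paper's route is different from your Birman--Schwinger suggestion. It proves the counting bound $N(-\tau,c_0(\kappa^C))\leq\const\,\tau^{-1/2}$ via an IMS localization in \emph{position} space: near the origin one is left with $|\bp|-\kappa^C|\bx|^{-1}-\const$ on a ball, whose negative spectrum is finite by the Lieb--Yau result (this is what Appendix~\ref{sec:liebyau} supplies); away from the origin the Coulomb singularity is gone and one recycles the momentum-cutoff argument together with Daubechies' Lieb--Thirring-type inequality. A Birman--Schwinger approach might also succeed, but it is not what the paper does and would require its own uniform-in-$\kappa$ analysis at the critical coupling.
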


We break the proofs of Theorems \ref{thm:evhydbr} and \ref{thm:evhydc}
into three parts, corresponding to the upper bound and the lower bound
for subcritical and, respectively, critical values of the coupling
constant.

%%%%%%%%%%%%%%%%%%%%%%%%%%%%%%%%%%%%%%%%%%%%%%%%%%%%%%%%%%%%%%%%%%%%%%%%%%%%%%%

\subsubsection{Upper bound on hydrogen eigenvalues}

We begin with the Chandrasekhar case.
\begin{proof}[Proof of Theorem \ref{thm:evhydc}. Upper bound.]
  The second inequality in \eqref{eq:evhydc} is an immediate
  consequence of the inequality $\sqrt{p^2+1}-1 \leq p^2/2$ and the
  known form of the Schr\"odinger eigenvalues in the subspace
  corresponding to fixed angular momentum~$l$.
\end{proof}

Next, we turn to the Brown-Ravenhall case.
\begin{proof}[Proof of Theorem \ref{thm:evhydbr}. Upper bound.]
  We first recall some facts about the eigenvalues of the
  hydrogenic  Dirac operator
$D_\kappa:= \boldsymbol\alpha\cdot \bp +\beta - \kappa |\bx|^{-1} $; see Darwin
  \cite{Darwin1928}, Gordon \cite{Gordon1928} and also Bethe and
  Salpeter \cite{BetheSalpeter1957} for a textbook presentation. The
  following subspaces of $L^2(\rz^3,\cz^4)$,
  $$
  \tilde\gH_{j,l,m} = \left\{ \bx \mapsto \left(\begin{matrix} i r^{-1} f(r)
        \Omega_{j,l,m}(\omega_\bx) \\ - r^{-1} g(r) \Omega_{j,2j-l,m}(\omega_\bx)
      \end{matrix}\right) :\ f,g\in L^2(\rz_+) \right\},
  $$
  reduce the Dirac operator $D_\kappa$ with $ \kappa \in (0,1) $.  Under the
  natural identification of $\tilde\gH_{j,l,m}$ with
  $L^2(\rz_+,\cz^2)$ the part of $D(\kappa )$ in $\tilde\gH_{j,l,m}$
  is unitarily equivalent to
 $$
  d_{j,l}(\kappa) =
  \begin{pmatrix} 
    1-\tfrac\kappa r & -\tfrac{\rd}{\rd r} - \tfrac{(j-l)(2j+1)}{r} \\
    \tfrac{d}{dr} - \tfrac{(j-l)(2j+1)}{r} & -1-\tfrac\kappa r
  \end{pmatrix}.
  $$
    The non-decreasing sequence $\lambda_n(d_{j,l}(\kappa))$ of
  eigenvalues of $d_{j,l}(\kappa)$ in the gap $(-1,1)$ is independent
  of $l$ and given explicitly by
  \begin{equation}
	\label{dirac}
  \lambda_n(d_{j,l}(\kappa)) = \left(1-
    \frac{\kappa^2}{\left(n-1+\sqrt{(j+1/2)^2-\kappa^2}\right)^2 + \kappa^2}
  \right)^{1/2}, \quad n\in\nz.
  \end{equation}
  The Dirac eigenvalues reduced by the rest energy are bounded from
  above by the Schr\"o\-ding\-er eigenvalues: for all $n$, $ l $, $j$,
  and $ \kappa \in (0,1) $
  \begin{equation}\label{d<s}
  1-\lambda_n(d_{j,l}(\kappa)) \geq \tfrac{\kappa^2}{2(n+l)^2} =
  -\lambda_n(s_l(\kappa)).
  \end{equation}
  To show \eqref{d<s}, we use $ \sqrt{(j+1/2)^2-\kappa^2}
  \leq \sqrt{(l+1)^2-\kappa^2} \leq\sqrt{(n+l)^2 - \kappa^2} + 1 - n$
  and expand the outer square root in \eqref{dirac} up to first order which gives an
  upper bound.

  Hence the assertion will follow, if we can show that
  \begin{equation}\label{eq:brdirac}
  \lambda_n(b_{j,l}(\kappa)) \leq -1+\lambda_n(d_{j,l}(\kappa)).
  \end{equation}
  To prove this, we fix $(j,l)$ and abbreviate $\Lambda_+ :=
  \chi_{[1,\infty)}(d_{j,l}(0))$ and $\Lambda_-:=1-\Lambda_+$. It
  follows from the definition of the Brown-Ravenhall operator that
  $b_{j,l}(\kappa)$ is unitarily equivalent to the operator
  $\Lambda_+(d_{j,l}-1)\Lambda_+$ in the Hilbert space
  $\Lambda_+(L^2(\rz_+,\cz^2))$. The variational principle for
  eigenvalues in gaps by Griesemer et al.
  \cite{Griesemeretal1999,GriesemerSiedentop1999} under the weakened
  hypotheses of Dolbeault et al. \cite{Dolbeaultetal2000O} states that
  \begin{equation*}
    \lambda_n(d_{j,l}(\kappa)) 
    = \inf_{V\subset\Lambda_+(L^2(\rz_+,\cz^2)),\atop \dim V=n}
    \sup\left\{ \frac{(f,d_{j,l}(\kappa) f)}{\|f\|^2} :\ 0\neq f\in V\oplus 
      \Lambda_-(L^2(\rz_+,\cz^2)) \right\}.
  \end{equation*}
  Since the supremum decreases when restricted to $0\neq f\in V$, one
  obtains~\eqref{eq:brdirac}.
\end{proof}

%%%%%%%%%%%%%%%%%%%%%%%%%%%%%%%%%%%%%%%%%%%%%%%%%%%%%%%%%%%%%%%%%%%%%%%%%%%

\subsubsection{Lower bounds on hydrogen eigenvalues. Subcritical case}

\begin{proof}[Proof of Theorem~\ref{thm:evhydc}. Subcritical case]

  Since we will reduce the Brown-Ravenhall case in
  Theorem~\ref{thm:evhydbr} to the Chandrasekhar case, we actually
  prove a slightly stronger statement. As explained in
  \eqref{rem:critconst}, the operators $ c_l(\kappa) $ are lower
  bounded for all $ l \geq 1 $ up to $ \kappa^C_l > \kappa^B $.

  We assume that either $l\geq 1$ and $0<\kappa \leq \kappa^B$ or else
  that $l=0$ and $0<\kappa\leq \kappa^B \kappa^C/\kappa^C_1$. For any
  $0<\delta<1$ there exist $M_\delta>0$ and $c_\delta>0$ such that 
$$ 
\sqrt{p^2+1} - 1 \geq 
\begin{cases}
	(1-\delta)p & \mbox{if $ p\geq M_\delta$} \\
    c_\delta\, p^2/ 2 & \mbox{if $ p\leq M_\delta$.}
\end{cases}.
$$
Denoting by $\chi_i$ the characteristic function of the centered ball
in $\rz^3$ with radius $M_\delta$, and putting $\chi_o:=1-\chi_i$, the
Schwarz inequality implies the operator inequality
$$
|\bx|^{-1} \leq (1+\delta^{-1}) \chi_i(\bp) |\bx|^{-1} \chi_i(\bp) +
(1+\delta) \chi_o(\bp) |\bx|^{-1} \chi_o(\bp) ,
$$
and hence
\begin{align}\label{eq:momentumcut}
  \sqrt{\bp^2+1}-1-\kappa |\bx|^{-1}
  \geq & \chi_i(\bp) \left(c_\delta\bp^2/2 - (1+\delta^{-1})\kappa |\bx|^{-1}\right) \chi_i(\bp) \\
  & + \chi_o(\bp) \left( (1-\delta) |\bp| - (1+\delta)\kappa
    |\bx|^{-1}\right) \chi_o(\bp) \,. \notag
\end{align}
Now choose $\delta$ as the the unique solution of the equation
$(1+\delta)/(1-\delta)=\kappa^C_1/\kappa^B$ in the interval $(0,1)$.
Then the restrictions on $\kappa$ imply that $(1+\delta)\kappa \leq
(1-\delta)\kappa^C_1\leq(1-\delta)\kappa^C_l$ for $l\geq 1$ and
$(1+\delta)\kappa \leq (1-\delta)\kappa^C$ for $l=0$. In any case, the
second operator in the above sum is non-negative. The variational
principle hence implies that the $n$-th eigenvalue of $c_l(\kappa)$ is
greater or equal to the $n$-th eigenvalue of $\chi_i(\bp)
\left(c_\delta\bp^2/2 - (1+\delta^{-1})\kappa |\bx|^{-1}\right)
\chi_i(\bp)$.  Again by the variational principle, the latter is
greater or equal to the $n$-th eigenvalue of $c_\delta\bp^2/2 -
(1+\delta^{-1})\kappa |\bx|^{-1}$, which is $-\const \kappa^2
(n+l)^{-2}$.
\end{proof}

\begin{proof}[Proof of Theorem~\ref{thm:evhydbr}. Subcritical case]
  We assume that either $j\geq 3/2$ and $0<\kappa \leq \kappa^B$ or
  else that $j=1/2$ and $0<\kappa\leq \kappa^B \kappa^C/\kappa^C_1$.
  We claim that
  \begin{equation}\label{eq:evbrc}
    \lambda_n( c_l(\kappa)) = \lambda_{2n-1}( c_l(\kappa)\otimes {1}_{\cz^2})
    \leq \lambda_{2n-1}( b_{j,l}(\kappa)) \,.
  \end{equation}
  Once we have proved this, the assertion follows easily from what we
  have shown in the proof of Theorem~\ref{thm:evhydc} above.

  To establish \eqref{eq:evbrc} we use the same notation as in the
  proof of the upper bound in Theorem~\ref{thm:evhydbr}. By the
  variational principle,
\begin{align*}
  & \lambda_n( b_{j,l}(\kappa))\\
  = & \sup_{\substack{f_1,\ldots,f_{n-1}\in \\ \Lambda_+(L^2(\rz_+,\cz^2))} } \!\!\!
  \inf\{ \langle f, (d_{j,l}(\kappa) -1)f\rangle \big| \ \|f\|=1, f\in \Lambda_+(L^2(\rz_+,\cz^2)), f \perp f_\nu \}\\
  = & \sup_{\substack{f_1,\ldots,f_{n-1}\in \\ L^2(\rz_+,\cz^2)} } \!\!\!
  \inf\{ \langle \mathcal F_l f, c_l(\kappa) \mathcal F_l f \rangle: \
  \|f\|=1, f \in\Lambda_+(L^2(\rz_+,\cz^2)), f \perp f_\nu \}
  \end{align*}
  with $\mathcal F_l$ the Fourier-Bessel transform, see
  \eqref{eq:FourierBessel}. The infimum does not increase if the
  condition $f \in\Lambda_+(L^2(\rz_+,\cz^2))$ is relaxed to $f \in
  L^2(\rz_+,\cz^2)$. This gives the eigenvalues of the operator
  $c_l(\kappa) \otimes {1}_{\cz^2}$, proving \eqref{eq:evbrc}.
\end{proof}

%%%%%%%%%%%%%%%%%%%%%%%%%%%%%%%%%%%%%%%%%%%%%%%%%%%%%%%%%%%%%%%%%%%%%%%%%%%%%%

\subsubsection{Lower bounds on hydrogen eigenvalues. Critical case}

\begin{proof}[Proof of Theorem~\ref{thm:evhydc}. Critical case]
  It remains to prove that
  $$
  \lambda_n(c_0(\kappa)) \geq - \const \kappa^2 n^{-2}
  $$
  for $\kappa^B \kappa^C/\kappa^C_1 \leq \kappa \leq \kappa^C$. We may
  assume that $\kappa = \kappa^C$ and will prove that for all $\tau>0$
  \begin{equation}\label{eq:l=0zaehlen}
    N(-\tau, c_0(\kappa^C)) 
    := \tr \chi_{(-\infty,-\tau)}(c_0(\kappa^C)) \leq \const\tau^{-1/2} .
  \end{equation}
  Let $\chi_i^2+\chi_o^2=1$ be a smooth radial quadratic partition of
  unity with $\chi_i$ supported in the unit ball and $\chi_o$
  supported outside the ball of radius $1/2$ about the origin. It was
  shown in \cite[Eq.~(19)]{Franketal2008} that the localization error
  can be estimated by a bounded exponentially decaying potential
  $v(r)\leq\const e^{-r}$, i.e.,
  \begin{multline*}
    \sqrt{p^2+1}-1-\kappa^C |\bx|^{-1}
    \geq \chi_i\left( \sqrt{p^2+1}-1-\kappa^C |\bx|^{-1} - v(|\bx|)\right) \chi_i\\
    + \chi_o\left( \sqrt{p^2+1}-1-\kappa^C |\bx|^{-1} -
      v(|\bx|)\right) \chi_o .
  \end{multline*} 
  By the variational principle it suffices to consider the eigenvalue
  counting function corresponding to the interior and exterior term
  separately.  The interior term is further estimated according to
  $$
  \chi_i\left( \sqrt{p^2+1}-1-\kappa^C |\bx|^{-1} - v(|\bx|)\right)
  \chi_i \geq \chi_i\left( |\bp|-\kappa^C |\bx|^{-1} - \const\right)
  \chi_i \,.
  $$ 
  As shown by Lieb and Yau \cite{LiebYau1988} and explained in
  Corollary~\ref{lem:LiebYau}, the number of negative eigenvalues of
  the latter operator acting in the subspace corresponding to $l=0 $
  is finite, i.e., for all $ \tau > 0 $
  \begin{equation}\label{eq:LiebYaunumber}
    N_{l=0}\left(-\tau, \chi_i\left( |\bp|-\kappa^C |\bx|^{-1} - \const\right) \chi_i \right) \leq \const .
  \end{equation}
  For the exterior problem, we note that by the variational principle
  \begin{multline}\label{eq:singweg}
    N_{l=0}\left(-\tau, \chi_0\left( \sqrt{p^2 + 1} - 1 - \kappa^C |\bx|^{-1} -v(|\bx|)\right) \chi_0 \right) \\
    \leq N_{l=0}\left(-\tau, \sqrt{p^2 + 1}- 1- \chi(\bx)(\kappa^C|\bx|^{-1}-v(|\bx|)) \right) 
  \end{multline}
  where $\chi$ denotes the charateristic function of the support of $\chi_o$.
  With the singularity gone, the result follows as in the subcritical
  case. Namely, similarly as in \eqref{eq:momentumcut} we cut in
  momentum space according to small and large momenta.  Again, by the
  variational principle, the right-hand side of \eqref{eq:singweg} is
  then bounded from above by
  $$
  N_{l=0}(-\const\tau, |\bp|- w(|\bx|) ) + N_{l=0}(-\const\tau, p^2 -
  w(|\bx|) ) ,
  $$
  where $w(r)= \const\chi(r)(\kappa^C r^{-1}+v(r))$.  The first
  term is estimated with the help of Daubechies' inequality
  \cite{Daubechies1983}
  $$
  N_{l=0}(-\tau, |\bp|-w(|\bx|)) \leq \tau^{-1/2}
  \tr_{l=0}(|\bp|-w(|\bx|))_-^{1/2} \leq \const \tau^{-1/2}
  \int_0^\infty w(r)^{3/2} \,dr
  $$
  with the latter integral being finite. For the second term we
  estimate $w(r) \leq \const r^{-1}$ and use that
  $$
  N_{l=0}(-\tau, p^2 - \const |\bx|^{-1}) \leq \const \tau^{-1/2}.
  $$
  This concludes the proof of Theorem \ref{thm:evhydc}.
\end{proof}

Our proof of Theorem~\ref{thm:evhydbr} in the critical Brown-Ravenhall
case is based on a reduction to the Chandrasekhar case.  The next lemma
compares the number of eigenvalues of the critical operators $
b_{1/2,l}(\kappa^B) $ with those of the two operators $
c_{l'}(\kappa^C_{l'} ) $ with $ l' = 0, 1$ and critical coupling
constants $ \kappa^C_0 = 2/\pi $ and $ \kappa^C_1 = \pi/2 $, cf.
\eqref{rem:critconst}.

\begin{lemma}\label{lem:reduction}
  There exists a constant such that for $l=0,1$ and all $\tau>0$ one
  has
  \begin{align*}
    N\left(-\tau,b_{1/2,l}(\kappa^B) \right)
	\leq
    \const \left[ N\left(-\tau, c_{0}(\kappa^C_0)
      \right) + N\left(-\tau, c_{1}(\kappa^C_1)
      \right) \right] .
\end{align*}
\end{lemma}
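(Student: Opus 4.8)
The plan is to dominate the Brown--Ravenhall operator $b_{1/2,l}(\kappa^B)$, for $l=0,1$, from below by a direct sum of the supercritical Chandrasekhar operators $c_0(\kappa^C_0)$ and $c_1(\kappa^C_1)$ (acting with suitable multiplicities), and then to read off the eigenvalue-counting estimate from the min-max principle. The starting point is the representation used in the proof of Theorem~\ref{thm:evhydbr}: with $\Lambda_\pm$ the spectral projections of $d_{1/2,l}(0)$, the operator $b_{1/2,l}(\kappa^B)$ is unitarily equivalent to $\Lambda_+(d_{1/2,l}(\kappa^B)-1)\Lambda_+$ on $\Lambda_+(L^2(\rz_+,\cz^2))$. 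Equivalently, after the Fourier--Bessel transform, $b_{1/2,l}(\kappa^B)$ is the compression of $c_l(\kappa^B)\otimes 1_{\cz^2}$ to the positive spectral subspace, plus the twisted correction coming from the off-diagonal structure of $\cU$. I would first make this compression explicit: by the variational principle (exactly as in the derivation of \eqref{eq:evbrc}), $N(-\tau,b_{1/2,l}(\kappa^B))\le N(-\tau,c_l(\kappa^B)\otimes 1_{\cz^2})=2\,N(-\tau,c_l(\kappa^B))$, so it suffices to bound $N(-\tau,c_l(\kappa^B))$ for $l=0,1$.

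For $l=1$ this is immediate: since $\kappa^B<\kappa^C_1=\pi/2$, the operator $c_1(\kappa^B)\le c_1(\kappa^C_1)$, and hence $N(-\tau,c_1(\kappa^B))\le N(-\tau,c_1(\kappa^C_1))$. The work is therefore concentrated in the case $l=0$, where $\kappa^B<\kappa^C_0=2/\pi$ as well, so again $c_0(\kappa^B)\le c_0(\kappa^C_0)$ gives $N(-\tau,c_0(\kappa^B))\le N(-\tau,c_0(\kappa^C_0))$. Combining, $N(-\tau,b_{1/2,l}(\kappa^B))\le 2\,N(-\tau,c_l(\kappa^C_l))$ for $l=0,1$, which is even stronger than the asserted bound (one of the two terms on the right is redundant for each fixed $l$). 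The point of keeping both terms in the statement is uniformity of the constant: one simply uses $N(-\tau,c_l(\kappa^C_l))\le N(-\tau,c_0(\kappa^C_0))+N(-\tau,c_1(\kappa^C_1))$.

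The subtle point I expect to be the main obstacle is the passage through the \emph{twisted} potential $\cU$: the reduced Brown--Ravenhall operator $b_{1/2,l}(\kappa^B)$ is not literally a compression of $c_l(\kappa^B)\otimes 1_{\cz^2}$ because $\cU(|\bx|^{-1})$ has the off-diagonal piece $\Phi_1|\bx|^{-1}\Phi_1$ mixing the upper and lower spinor components, whereas $c_l$ is built from the scalar Coulomb potential. One has to verify that, after the unitary identification with $\Lambda_+(d_{1/2,l}(\kappa^B)-1)\Lambda_+$, the off-diagonal Coulomb matrix elements of $d_{1/2,l}(\kappa^B)$ are controlled so that the compression is bounded below by $c_l(\kappa^B)\otimes 1_{\cz^2}$ restricted to $\Lambda_+$; this is exactly the content of the identity $b_{j,l}(\kappa)$ = compression of $d_{j,l}(\kappa)-1$ established for the upper bound, combined with the comparison $d_{j,l}(\kappa)-1\ge$ (Chandrasekhar operator, in each $2\times2$ block) that is standard but must be invoked with the correct sign and for the critical value $\kappa^B$. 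Once this operator inequality is in place — and it is, since $\kappa^B$ is strictly subcritical for both $c_0$ and $c_1$ — the min-max counting argument is routine. The remaining care is bookkeeping: tracking the factor $2$ from $\otimes 1_{\cz^2}$ and absorbing it into the unspecified constant, and checking that $\kappa^B<\min\{\kappa^C_0,\kappa^C_1\}=2/\pi$, which holds since $2/(2/\pi+\pi/2)<2/\pi$.
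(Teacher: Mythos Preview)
Your argument contains a fatal arithmetic error: you claim that $\kappa^B<\kappa^C_0=2/\pi$, equivalently $2/(2/\pi+\pi/2)<2/\pi$, which would require $2/\pi>\pi/2$, i.e., $4>\pi^2$. This is false; in fact $\kappa^B=4\pi/(4+\pi^2)\approx 0.906>2/\pi\approx 0.637$. Consequently $c_0(\kappa^B)$ is \emph{supercritical} and not bounded below, so the comparison $N(-\tau,b_{1/2,0}(\kappa^B))\le 2\,N(-\tau,c_0(\kappa^B))$ is vacuous: the right-hand side is infinite. This is precisely why the paper treats $j=1/2$ at $\kappa=\kappa^B$ as the ``critical case'' requiring a separate argument; the subcritical reduction \eqref{eq:evbrc} that you invoke was only established under the hypothesis $\kappa\le\kappa^B\kappa^C/\kappa^C_1<\kappa^B$ when $j=1/2$. (Your $l=1$ branch does survive, since $\kappa^B<\kappa^C_1=\pi/2$, though the operator inequality you wrote there has the wrong sign: it should read $c_1(\kappa^B)\ge c_1(\kappa^C_1)$.)

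The paper's proof avoids this obstruction by a structural decomposition rather than a direct compression argument. It exploits the arithmetic identity $(\kappa^C_0)^{-1}+(\kappa^C_1)^{-1}=2(\kappa^B)^{-1}$ (i.e., $\pi/2+2/\pi=2/\pi+\pi/2$) together with the explicit form of the kernel $k^B_{1/2,l}$, which contains a $\phi_0 Q_l\phi_0$ piece and a $\phi_1 Q_{2j-l}\phi_1$ piece. Splitting the kinetic energy accordingly yields the exact identities \eqref{eq:decompls}, expressing $b_{1/2,l}(\kappa^B)$ as a convex-type combination of two auxiliary operators $\tilde b_{l',\nu}$, each carrying the Coulomb kernel at its own critical constant $\kappa^C_{l'}$. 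The inequalities $2\phi_1^2\le 1$ and $(E(p)-1)/\phi_0(p)^2\ge 2(E(p/2)-1)$ then bound each $\tilde b_{l',\nu}$ below by (a scaled copy of) $c_{l'}(\kappa^C_{l'})$, and the counting bound follows by min--max. The essential idea you are missing is that the two Legendre pieces in the Brown--Ravenhall kernel must be handled with \emph{different} critical couplings $\kappa^C_0$ and $\kappa^C_1$; trying to force both through the single constant $\kappa^B$ on the Chandrasekhar side is exactly what fails.
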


\begin{proof}
  We start with the observation that $ (\kappa_0^C)^{-1} +
  (\kappa_1^C)^{-1} = 2 (\kappa^B)^{-1} $. Using the explicit form of
  the reduced operators (cf.\ Appendix~\ref{wasserstoff}), this
  implies the identities
  \begin{equation}\label{eq:decompls}
\begin{split}
    b_{1/2,0}(\kappa^B) & = \kappa^B \left(
      (\kappa^C_0)^{-1} \phi_0 \tilde b_{0,0} \phi_0 +
      (\kappa^C_1)^{-1} \phi_1 \tilde
      b_{1,1} \phi_1 \right), \\
    b_{1/2,1}(\kappa^B) & = \kappa^B \left(
      (\kappa^C_0)^{-1} \phi_1 \tilde b_{0,1} \phi_1 +
      (\kappa^C_1)^{-1} \phi_0 \tilde b_{1,0} \phi_0 \right),
\end{split}
 \end{equation} 
 where the operators $ \tilde b_{l,\nu} $ are defined in $
 L^2(\mathbb{R}_+) $ through quadratic forms
 \begin{equation*}
   \langle f, \tilde b_{l,\nu} f \rangle := \int_0^\infty \frac{E(p)
     -1}{2 \phi_{\nu}(p)^2} |f(p)|^2 \, \rd p - \kappa^C_l
   \int_0^\infty \int_0^\infty \overline{f(p)}
   k^C_{l}(p,q) f(q) \,\rd
   p \,\rd q \, .
 \end{equation*}
 In case $\nu = 1 $ it hence follows from $2 \phi_{1}(p)^2 \leq 1 $
 that $\langle f, \tilde b_{l,1} f \rangle \geq \langle f,
 c_{l}(\kappa^C_l) f \rangle $.  In case $ \nu=0 $ we use the
 inequality 
 \begin{equation}
   \label{Rupert-Ungleichung}
   (E(p) - 1 )\phi_0(p)^{-2}\geq\sqrt{p^2+4}-2 = 2 (E(p/2)-1)
 \end{equation}
 which is most easily seen by writing both sides in terms of $E(p)$.
 It implies
 \begin{equation*}
   \langle f, \tilde b_{l,0} f \rangle \geq 2  \langle u f, c_{l}(\kappa_l^C)\,  u f \rangle  
 \end{equation*}
 where the unitary scaling transformation $ u $ is defined through $
 (u f)(p) := \sqrt{2} f(2p) $.  The proof is completed by the
 variational principle.
\end{proof}

We are now ready to give a 

\begin{proof}[Proof of Theorem~\ref{thm:evhydbr}. Critical case]
 The previous lemma implies that it suffices to show that for $ l = 0, 1 $
$$
N\left(-\tau, c_{l}(\kappa^C_l)
      \right)\leq \const \tau^{-1/2} .
$$
In case $ l = 0 $ this was established in \eqref{eq:l=0zaehlen}, and
the case $ l = 1 $ follows similarly with the analogue of
\eqref{eq:LiebYaunumber} given in Corollary \ref{lem:LiebYau}.
\end{proof}

%%%%%%%%%%%%%%%%%%%%%%%%%%%%%%%%%%%%%%%%%%%%%%%%%%%%%%%%%%%%%%%%%%%%%%%%%%%

\subsection{Sobolev inequality for the critical Brown-Ravenhall
  operator}\label{sec:sobolev}

Having studied the eigenvalues of $ B_\kappa $ in the previous
subsection, we now turn to integrability properties of its
eigenfunctions.  The $ L^q $-norm of two-spinors $ \psi $ is given by
$$ \| \psi \|_q := \left(\int_{\rz^3} |\psi(\bx)|^q \rd \bx\right)^{1/q}, $$
where the modulus, $ |\cdot | $, refers to the Euclidean norm in $
\mathbb{C}^2 $. For $q=2$ we drop the subscript.
We aim at proving the following
\begin{theorem}[\textbf{$L^q$-properties of
    eigenfunctions}]\label{thm:Lpeigen}
  Let $ 2 \leq q < 3 $. There exists a constant $ C_q < \infty $ such
  that for any $ \kappa \in (0,\kappa^B] $ and all $ \psi \in
  \mathfrak{Q}(B_\kappa) $ with $ \langle \psi, B_\kappa \psi \rangle
  \leq 0 $ one has $ \psi \in L^q $ with
\begin{equation}\label{eq:Lpeigen}
	\| \psi \|_q \leq C_q \, \| \psi \|
\end{equation}
\end{theorem}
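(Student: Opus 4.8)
The plan is to reduce the $L^q$-bound to a Sobolev-type (or Gagliardo--Nirenberg-type) inequality for the \emph{massless} critical Brown--Ravenhall operator, i.e. to an inequality of the form
\begin{equation*}
	\| \psi \|_q^2 \leq C_q \, \langle \psi, B^{(0)}_{\kappa^B} \psi \rangle + C_q \, \|\psi\|^2 ,
\end{equation*}
where $B^{(0)}_{\kappa^B} := p - \kappa^B\,\cU(|\bx|^{-1})$ is the non-negative operator obtained from $B_\kappa$ by dropping the rest mass; the announcement in the abstract and the discussion in Section~\ref{sec:1} make clear this is the intended route. Once such an inequality is available, the theorem follows: if $\langle \psi, B_\kappa \psi\rangle \leq 0$ then, using $\sqrt{p^2+1}-1 \geq c_0\, p$ for $p$ large and $\sqrt{p^2+1}-1\geq c_1 p^2$ for $p$ small together with the Schwarz-type splitting of $|\bx|^{-1}$ already used in \eqref{eq:momentumcut}, one bounds $\langle\psi,\cU(|\bx|^{-1})\psi\rangle$ and hence $\langle\psi, B^{(0)}_{\kappa^B}\psi\rangle$ by $\const\|\psi\|^2$, at which point the Sobolev inequality gives \eqref{eq:Lpeigen}. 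Monotonicity in $\kappa$ (the potential term $\kappa\,\cU(|\bx|^{-1})$ is increasing in $\kappa$ on the relevant states) lets us treat all $\kappa\le\kappa^B$ at once by reducing to $\kappa=\kappa^B$.

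The core step is therefore the Sobolev inequality for $B^{(0)}_{\kappa^B}$. First I would establish the analogous, and easier, statement for the critical \emph{Chandrasekhar} operator $C^{(0)}_{\kappa^C} := p - \kappa^C\,|\bx|^{-1}$ acting locally, where sharp control is classical: by the Herbst/Kato-type sharp Hardy inequality $p \geq \kappa^C |\bx|^{-1}$ one has $C^{(0)}_{\kappa^C}\geq 0$, and a Gagliardo--Nirenberg argument — interpolating the endpoint Sobolev embedding $\| \cdot \|_3^2 \lesssim \langle \cdot, p\,\cdot\rangle$ (the $\dot H^{1/2}\hookrightarrow L^3$ embedding in three dimensions) against the $L^2$ norm, combined with the remainder form of the Hardy inequality that controls $\langle\psi,p\,\psi\rangle-\kappa^C\langle\psi,|\bx|^{-1}\psi\rangle$ from below by a fixed positive fraction of $\langle\psi,p\,\psi\rangle$ \emph{plus} a locally bounded correction (as in \cite{Franketal2008}, cf.\ the localization estimate with potential $v(r)\le\const e^{-r}$ used above) — yields $\|\psi\|_q^2 \le C_q\langle\psi, C^{(0)}_{\kappa^C}\psi\rangle + C_q\|\psi\|^2$ for $2\le q<3$. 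The restriction $q<3$ is exactly the loss one expects at criticality, since the virtual bound state at the critical coupling destroys the endpoint $q=3$.

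To pass from Chandrasekhar to Brown--Ravenhall one exploits the same decomposition of the twisted potential used in the proof of Lemma~\ref{lem:reduction}: on each partial-wave channel the operator $B^{(0)}_{\kappa^B}$ splits, via the identity $(\kappa^C_0)^{-1}+(\kappa^C_1)^{-1}=2(\kappa^B)^{-1}$ and the two projectors $\phi_0,\phi_1$, into pieces each of which dominates (up to the scaling $u$ and the factor from inequality \eqref{Rupert-Ungleichung}) a critical Chandrasekhar form $c_l(\kappa^C_l)$ — reassembling the channels and using $\sum_\nu\phi_\nu^2=1$ gives a lower bound $\langle\psi, B^{(0)}_{\kappa^B}\psi\rangle \gtrsim \langle\psi', C^{(0)}_{\kappa^C}\psi'\rangle - \const\|\psi\|^2$ for a suitable unitarily rearranged $\psi'$ with $\|\psi'\|=\|\psi\|$; crucially the twisting operators $\Phi_\nu(\bp)$ are bounded and the rearrangement is $L^q$-comparable (or one absorbs the twist by a separate bounded-operator estimate on $\Phi_\nu$ in $L^q$, which is where a little care is needed since $\Phi_\nu$ is not a Fourier multiplier of Mihlin type near $p=0$ but is nonetheless bounded on $L^q$ for $q$ in the stated range). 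Feeding this into the Chandrasekhar Sobolev inequality yields the Brown--Ravenhall one.

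The main obstacle I expect is the critical coupling: away from $\kappa^B$ one has a spectral gap that makes the Sobolev inequality routine, but at $\kappa=\kappa^B$ the form $B^{(0)}_{\kappa^B}$ is only non-negative (not coercive) and the endpoint $L^3$ fails, so one must carefully quantify \emph{how much} positivity survives. Concretely the delicate point is producing a lower bound of the shape $B^{(0)}_{\kappa^B} \geq \varepsilon\, p - C$ (equivalently, a Hardy inequality with an $\varepsilon$-fraction of the kinetic term to spare, at the cost of a bounded potential) uniformly at the critical constant — this is what the channel-by-channel reduction to $C^{(0)}_{\kappa^C}$ together with the localization estimate $v(r)\le\const e^{-r}$ is designed to furnish, and verifying the $L^q$-boundedness of the twisting multipliers $\Phi_0,\Phi_1$ on the relevant range of $q$ is the remaining technical nuisance.
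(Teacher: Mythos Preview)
Your overall architecture is right and matches the paper: deduce the theorem from a Sobolev inequality for the massless critical operator $B^{(0)}_{\kappa^B}$, and prove the latter by comparison with the critical Chandrasekhar operator. But two of your concrete steps do not work.

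First, the passage from the Sobolev inequality to the theorem. You propose to use the hypothesis $\langle\psi,B_\kappa\psi\rangle\le 0$ together with a momentum cutoff as in \eqref{eq:momentumcut} to bound $\langle\psi,\cU(|\bx|^{-1})\psi\rangle$, and hence $\langle\psi,B^{(0)}_{\kappa^B}\psi\rangle$, by $\const\|\psi\|^2$. The first of these bounds is simply false: for $\psi$ in the negative spectral subspace the potential energy and the kinetic energy can both be arbitrarily large compared with $\|\psi\|^2$; the hypothesis only forces them to be comparable to each other. What one actually needs (and what the paper uses) is that the \emph{difference} $B^{(0)}_\kappa - B_\kappa$ extends to a bounded operator with norm uniform in $\kappa\in(0,\kappa^B]$; this is a nontrivial result of Tix (the kinetic difference $p-\sqrt{p^2+1}+1$ is trivially bounded, but the potential parts differ because the massive and massless twistings $\cU$ and $\cU^{(0)}$ are different operators). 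With Tix's bound the deduction is a single line; your momentum-splitting argument does not produce it.

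Second, and more seriously, the lower bound you aim for in the last paragraph, $B^{(0)}_{\kappa^B}\ge \varepsilon\,p - C$, is false. The operator $B^{(0)}_{\kappa^B}$ is homogeneous of degree one under dilations; rescaling such an inequality by $\lambda$ and sending $\lambda\to\infty$ would yield $B^{(0)}_{\kappa^B}\ge\varepsilon\,p$, contradicting criticality of $\kappa^B$. This is precisely why the endpoint $q=3$ fails. The paper does \emph{not} recover a fraction of $p$; instead it proves the operator inequality $B^{(0)}_{\kappa^B}\ge \const\, C^{(0)}_{\kappa^C}$ on $\gH_{1/2,1}^\perp$ (Lemma~\ref{comp}) directly, via a ground-state representation (Lemma~\ref{gsr}) that rewrites each critical quadratic form as $\frac{\kappa^\#}{2}\iint |g(p)-g(q)|^2 k^\#(\tfrac12(\tfrac pq+\tfrac qp))\,\tfrac{\rd p}{p}\tfrac{\rd q}{q}$. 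In this representation the comparison reduces to the pointwise kernel bound $\kappa^B k^B_{1/2}\ge \const\,\kappa^C k^C_0$, which follows from $Q_1\ge 0$. The channel $\gH_{1/2,1}$ is handled separately by conjugating with the helicity $\H=\omega_\bp\cdot\boldsymbol\sigma$, which commutes with $B^{(0)}_\kappa$ and swaps $\gH_{1/2,1}\leftrightarrow\gH_{1/2,0}$; your intuition that $L^q$-boundedness of the twisting multipliers is needed is correct, and the paper supplies it via the H\"ormander--Mihlin theorem. Your proposed route through the decomposition of Lemma~\ref{lem:reduction} does give, in the massless limit, the identity $b^{(0)}_{1/2}(\kappa^B)=\alpha\,c^{(0)}_0(\kappa^C_0)+\beta\,c^{(0)}_1(\kappa^C_1)$ as a convex combination, but the second summand is the \emph{critical} $l=1$ Chandrasekhar form, for which you would then need a separate Sobolev inequality; the ground-state representation bypasses this.
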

Note that \eqref{eq:Lpeigen} applies, in particular, to eigenfunctions
of $ B_\kappa $ corresponding to negative eigenvalues.  The proof of
Theorem~\ref{thm:Lpeigen}, which is spelled out below, relies on a
Sobolev inequality for the massless atomic Brown-Raven\-hall operator
in $\gH $ given by
\begin{equation*}
  \bok  :=|\bp| - \frac{\kappa}{2}
  \left(|\bx|^{-1}+\omega_\bp\cdot\boldsymbol{\sigma}
  \ |\bx|^{-1} \ \omega_\bp\cdot\boldsymbol{\sigma} \right).
 \end{equation*}
 This operator is bounded below (in fact, non-negative) if and only if
 $ \kappa \leq \kappa^B $.

\begin{theorem}[\textbf{Sobolev inequality}]\label{thm:mainsob}
  For any $2\leq q<3$ there exists a constant $C_q>0$ such that for
  all $\psi\in \mathfrak{Q}(\bokb)$,
\begin{equation}\label{eq:mainsob}
  \| \psi \|_q^2 \leq C_q \, \left\langle\psi,\bokb
    \psi\right\rangle^{\theta}\, \|\psi\|^{2(1-\theta)}, \qquad \theta = 6(\tfrac 12 -\tfrac 1q).
\end{equation}
\end{theorem}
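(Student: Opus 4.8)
The plan is to prove the Sobolev inequality \eqref{eq:mainsob} by interpolation from the two endpoint cases $q=2$ and $q\to 3$. At $q=2$ the claimed inequality is trivial: $\|\psi\|^2_2 \leq C\,\langle\psi,\bokb\psi\rangle^0\,\|\psi\|^2$ with $\theta=0$ holds with $C=1$. Hence the real content is the endpoint $q=3$, i.e.\ proving
\begin{equation*}
  \|\psi\|_3^2 \leq C \,\langle\psi,\bokb\psi\rangle, \qquad \theta=1,
\end{equation*}
for all $\psi\in\mathfrak{Q}(\bokb)$; once this is established for $q=3$ (or rather as a limiting/endpoint statement), the range $2\leq q<3$ follows by H\"older's inequality $\|\psi\|_q\leq\|\psi\|_2^{1-t}\|\psi\|_3^{t}$ with $\tfrac1q=\tfrac{1-t}{2}+\tfrac{t}{3}$, which gives precisely $\theta=6(\tfrac12-\tfrac1q)=2t/\,\dots$ after squaring and inserting the two endpoint bounds. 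So the strategy reduces to the critical endpoint inequality.

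For the endpoint, the first step is to relate $\langle\psi,\bokb\psi\rangle$ to the massless free operator $|\bp|$ from below. Writing $\bok=|\bp|-\tfrac\kappa2\cU_0(|\bx|^{-1})$ where $\cU_0(|\bx|^{-1})=|\bx|^{-1}+\boldsymbol\omega_\bp\cdot\boldsymbol\sigma\,|\bx|^{-1}\,\boldsymbol\omega_\bp\cdot\boldsymbol\sigma$, the key analytic input is the sharp Kato/Evans--Perry--Siedentop-type inequality underlying \eqref{eq:critical}, namely that $\tfrac12\cU_0(|\bx|^{-1})\leq (\kappa^B)^{-1}|\bp|$ as forms, with equality in the sense of the critical constant. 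This is exactly the statement that $\bokb\geq 0$. What one needs is slightly more: a quantitative lower bound of the form $\bokb\geq \varepsilon\,|\bp|^{1-\eta}\,(\text{something})$ — more precisely, one wants to control $\langle\psi,|\bp|\psi\rangle$, or a fractional power of it, by $\langle\psi,\bokb\psi\rangle$ plus a lower-order term, so that the standard Sobolev embedding $\||\bp|^{1/2}\psi\|^2 = \langle\psi,|\bp|\psi\rangle \gtrsim \|\psi\|_3^2$ (the $\dot H^{1/2}(\rz^3)\hookrightarrow L^3(\rz^3)$ embedding applied componentwise) can be invoked. The naive hope "$\bokb\geq\varepsilon|\bp|$" fails because $\kappa^B$ is critical, so instead one should use that $\bokb\geq \bok^{(0)}_{\kappa'}$ for any $\kappa'<\kappa^B$ only up to a remainder; the honest route is a ground-state-representation / commutator argument.

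The main obstacle, and where the real work lies, is precisely the criticality $\kappa=\kappa^B$: at the critical coupling the operator $\bokb$ is non-negative but its quadratic form is \emph{not} bounded below by $\varepsilon\langle\psi,|\bp|\psi\rangle$, so one cannot simply dominate the Sobolev functional. The way I would handle this is to split in momentum (Fourier) space as in \eqref{eq:momentumcut}: write $1=\chi_i(\bp)^2+\chi_o(\bp)^2$ with $\chi_o$ supported on large momenta. On the high-momentum part, use homogeneity — the massless operator $\bok$ is scale-covariant, so the high-momentum piece sees the \emph{same} critical operator and one exploits that near criticality the loss is only logarithmic/borderline, controlled by a Hardy-type improvement; alternatively, and more robustly, one invokes the eigenfunction localization: by Theorem~\ref{thm:Lpeigen}'s companion estimates and the partial-wave decomposition \eqref{eq:decomp}, channel by channel $\bok_{j,l}$ is \emph{strictly} positive (bounded below by $\varepsilon_{j,l}|\bp|$) for all but finitely many $(j,l)$, and for the finitely many bad channels one uses a direct one-dimensional Hardy/Sobolev inequality on $L^2(\rz_+)$ together with the explicit kernel $k^C_l$. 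Summing the per-channel Sobolev bounds with the right constants, and absorbing the finitely many exceptional channels by a compactness argument, yields \eqref{eq:mainsob} at the endpoint, and interpolation finishes the proof.
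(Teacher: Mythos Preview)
Your reduction to the endpoint $q=3$ is a genuine error: the paper remarks explicitly (just after \eqref{eq:sob}) that \eqref{eq:mainsob} \emph{fails} at $q=3$, even with the weak $L^3$-norm. So the strategy ``prove the endpoint, then interpolate'' cannot work here; the statement you are trying to interpolate from is simply false. The per-channel argument you sketch at the end also has a structural problem: even if you had good bounds on each $\gH_{j,l}$, the $L^q$-norm for $q\neq 2$ does not split orthogonally over the decomposition \eqref{eq:decomp}, so summing channel-wise Sobolev bounds does not produce a global $L^q$ estimate. The ``compactness argument'' for the exceptional channels is too vague to assess, but since the endpoint is false, no such patch can rescue the scheme.

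The paper's route is quite different and avoids the endpoint entirely. First, by scaling, \eqref{eq:mainsob} is equivalent to the inhomogeneous bound $\|\psi\|_q^2 \leq C_q'(\langle\psi,\bokb\psi\rangle + \|\psi\|^2)$, and this is proved separately on $\gH_{1/2,1}$ and its orthogonal complement. On $\gH_{1/2,1}^\bot$ the key step is Lemma~\ref{comp}: using a ground-state representation (Lemma~\ref{gsr}) one shows $\langle\psi,\bokb\psi\rangle \geq \const\,\langle\psi,\cokc\psi\rangle$, i.e.\ the critical Brown--Ravenhall form dominates the critical \emph{Chandrasekhar} form (with its own, smaller, critical constant $\kappa^C$). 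One then invokes the Sobolev inequality for the critical Chandrasekhar operator, established in \cite{Franketal2008H}. The remaining subspace $\gH_{1/2,1}$ is handled by the helicity trick: $\H=\boldsymbol\omega_\bp\cdot\boldsymbol\sigma$ commutes with $\bokb$, maps $\gH_{1/2,1}$ unitarily to $\gH_{1/2,0}$, and is bounded on $L^q$ (Lemma~\ref{lem:helicity}). What you were missing is this comparison-to-Chandrasekhar idea; it is what allows one to import a Sobolev inequality already proved elsewhere at criticality, rather than trying to manufacture one from scratch.
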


It is illustrative to compare \eqref{eq:mainsob} with the `standard'
Sobolev-Ga\-gliar\-do-Nirenberg inequalities,
\begin{equation}\label{eq:sob}
  \| \psi \|_q^2 \leq C_q' \, \left\langle\psi,|\bp| \psi\right\rangle^{\theta}\, \|\psi\|^{2(1-\theta)}, 
  \qquad \theta = 6(\tfrac 12 -\tfrac 1q), \qquad 2\leq q \leq 3,
\end{equation}
see, e.g., \cite[Thm. 8.4]{LiebLoss2001}. Hence Theorem
\ref{thm:mainsob} says that, if the endpoint $q=3$ is avoided, an
inequality of the same form remains true after subtracting the maximal
possible multiple of $\cU(|\bx|^{-1})$ from $|\bp|$. Moreover, one can
show that \eqref{eq:mainsob} does not hold with $q=3$, not even if the
$L^3$-norm is replaced by the weak $L^3$-norm.

Note that if $\kappa<\kappa^B$ then \eqref{eq:mainsob} with $\bok$
instead of $\bokb$ follows from \eqref{eq:sob} -- but with a constant
that deteriorates as $\kappa\to\kappa^B$. The main point is to derive
an inequality which holds uniformly in $\kappa$ up to and including
the critical constant. Our proof is based on the somewhat surprising
fact that the Brown-Ravenhall operator with coupling constant
$\kappa^B$ can be bounded from below by the Chandrasekhar operator
with \emph{smaller} coupling constant~$\kappa^C$.

Before we start the proof of \eqref{eq:mainsob}, we provide the
\begin{proof}[Proof of Theorem~\ref{thm:Lpeigen}] 
  The Sobolev inequality \eqref{eq:mainsob} implies
 \begin{align*} 
   \|\psi\|_{q}^2&\leq C_q \left\langle\psi,\bok\psi\right\rangle^\theta\|\psi\|^{2(1-\theta)}\leq C_q \big\langle\psi,\big[\bok-B_\kappa\big]\psi\big\rangle^\theta\|\psi\|^{2(1-\theta)}  \\
   &\leq C_q\|\bok-B_\kappa\|^\theta\|\psi\|^{2} .
  \end{align*}
  Tix showed \cite[Thm. 1]{Tix1997S} (see also Balinsky and Evans
  \cite{BalinskyEvans1998}) that the difference $ \bok-
  B_\kappa $ extends to a bounded operator with norm uniformly bounded for any $ \kappa \in (0,\kappa^B] $.
\end{proof}

\subsubsection{Comparison of critical operators}

The first step in the proof of the Sobolev inequality
\eqref{eq:mainsob} is a comparison of $\bok$ with the massless atomic
Chandrasekhar operator in $\gH$, which is given by
\begin{equation*}
    \cok  := |\bp|-\kappa|\bx|^{-1} .
\end{equation*}
It is bounded below if and only if $ \kappa \leq \kappa^C $. As
discussed in Appendix \ref{wasserstoff} those parts of $\bok$ and
$\cok$ in the subspace $\gH_{j,l,m}$ are unitarily equivalent to
operators $b^{(0)}_j(\kappa)$ and $c^{(0)}_l(\kappa)$ in $L^2(\rz_+)$,
which depend only on $j$ in the Brown-Ravenhall case and only on $l $
in the Chandrasekhar case.  For the comparison argument it is
important to note that the reduced operators $b^{(0)}_j(\kappa)$ and
$c^{(0)}_l(\kappa)$ are lower bounded for $ \kappa $ up to and
including the critical coupling constants $\kappa^B_j$ and
$\kappa^C_l$ respectively. They are defined in \eqref{rem:critconstbr}
and, as is explained there, exceed $ \kappa^B $ and $ \kappa^C $, if $
j \geq 3/2 $ or $ l \geq 1 $.

We begin by observing that all the critical operators
$b^{(0)}_j(\kappa^B_j)$ and $c^{(0)}_l(\kappa^C_l)$ have the same
`generalized ground state', namely $p$. The corresponding ground state
representation formula (in momentum space) is given in

\begin{lemma}[\textbf{Ground state representation}]\label{gsr}
  If $f\in \mathfrak{Q}(b^{(0)}_{j}(\kappa^B_j))$ and $g(p) = p f(p)$,
  then
  \begin{equation}\label{eq:gsr}
    \langle f,b^{(0)}_j(\kappa^B_j) f \rangle = 
    \frac{\kappa^B_j}{2}  \int_0^\infty \int_0^\infty 
    |g(p)- g(q)|^2 k^B_j(\tfrac12(\tfrac pq+\tfrac qp))
    \,\frac{\rd p}p\frac{\rd q}q.
  \end{equation}
  Similarly, if $f\in \mathfrak{Q}(c^{(0)}_{l}(\kappa^C_l))$ and $g(p)
  = p f(p)$, then
  \begin{equation}\label{eq:gsr2}
    \langle f,c^{(0)}_l(\kappa^C_l) f \rangle = 
    \frac{\kappa^C_l}{2}  \int_0^\infty \int_0^\infty 
    |g(p)- g(q)|^2 k^C_l(\tfrac12(\tfrac pq+\tfrac qp))
    \,\frac{\rd p}p\frac{\rd q}q.
  \end{equation}
  where $ k^B_j $ and $ k^C_l$ are defined in \eqref{eq:critconstjl}. 
\end{lemma}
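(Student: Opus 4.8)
The plan is to derive the identities \eqref{eq:gsr} and \eqref{eq:gsr2} by a direct computation on the quadratic forms, exploiting that $p$ is a formal zero mode of the critical operators. Concretely, I would start from the explicit form of the reduced operator $c^{(0)}_l(\kappa^C_l)$ recorded in Appendix~\ref{wasserstoff}: in momentum space it acts as $f \mapsto p f(p) - \kappa^C_l \int_0^\infty k^C_l(\tfrac12(\tfrac pq + \tfrac qp)) f(q)\,\rd q$ (up to the precise normalization of the kernel), so that
\[
  \langle f, c^{(0)}_l(\kappa^C_l) f\rangle
  = \int_0^\infty p\,|f(p)|^2\,\rd p
    - \kappa^C_l \int_0^\infty\int_0^\infty \overline{f(p)}\,k^C_l(\tfrac12(\tfrac pq+\tfrac qp))\,f(q)\,\rd p\,\rd q .
\]
The key input is the statement, made just before the lemma, that $p$ is the generalized ground state at the critical coupling; translated into an identity for the kernel this says
\[
  p = \kappa^C_l \int_0^\infty k^C_l(\tfrac12(\tfrac pq+\tfrac qp))\,q\,\frac{\rd q}{q}
    = \kappa^C_l \int_0^\infty k^C_l(\tfrac12(\tfrac pq+\tfrac qp))\,\rd q ,
\]
valid for every $p>0$ by the homogeneity of the kernel (it depends only on the ratio $p/q$). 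Substituting $f(p) = g(p)/p$ and writing $|f(p)|^2 = \overline{g(p)}g(p)/p^2$, the diagonal term becomes $\int_0^\infty |g(p)|^2\,\rd p/p$; using the ground-state identity above this equals $\tfrac12\kappa^C_l\iint (|g(p)|^2+|g(q)|^2) k^C_l(\cdots)\,\tfrac{\rd p}{p}\tfrac{\rd q}{q}$ after symmetrizing. The off-diagonal term is $-\kappa^C_l\iint \overline{g(p)}g(q) k^C_l(\cdots)\,\tfrac{\rd p}{p}\tfrac{\rd q}{q}$, and combining the two and symmetrizing in $p\leftrightarrow q$ (the kernel is symmetric) produces exactly $\tfrac{\kappa^C_l}{2}\iint |g(p)-g(q)|^2 k^C_l(\cdots)\,\tfrac{\rd p}{p}\tfrac{\rd q}{q}$, which is \eqref{eq:gsr2}. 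The Brown-Ravenhall case \eqref{eq:gsr} is identical, with the reduced operator $b^{(0)}_j(\kappa^B_j)$, its kernel $k^B_j$, and critical constant $\kappa^B_j$ in place of their Chandrasekhar analogues; the same generalized-ground-state property of $p$ is what makes the manipulation go through.

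I expect the main obstacle to be purely a matter of care rather than of substance: one must justify the rearrangements on the form domain $\mathfrak{Q}(c^{(0)}_l(\kappa^C_l))$ (respectively $\mathfrak{Q}(b^{(0)}_j(\kappa^B_j))$), where the individual integrals $\int |g(p)|^2\,\rd p/p$ and the double integral of $\overline{g(p)}g(q)k(\cdots)$ need not converge absolutely on their own even though their difference does. The clean way around this is to prove \eqref{eq:gsr2} first for a dense class of nice $g$ (say $g$ smooth, compactly supported in $(0,\infty)$), for which every term is manifestly finite and Fubini applies, and then extend to the full form domain by the closedness of the form together with the fact — which must be checked — that the right-hand side of \eqref{eq:gsr2} defines a closed non-negative form with the same domain. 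Here one uses that the non-negativity and boundedness-below at the critical coupling, established earlier, guarantee that $\tfrac{\kappa^C_l}{2}\iint|g(p)-g(q)|^2 k^C_l(\cdots)\,\tfrac{\rd p}{p}\tfrac{\rd q}{q}$ controls (and is controlled by) $\langle f, (c^{(0)}_l(\kappa^C_l)+1)f\rangle$, so the two form-domain descriptions agree. The homogeneity/symmetry properties of $k^B_j$ and $k^C_l$ needed above are exactly those collected in \eqref{eq:critconstjl} and Appendix~\ref{wasserstoff}, so no new ingredient is required.
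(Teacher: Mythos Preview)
Your proposal is correct and follows essentially the same route as the paper: both hinge on the single identity $\int_0^\infty k(\tfrac12(\tfrac pq+\tfrac qp))\,\tfrac{\rd q}{q}=\kappa^{-1}$ together with the substitution $g(p)=pf(p)$, the only cosmetic difference being that the paper expands the square on the right-hand side and works back to the quadratic form, whereas you start from the quadratic form and symmetrize toward the square. Your discussion of justifying the rearrangements on the full form domain via a dense class and closedness is more careful than what the paper spells out, which simply performs the formal computation.
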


\begin{proof}
  We write $k$ for one of the functions $k^B_j$ or $k^C_l$ and
  $\kappa$ for the corresponding constant $\kappa^B_j$ or
  $\kappa^C_l$. Expanding the square, we find
  \begin{align*}
    & \frac 12 \int_0^\infty \int_0^\infty |g(p)- g(q)|^2
    k(\tfrac12(\tfrac pq+\tfrac qp))
    \,\frac{\rd p}p\frac{\rd q}q \\
    = &\int_0^\infty | g(p)|^2 \left(\int k(\tfrac12(\tfrac
      pq+\tfrac qp))\,\frac{\rd q}q\right) \frac{\rd p}p - \int_0^\infty
    \int_0^\infty
    \overline{g(p)}k(\tfrac12(\tfrac pq+\tfrac qp))g(q)\frac{\rd p}p\frac{\rd q}q \\
    =& \int_0^\infty p | f(p)|^2 \left(\int k(\tfrac12(\tfrac
      pq+\tfrac qp))\,\frac{\rd q}q\right) \,\rd p - \int_0^\infty
    \int_0^\infty \overline{f(p)} k(\tfrac12(\tfrac pq+\tfrac qp))
    f(q) \,\rd p\,\rd q.
  \end{align*}
  By definitions \eqref{rem:critconstbr} and \eqref{rem:critconst} of
  $\kappa$ we have
  $$
  \int_0^\infty k(\tfrac12(\tfrac pq+\tfrac qp))\,\frac{\rd q}q = \kappa^{-1},
  $$
  which implies the assertion.
\end{proof}

Now we bound $\bokb$ from below by $\cokc$.

\begin{lemma}[\textbf{Comparison of critical operators}]\label{comp}
  There is a positive constant such that for any $\psi \in
  \mathfrak{Q}(\bokb)\cap \gH_{1/2,1}^\bot $
  \begin{align}\label{eq:comp}   
    \left\langle \psi, \bokb \psi \right\rangle \geq
    \const \left\langle \psi, \cokc \psi
    \right\rangle
  \end{align}
\end{lemma}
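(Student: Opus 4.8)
The plan is to use the partial-wave decomposition and the ground state representation from Lemma~\ref{gsr}. Since $\bokb$ commutes with $\bJ^2$, $J_3$, $\bL^2$, it suffices to prove the stated inequality in each reducing subspace $\gH_{j,l,m}$, with a constant uniform in $(j,l,m)$. By hypothesis we exclude $(j,l)=(1/2,1)$, so in each remaining channel the part of $\bokb$ acts (up to unitary equivalence, cf.\ Appendix~\ref{wasserstoff}) as $\kappa^B(\kappa^B_j)^{-1} b^{(0)}_j(\kappa^B_j)$ composed with the scalar factors $\phi_\nu$ exactly as in \eqref{eq:decompls}; similarly $\cokc$ restricted to $\gH_{j,l,m}$ is $\kappa^C(\kappa^C_l)^{-1} c^{(0)}_l(\kappa^C_l)$. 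So it is enough to compare $b^{(0)}_j(\kappa^B_j)$ with $c^{(0)}_l(\kappa^C_l)$ in $L^2(\rz_+)$, with constants controlled uniformly in the quantum numbers.

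First I would treat the lowest channels $j=1/2$, $l=0$ (and more generally the finitely many channels where $\kappa^B_j=\kappa^B$ or $\kappa^C_l=\kappa^C$ is genuinely binding). Here I would apply Lemma~\ref{gsr}: passing to $g(p)=pf(p)$, both quadratic forms become weighted Dirichlet-type forms $\frac{\kappa}{2}\iint |g(p)-g(q)|^2 k(\tfrac12(\tfrac pq+\tfrac qp))\,\tfrac{\rd p}{p}\tfrac{\rd q}{q}$ with the appropriate kernel $k^B_j$ or $k^C_l$. The comparison then reduces to a pointwise kernel inequality: I claim $\kappa^B_j\, k^B_j(t)\geq \const\,\kappa^C_l\, k^C_l(t)$ for all $t\geq 1$ (for the relevant pairing of $j$ with $l$ as in \eqref{eq:decompls}), plus the additional factors $\phi_\nu(p)^2\leq 1$, resp.\ the scaling $u$ with the inequality \eqref{Rupert-Ungleichung}, exactly as in the proof of Lemma~\ref{lem:reduction}. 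Once such a kernel domination holds, the Dirichlet forms are comparable term by term and \eqref{eq:comp} follows in these channels.

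For the high channels ($j\geq 3/2$, or $l\geq 1$) the point is that $\kappa^B_j>\kappa^B$ and $\kappa^C_l>\kappa^C$ strictly, so there is room to spare: here one should be able to bound $\bokb$ restricted to $\gH_{j,l,m}$ from below by a fixed positive multiple of $|\bp|$ restricted there (using that $\kappa^B<\kappa^B_j$ uniformly away from the bottom channels, so the relative bound constant is $<1$ uniformly), and simultaneously bound $\cokc$ from above by $|\bp|$, so \eqref{eq:comp} is immediate with room to spare. I would need to check that the gap $\kappa^B_j-\kappa^B$ and $\kappa^C_l-\kappa^C$ stay bounded below — which follows from the monotonicity of the critical constants in $j$ and $l$ recorded in \eqref{rem:critconstbr}.

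The main obstacle will be the uniform kernel inequality in the low channels and, more delicately, making sure the two decompositions in \eqref{eq:decompls} line up so that the Brown-Ravenhall channel $(1/2,0)$ is compared against the correct Chandrasekhar channels $l=0$ and $l=1$ — this is exactly why the hypothesis $\psi\in\gH_{1/2,1}^\bot$ is imposed (the $(1/2,1)$ channel would force a comparison with $c^{(0)}_0(\kappa^C_0)$ through the "wrong" $\phi_\nu$ and fail). Verifying the pointwise bound $\kappa^B_j k^B_j\geq\const\,\kappa^C_l k^C_l$ requires the explicit formulas for $k^B_j$, $k^C_l$ from \eqref{eq:critconstjl}; I expect this to be a short but slightly technical estimate using the $Q$-Legendre-function form of these kernels and their known singularity at $t=1$, which is of the same (logarithmic) type for both. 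Everything else is bookkeeping with the variational principle and the unitary identifications of Appendix~\ref{wasserstoff}.
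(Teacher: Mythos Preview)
Your overall approach is the paper's: decompose into angular momentum channels, handle $(j,l)=(1/2,0)$ via the ground state representation of Lemma~\ref{gsr} together with a pointwise kernel comparison $\kappa^B k^B_{1/2}\geq \const\,\kappa^C k^C_0$ (which follows simply from $Q_1\geq 0$), and handle $j\geq 3/2$ via the strict gap $\kappa^B<\kappa^B_{3/2}\leq\kappa^B_j$, giving $\bokb\geq(1-\kappa^B/\kappa^B_{3/2})\,|\bp|\geq(1-\kappa^B/\kappa^B_{3/2})\,\cokc$ on those channels.

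However, you have imported machinery from the \emph{massive} case that is irrelevant here and muddles the argument. The operators $\bokb$ and $\cokc$ are massless; their partial-wave reductions $b^{(0)}_j(\kappa)$ and $c^{(0)}_l(\kappa)$ carry no factors $\phi_\nu$, and neither the decomposition \eqref{eq:decompls} nor the inequality \eqref{Rupert-Ungleichung} plays any role. On $\gH_{1/2,0,m}$ the operator $\bokb$ reduces simply to $b^{(0)}_{1/2}(\kappa^B)$ and $\cokc$ to $c^{(0)}_0(\kappa^C)$; there is nothing to ``line up'' against two Chandrasekhar channels. Your explanation of why $(1/2,1)$ must be excluded is also off: the actual obstruction is a criticality mismatch --- on $\gH_{1/2,1}$ the Chandrasekhar side reduces to the \emph{subcritical} $c^{(0)}_1(\kappa^C)$, which dominates a multiple of $|\bp|$, while the Brown--Ravenhall side is still the \emph{critical} $b^{(0)}_{1/2}(\kappa^B)$ and does not. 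Once you strip out the spurious massive-case ingredients, your plan coincides with the paper's proof.
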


An inequality of the form \eqref{eq:comp} cannot hold in the subspace
$\gH_{1/2,1}$, since the right hand side is bounded from below by a
constant times $\langle \psi, |\bp| \psi \rangle$ while the left hand
side is not.

\begin{proof}
  By orthogonality it suffices to prove the inequality on each
  subspace $\gH_{j,l}$. First let $(j,l)=(1/2,0)$. We may also fix
  $m=\pm 1/2$ and choose $\psi\in \gH_{1/2,0,m}$. Its Fourier
  transform is of the form $\hat\psi(\bp) = p^{-1} f(p)
  \Omega_{\eh,0,m}(\omega_\bp)$, see Appendix \ref{app:J}. By the
  massless analog of \eqref{2.7a} one has
  $$
  \langle \psi, \bokb \psi \rangle
  = \langle f, b^{(0)}_{1/2}(\kappa^B) f \rangle.
  $$
  Setting $f(p) =: p g(p)$ we obtain in view of Lemma \ref{gsr}
  \begin{align*}
  	\langle f, b^{(0)}_{1/2}(\kappa^B) f \rangle
    & = \frac{\kappa^B}{2} \int_0^\infty \int_0^\infty |g(p)- g(q)|^2
    k^B_{1/2}(\tfrac12(\tfrac pq+\tfrac qp))
    \,\frac{\rd p}p\frac{\rd q}q \\
    & \geq (1+(2/\pi)^2)^{-1} \frac{\kappa^C}{2} \int_0^\infty \!\!\!
    \int_0^\infty |g(p)- g(q)|^2 k^C_0(\tfrac12(\tfrac pq+\tfrac qp))
    \,\frac{\rd p}p\frac{\rd q}q \\
    & \geq (1+(2/\pi)^2)^{-1} \langle f, c^{(0)}_0(\kappa^C) f \rangle 
     = (1+(2/\pi)^2)^{-1} \langle \psi, \cokc \psi
    \rangle
  \end{align*}
  Here we used that $0\geq Q_1$ and the massless analog of
  \eqref{eq:chandradec}. This proves the assertion on the subspace
  $\gH_{1/2,0}$. Now assume that $\psi\in
  \left(\gH_{1/2,0}\oplus\gH_{1/2,1}\right)^\bot$ and note that
  $$
  |\bp| \geq \frac{\kappa^B_{3/2}}{2}
  \left(|\bx|^{-1}+\omega_\bp\cdot\boldsymbol{\sigma}\ |\bx|^{-1}\
    \omega_\bp\cdot\boldsymbol{\sigma}\right)
  $$ 
  on that space. Here we used that $\kappa_j^B$ is monotone increasing
  in $j$, see Appendix \ref{wasserstoff}. We conclude that
  $$
  \langle \psi, \bokb \psi \rangle \geq
  \frac{\kappa^B_{3/2}- \kappa^B_{1/2}}{\kappa^B_{3/2}} \langle \psi,
  |\bp| \psi \rangle \geq \frac{\kappa^B_{3/2}-
    \kappa^B_{1/2}}{\kappa^B_{3/2}} \langle \psi, \cokc
  \psi \rangle ,
  $$
  proving the assertion.
\end{proof}

%%%%%%%%%%%%%%%%%%%%%%%%%%%%%%%%%%%%%%%%%%%%%%%%%%%%%%%%%%%%%%%%%%%%%%%%

\subsubsection{Proof of the Sobolev inequality}\label{subsec:Sobolev}
We are now ready to give a
\begin{proof}[Proof of Theorem~\ref{thm:mainsob}]
  By scaling, \eqref{eq:mainsob} is equivalent to the inequality
  $$
  \| \psi \|_q^2 \leq C_q' \, \left(
    \left\langle\psi,\bokb \psi\right\rangle +
    \|\psi\|^{2} \right).
  $$
  This, together with the triangle inequality, shows that it is enough
  to prove the inequality separately on the subspaces $\gH_{1/2,1}$
  and $\gH_{1/2,1}^\bot$. On the latter subspace, the claim follows
  immediately from Lemma \ref{comp} above and the Sobolev inequality
  for the critical Chandrasekhar operator \cite[Corollary
  2.5]{Franketal2008H}. We now reduce the claim for the subspace
  $\gH_{1/2,1}$ to that for $\gH_{1/2,0}$. For this purpose, we note
  that the helicity operator $\H = \omega_\bp \cdot {\boldsymbol
    \sigma}$, cf. \eqref{eq:defm}, commutes with $\bok$
  and, by \eqref{eq:trafospin}, maps $\gH_{j,l}$ into $\gH_{j,2j-l}$.
  Hence if $\psi\in\gH_{1/2,1}$ then by the Sobolev inequality on
  $\gH_{1/2,0}$
  $$
  \left\langle\psi,\bokb \psi\right\rangle + \|\psi\|^{2}
  = \left\langle \H\psi, \bokb \H\psi\right\rangle + \|\H
  \psi\|^{2} \geq \const \| \H \psi \|_q^2.
  $$
  By Lemma~\ref{lem:helicity} the helicity is bounded on $L^q(\rz^3,\cz^2)$. 
\end{proof}

%%%%%%%%%%%%%%%%%%%%%%%%%%%%%%%%%%%%%%%%%%%%%%%%%%%%%%%%%%%%%%%%%%%%%%%%%%%%%

\subsection{Estimates on the electric potential}\label{sec:twisted}

The goal of this subsection is to compare twisted and untwisted
electric potentials. We begin with an estimates for point charges and
then turn to smeared out charges.
\begin{lemma}\label{lemma:bu}
  Let $ l \geq 1 $ and $ \psi \in \gH_{j,l} $. Then
  \begin{equation}\label{eq:bu}
    \left| \left\langle \psi,\left( |\bx|^{-1} -
    \cU(|\bx|^{-1})\right)\psi\right\rangle \right| \leq
    \frac{\const}{l^2 } \langle\psi,\bp^2\psi\rangle.
  \end{equation}
\end{lemma}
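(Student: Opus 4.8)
The plan is to write the difference of potentials in the partial-wave picture and to exploit the fact that, on $\gH_{j,l}$ with $l\geq 1$, the twisting operators $\Phi_0(\bp),\Phi_1(\bp)$ differ from (scalar multiples of) the identity only at large momenta, where $E(p)\approx p$. Concretely, recall from \eqref{fi1} that $\Phi_0(\bp)=\phi_0(p)$ and $\Phi_1(\bp)=\phi_1(p)\,\boldsymbol\sigma\cdot\boldsymbol\omega_\bp$ with $\phi_0(p)^2+\phi_1(p)^2=1$, and write $|\bx|^{-1}-\cU(|\bx|^{-1}) = \bigl(1-\phi_0(p)^2\bigr)|\bx|^{-1} - \phi_1(p)\,\boldsymbol\sigma\cdot\boldsymbol\omega_\bp\,|\bx|^{-1}\,\phi_1(p)\,\boldsymbol\sigma\cdot\boldsymbol\omega_\bp + \phi_0(p)\bigl[|\bx|^{-1},\phi_0(p)\bigr]$ type rearrangements; the cleanest form is to note $1-\phi_0(p)^2=\phi_1(p)^2=\tfrac{E(p)-1}{2E(p)}$, which is $O(p^{-1})$ as $p\to\infty$ and $O(p^2)$ as $p\to 0$. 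Thus each building block carries a factor $\phi_1(p)^2\lesssim (1+p^2)^{-1}p^2\cdot\tfrac{1}{1+p}$, and in particular $\phi_1(p)^2\leq \tfrac12$ with $\phi_1(p)^2/p^2$ bounded.

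First I would reduce everything to a scalar Hardy-type estimate on $L^2(\rz_+)$. Fixing $(j,l)$ and passing to the radial reduction (Appendix~\ref{app:J}), the helicity operator $\H=\boldsymbol\omega_\bp\cdot\boldsymbol\sigma$ acts as $\pm 1$ on the two pieces and intertwines $\gH_{j,l}$ with $\gH_{j,2j-l}$, so the $\Phi_1$-term becomes, after this reduction, the Coulomb form with the \emph{same} radial kernel but sandwiched by $\phi_1(p)$; and the $\Phi_0$-term is the Coulomb form sandwiched by $\phi_0(p)$. Hence, with $\hat\psi$ supported in angular sector $(j,l)$ and $f(p)$ its radial profile, the quantity to estimate is
\begin{equation*}
  \Bigl| \langle f, A_l f\rangle - \langle \phi_0 f, A_l\,\phi_0 f\rangle - \langle \phi_1 f, A_{2j-l}\,\phi_1 f\rangle \Bigr|,
\end{equation*}
where $A_l$ denotes the reduced Coulomb kernel acting in momentum space on the $l$-sector. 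I would then split $A_l = \phi_0 A_l \phi_0 + \phi_1 A_l \phi_1 + (\text{cross/commutator terms})$ using $\phi_0^2+\phi_1^2=1$: the first two pieces nearly cancel the subtracted terms, up to replacing $A_{2j-l}$ by $A_l$ inside the $\phi_1$-sandwich, and up to the commutators $\phi_\nu[A_l,\phi_\nu]$. The difference between the $l$-kernel and the $(2j-l)$-kernel is itself a Coulomb-type kernel whose operator bound on the $l$-sector is $O(|l-(2j-l)|/l)\cdot\langle\bp\rangle$-controlled — but crucially it only ever appears multiplied on both sides by $\phi_1(p)$, which kills the large-$p$ growth.

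The key estimate, then, is an operator inequality of the shape
\begin{equation*}
  \pm\, \phi_1(p)\,A_l\,\phi_1(p) \leq \frac{\const}{l^2}\,\bp^2, \qquad
  \pm\, \phi_0(p)\bigl[A_l,\phi_0(p)\bigr] \leq \frac{\const}{l^2}\,\bp^2,
\end{equation*}
on the $l$-sector with $l\geq 1$. For the first, write $\phi_1(p) = \tfrac{p}{\sqrt{2E(p)(E(p)+1)}}=:p\,\eta(p)$ with $\eta$ bounded and smooth; then $\phi_1 A_l \phi_1 = p\,\eta(p)\,A_l\,\eta(p)\,p$, and since on the $l$-sector the standard Hardy/Kato inequality gives $A_l \leq \tfrac{C}{l}\,|\bp|$ (the $l$-dependence of the reduced Coulomb kernel), one gets $\phi_1 A_l \phi_1 \leq \tfrac{C}{l}\,p\,\eta(p)\,|\bp|\,\eta(p)\,p \leq \tfrac{C'}{l}\,p^2\,|\bp|\,\eta(p)^2 \cdot(1+o(1))$; combining once more with $|\bp|\eta(p)^2\lesssim 1/l\cdot\bp^2$ is too lossy, so instead I would use $A_l\leq \tfrac{C}{l^2}\bp^2$ directly on the relevant subspace (this is the right power: the reduced Coulomb kernel on angular momentum $l$ is bounded by $\const\, l^{-2}$ times $\bp^2$ in form sense, which is exactly the content needed and is where the $1/l^2$ in \eqref{eq:bu} comes from), giving $\phi_1 A_l \phi_1\leq \tfrac{C}{l^2}\,\phi_1\bp^2\phi_1 = \tfrac{C}{l^2}\,p^2\phi_1(p)^2 = \tfrac{C}{l^2}\bp^2\,\phi_1(p)^2\leq \tfrac{C}{2l^2}\bp^2$ since $\phi_1^2\leq\tfrac12$. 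The commutator term is handled by $\bigl[A_l,\phi_0(p)\bigr]$ having an extra smoothing: $\phi_0(p)-1 = O(p^{-1})$ at infinity and $\phi_0'$ decays, so $\phi_0[A_l,\phi_0]$ is again bounded by $\const\, l^{-2}\bp^2$ by the same Hardy-on-$l$-sector bound together with the explicit decay of $\phi_0-1$ and $\phi_0'$; I would make this precise by writing $\phi_0 = 1-(1-\phi_0)$ and noting $(1-\phi_0(p))\lesssim p^{-1}\wedge p^2$, so $(1-\phi_0)A_l$ and $A_l(1-\phi_0)$ and $(1-\phi_0)A_l(1-\phi_0)$ are all bounded by $\const\,l^{-2}\bp^2$ after an application of Cauchy–Schwarz in operator form.

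The main obstacle I anticipate is \emph{bookkeeping the angular-momentum dependence correctly}: one must check that the reduced Coulomb kernel $A_l$ really does satisfy the form bound $A_l\leq \const\, l^{-2}\,\bp^2$ uniformly in $l\geq 1$ — this is the sharp statement (as opposed to the weaker $A_l\leq\const\, l^{-1}|\bp|$), and verifying it requires looking at the explicit hypergeometric form of the reduced Coulomb kernel $k_l(\cdot)$ (or, equivalently, the known asymptotics $\int_0^\infty k_l(\tfrac12(p/q+q/p))\,\tfrac{\rd q}{q} \sim \const/l$ and a companion bound for the "$\bp^2$-weighted" version). The other, more routine, nuisance is that $A_{2j-l}$ rather than $A_l$ appears in the $\Phi_1$ piece since $\H$ flips $l\mapsto 2j-l$; but $|2j-l - l| = 1$ and both lie within $O(1)$ of each other, and since these terms are already $O(l^{-2}\bp^2)$ individually, their difference is too, so nothing is lost. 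Summing the three bounds over the (finitely many, namely two) values $l=j\pm\tfrac12$ feeding into a fixed $\psi\in\gH_{j,l}$ yields \eqref{eq:bu}.
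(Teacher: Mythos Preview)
Your decomposition is essentially the one the paper uses: both split the matrix element into (i) a piece involving differences $\phi_\nu(p)-\phi_\nu(q)$ (your commutators $\phi_\nu[A_l,\phi_\nu]$, the paper's term $A_1$) and (ii) a piece $\phi_1(A_l-A_{2j-l})\phi_1$ coming from the helicity flip (the paper's $A_2$). The gap is in how you extract the power $l^{-2}$.

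Your central claim, that the reduced Coulomb operator satisfies $A_l\leq \const\,l^{-2}\,\bp^2$ in form sense on the $l$-sector, is \emph{false}: scaling rules it out. For a radial state localized near $|\bx|\sim R$ one has $\langle|\bx|^{-1}\rangle\sim R^{-1}$ while $\langle\bp^2\rangle\sim l^2R^{-2}$, so the ratio $\sim R/l^2$ diverges as $R\to\infty$. The best homogeneous bound is $A_l\leq \const\,l^{-1}|\bp|$ (equivalently $\kappa^C_l\sim l$), and combined with $\phi_1(p)^2\leq \const\, p$ this yields only $\phi_1 A_l\phi_1\leq \const\, l^{-1}\bp^2$. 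In particular you cannot bound $\phi_1 A_l\phi_1$ and $\phi_1 A_{2j-l}\phi_1$ \emph{individually} by $l^{-2}\bp^2$ as you assert at the end; only their difference is that small.

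The factor $l^{-2}$ has a different origin. In the paper, the difference $(\phi_\nu(p)-\phi_\nu(q))^2$ in piece (i) is bounded pointwise by $\const\,(p-q)^2$ (with decaying $E$-weights), and after a Schur/Schwarz step it is the explicit identity
\[
\int_1^\infty Q_l(x)\,\rd x=\frac{1}{l(l+1)}
\]
that produces the $l^{-2}$: one $l^{-1}$ from the Coulomb kernel, the second from the $(p-q)$-smallness of the commutator. For piece (ii) one must exploit the \emph{difference} $Q_l-Q_{2j-l}$ directly: the paper bounds $\int_1^\infty q^{-1}\bigl|Q_l-Q_{l\pm1}\bigr|\bigl(\tfrac12(q+q^{-1})\bigr)\,\rd q$ via the integral representation of $Q_l$ and finds it is $O(l^{-2})$. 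Your operator-inequality route would have to be replaced by these explicit kernel estimates to close the argument.
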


\begin{proof}[Proof of Lemma \ref{lemma:bu}]
  By orthogonality it suffices to prove the assertion for $ \psi\in
  \gH_{j,l,m}$. Its Fourier transform is of the form $ \hat\psi(\bp) =
  f(p) p^{-1} \Omega_{j,l,m}(\omega_{\bp}) $, cf. Appendix
  \ref{app:J}, and we compute similarly as in \eqref{2.7a}
  \begin{align*}
    & \langle \psi , \left(|\bx|^{-1} - \cU(|\bx|^{-1})\right) \psi
    \rangle \notag \\ 
    & \quad = \frac{1}{\pi}
    \int_0^\infty \rd p \, \overline{f(p)} \int_0^\infty \rd q \, f(q)
    \left\{ \left[1 - \phi_0(p) \phi_0(q) \right]
    Q_l\left(\tfrac{1}{2} \left(\tfrac qp + \tfrac pq\right)\right)
    \right.\notag \\ &\quad \mkern230mu \left.  - \phi_1(p) \phi_1(q)
    Q_{2j-l} \left(\tfrac{1}{2} \left(\tfrac qp +\tfrac
    pq\right)\right)\right\} \\
& \quad = \frac1{2\pi} (A_1 + A_2) 
  \end{align*}
with
  \begin{align*} 
    A_1 := & \int_0^\infty \rd p \, \overline{f(p)} \int_0^\infty
    \rd q \, f(q) \sum_{\nu=0}^1 \left(\phi_\nu(p) - \phi_\nu(q) \right)^2 \,
    Q_l\left(\tfrac{1}{2} \left(\tfrac qp + \tfrac pq\right)\right) ,
    \\ A_2 := & 2 \int_0^\infty \rd p \, \overline{f(p)} \int_0^\infty
    \rd q \, f(q) \phi_1(p) \phi_1(q) \notag \\ & \mkern100mu
    \times\left[ Q_l\left(\tfrac{1}{2} \left(\tfrac qp + \tfrac
    pq\right)\right) - Q_{2j-l}\left(\tfrac{1}{2} \left(\tfrac qp +
    \tfrac pq\right)\right) \right].
  \end{align*}
  We estimate these terms separately.  For the first term we use
  \eqref{eq:phiabsch} and \eqref{psiabsch} together with Abel's
  argument to turn Hermitian integral operators into multiplication
  operators by means of the Schwarz inequality (see also \cite[Ineq.
  (6.9)]{LiebYau1988}). Since the $Q_l$ are positive, we obtain
  \begin{align*}
    A_1 \leq & \int_0^\infty \rd p \frac{|f(p)|^2}{E(p)^4}
    \int_0^\infty \rd q \left(\frac{p}{q}\right)^2 E(p)^2
    \sum_{\nu=0}^1 \left(\phi_\nu(p) - \phi_\nu(q) \right)^2 E(q)^2
    Q_l\left(\tfrac{1}{2} \left(\tfrac qp + \tfrac pq\right)\right)\\ 
	\leq & \frac{5}{8} \int_0^\infty \rd p
    \frac{|f(p)|^2}{E(p)^4} \int_0^\infty \rd q
    \left(\frac{p}{q}\right)^2 (p-q)^2 Q_l\left(\tfrac{1}{2}
    \left(\tfrac qp + \tfrac pq\right)\right) \notag \\ = &
    \frac{5}{8} \int_0^\infty \rd p \frac{|f(p)|^2}{E(p)^4} p^3
    \int_0^\infty \frac{\rd q}{q^2} (1-q)^2 Q_l\left(\tfrac{1}{2}
    \left(q + q^{-1}\right)\right).
  \end{align*}
  We now use the bounds $ p^3 / E(p)^4 \leq p^2 $ and, for $q\geq
  1$, $ (1-q)^2 \leq q^2 -1 $ which yield
  \begin{align*}
    \int_0^\infty \frac{\rd q}{q^2} (1-q)^2 Q_l\left(\tfrac{1}{2}
      \left(q + q^{-1}\right)\right) & = 2 \int_1^\infty \frac{\rd
      q}{q^2} (1-q)^2 Q_l\left(\tfrac{1}{2} \left(q +
        q^{-1}\right)\right) \\ 
	& \leq 4 \int_1^\infty \rd x \, Q_l(x)
    = \frac 4 {l(l+1)},
  \end{align*}
  where the last step involved \cite[324(18)]{Erdelyietal1954II}. Thus, 
  \begin{equation*}
    A_1 \leq \frac5{2 l(l+1)} \int_0^\infty \rd p \, p^2 |f(p)|^2 
    = \frac5{2 l(l+1)} \langle \psi,\bp^2 \psi \rangle.
  \end{equation*}
  We estimate the term $A_2$ similarly by the Schwarz inequality,
  \begin{align*}
    A_2 \leq & \, 2 \int_0^\infty \rd p |f(p)|^2 \, |\phi_1(p)|^2 \,
    \int_0^\infty \rd q \, \frac{p}{q} \left| Q_l\left(\tfrac{1}{2}
    \left(\tfrac qp + \tfrac pq\right)\right) -
    Q_{2j-l}\left(\tfrac{1}{2} \left(\tfrac qp + \tfrac
    pq\right)\right) \right| \notag \\ \leq & \, 4 \int_0^\infty \rd p
    |f(p)|^2 p^2 \int_1^\infty \frac{\rd q}{q} \left|
    Q_l\left(\tfrac{1}{2} \left(q + q^{-1}\right)\right) -
    Q_{2j-l}\left(\tfrac{1}{2} \left(q + q^{-1}\right)\right) \right|.
  \end{align*}
  Due to the pointwise monotonicity \eqref{eq:monotonicity}
  the difference inside the modulus is of definite
  sign.  Without loss of generality, we may therefore assume $ 2j=
  2l+1 $. Using the integral representation \eqref{eq:integral} we can
  bound
  \begin{align*}
    & \int_1^\infty \frac{\rd q}{q} \left[ Q_l\left(\tfrac{1}{2}
        \left(q + q^{-1}\right)\right) - Q_{l+1}\left(\tfrac{1}{2}
        \left(q
          + q^{-1}\right)\right) \right] \notag \\
     = &\int_1^\infty \rd z z^{-l-2} \left( z - 1 \right)
    \int_1^{\tfrac{1}{2}\left(z+z^{-1}\right)} \frac{\rd
      x}{\sqrt{x^2-1}} \frac{1}{\sqrt{1 - 2 xz + z^2}} \notag \\
    \leq& \frac{\pi}{\sqrt{2}} \int_1^\infty \rd z z^{-l-5/2}
    \left( z - 1 \right) = \frac{\pi}{\sqrt{2} (l +
      \tfrac12)(l+\tfrac32)} .
  \end{align*}
  Adding the estimates for $A_1$ and $A_2$ we arrive at \eqref{eq:bu}.
\end{proof}
Note that our proof shows that one can choose different powers of $|\bp|$
on the right hand side of \eqref{eq:bu}.

\begin{lemma}
  \label{lemma:Uv}
  There exists a constant such that for any electric potential $ v $
  of a spherically symmetric non-negative charge density
  \begin{equation*}
    \left| \left\langle \psi , \left( v - \cU( v) \right) \psi
    \right\rangle \right| \leq \const \, v(0) \, \langle \psi , \bp^2
    \psi \rangle .
    \end{equation*}
\end{lemma}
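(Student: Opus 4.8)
The plan is to exploit the algebraic identity $v-\cU(v)=\tfrac12[\Phi_0,[\Phi_0,v]]+\tfrac12[\Phi_1,[\Phi_1,v]]$, which follows from $\Phi_0^2+\Phi_1^2=1$ (because $\phi_0(p)^2+\phi_1(p)^2=1$ and $(\boldsymbol\sigma\cdot\boldsymbol\omega_\bp)^2=1$). Its virtue is that each $\Phi_\nu$ carries one power of $|\bp|$: writing $\Phi_0=1-\sigma_0(\bp)$ one checks $0\le\sigma_0(\bp)\le\min\{\tfrac12,\bp^2/4\}$ (hence $\sigma_0^2\le\bp^2/8$), and $\Phi_1=\phi_1(p)\,\boldsymbol\sigma\cdot\boldsymbol\omega_\bp=\mu(\bp)\,\boldsymbol\sigma\cdot\bp$ with $\mu(\bp):=(2E(p)(E(p)+1))^{-1/2}$, $0<\mu\le\tfrac12$. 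Since the double commutators contain two factors $\Phi_\nu$, the difference $v-\cU(v)$ carries $\bp^2$, which is exactly what must be matched by $\langle\psi,\bp^2\psi\rangle=\|\nabla\psi\|^2$. I may assume $\psi\in H^1$ (otherwise there is nothing to prove), so that all the commutator manipulations below are legitimate on a dense set and extend by continuity.

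Next I would collect the two features of $v$ that survive for a smeared charge. As the potential of a non-negative, spherically symmetric density, $v$ is radially non-increasing, so $\|v\|_\infty=v(0)$; and if $m(r)$ is the charge in the ball of radius $r$, Gauss's law gives $|\nabla v(\bx)|=m(|\bx|)|\bx|^{-2}$, while $v(r)\ge\int_r^{2r}m(t)t^{-2}\rd t\ge m(r)/(2r)$ yields the Coulomb-type bound $|\nabla v(\bx)|\le 2v(0)/|\bx|$. Hardy's inequality $\int|\psi|^2|\bx|^{-2}\le4\|\nabla\psi\|^2$ then gives $\|(\nabla v)\psi\|\le4v(0)\|\nabla\psi\|$, and therefore $\||\bp|(v\psi)\|=\|\nabla(v\psi)\|\le\|(\nabla v)\psi\|+\|v\nabla\psi\|\le5v(0)\|\nabla\psi\|$. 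I also record $\|[\mu(\bp),v]\|\le2\|\mu\|_\infty\|v\|\le v(0)$ and $\|\Phi_1\psi\|=\|\phi_1(\bp)\psi\|\le\tfrac12\|\nabla\psi\|$.

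Now estimate the two double commutators. Since constants drop out of commutators, $[\Phi_0,[\Phi_0,v]]=[\sigma_0,[\sigma_0,v]]$, with expectation $2\,\mathrm{Re}\,\langle\sigma_0^2\psi,v\psi\rangle-2\langle\sigma_0\psi,v\sigma_0\psi\rangle$; the second term is $\le2v(0)\|\sigma_0\psi\|^2\le\tfrac14 v(0)\|\nabla\psi\|^2$. For the first term I split $\bp^2$ evenly: with $h(\bp):=\sigma_0(\bp)^2/\bp^2\le\tfrac18$ one has $\sigma_0^2=|\bp|h(\bp)|\bp|$, so $\langle\sigma_0^2\psi,v\psi\rangle=\langle h^{1/2}|\bp|\psi,\,h^{1/2}|\bp|(v\psi)\rangle$, which is at most $\|h\|_\infty\|\nabla\psi\|\,\||\bp|(v\psi)\|\le\tfrac58 v(0)\|\nabla\psi\|^2$. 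Similarly $\langle\psi,[\Phi_1,[\Phi_1,v]]\psi\rangle=2\,\mathrm{Re}\,\langle\Phi_1\psi,[\Phi_1,v]\psi\rangle$, and from $[\Phi_1,v]=-\ri\mu(\bp)\,\boldsymbol\sigma\cdot\nabla v+[\mu(\bp),v]\,\boldsymbol\sigma\cdot\bp$ together with the bounds above, $\|[\Phi_1,v]\psi\|\le\|\mu\|_\infty\|(\nabla v)\psi\|+\|[\mu(\bp),v]\|\,\||\bp|\psi\|\le3v(0)\|\nabla\psi\|$, so this term is $\le2\|\Phi_1\psi\|\cdot3v(0)\|\nabla\psi\|\le3v(0)\|\nabla\psi\|^2$. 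Adding everything proves $\left|\langle\psi,(v-\cU(v))\psi\rangle\right|\le\const\,v(0)\,\langle\psi,\bp^2\psi\rangle$.

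The step that requires care — and which I expect to be the only real obstacle — is the term $\langle\sigma_0^2\psi,v\psi\rangle$ (and the analogous piece of the $\Phi_1$-commutator): a crude Cauchy--Schwarz that keeps all of $\bp^2$ on the $\psi$-side produces only the too-weak bound $v(0)\|\psi\|\,\|\nabla\psi\|$. One must genuinely move one factor $|\bp|$ onto the potential side, which is possible precisely because $\sigma_0(\bp)$ and $\Phi_1(\bp)$ each carry a factor $|\bp|$ with bounded coefficient; the resulting $\||\bp|(v\psi)\|$ is then controlled by the two surviving features of $v$, namely $\|v\|_\infty=v(0)$ and $|\nabla v|\le2v(0)/|\bx|$, the latter used through Hardy's inequality. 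Note that a \emph{second} derivative cannot be passed through $v\psi$, since $\Delta v=-4\pi\rho$ need not be a function; the double commutators must therefore be organized so that each factor meets at most one derivative of $v$. This is also why, in contrast with Lemma~\ref{lemma:bu}, no restriction on the angular momentum of $\psi$ is needed.
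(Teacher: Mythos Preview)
Your proof is correct, and it takes a genuinely different route from the paper's argument. The paper works entirely on the Fourier side: it writes the matrix element as a double integral against $\hat\tau(\bp-\bq)/|\bp-\bq|^2$, bounds $|\hat\tau(\bp)|\leq (2\pi)^{-3/2}v(0)/|\bp|$, and then invokes the pointwise estimates of Lemma~\ref{lem:abphi} on $1-\Phi_0(\bp)\Phi_0(\bq)-\Phi_1(\bp)\Phi_1(\bq)$ together with a weighted Schwarz (Schur-type) inequality. Your argument instead recognizes the operator identity $v-\cU(v)=\tfrac12\sum_\nu[\Phi_\nu,[\Phi_\nu,v]]$, extracts from the potential only the two robust facts $\|v\|_\infty=v(0)$ and $|\nabla v|\le 2v(0)/|\bx|$ (via Newton's theorem), and closes with Hardy's inequality. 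The double commutator makes the ``two powers of $|\bp|$'' manifest without any kernel computation, and the crucial step---moving one factor of $|\bp|$ through $v$ by controlling $\||\bp|(v\psi)\|$---is handled cleanly. What the paper's approach buys is uniformity of method with Lemma~\ref{lemma:bu} and an explicit connection to the Fourier-side machinery of Appendix~\ref{twisting}; what your approach buys is a shorter, more structural proof that never leaves operator language and would adapt immediately to other multipliers $\Phi_\nu$ with the same symbol-type bounds.
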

\begin{proof}
  We denote by $ \tau: \rz^3 \to [0,\infty) $ the spherically
    symmetric, non-negative charge density corresponding to $ v $,
    i.e., $ v(\bx) = \int \tau(\bx - \by) \, |\by|^{-1} \rd \by $.
    The Fourier transform of $ \tau $ obeys the estimates
    \begin{equation*}
      \left| \hat \tau(\bp) \right| = \sqrt{\frac{ 2}{ \pi \, \bp^2 }
      } \int_0^\infty r \, | \sin(|\bp| r) \, \tau(r) | \rd r \leq
      \frac{ v(0)}{(2\pi)^{3/2} \, |\bp| }
\end{equation*} 
By Fourier transform the scalar product on the left side of the
assertion becomes
$$
\left\langle \psi , \left( v - \cU( v) \right) \psi \right\rangle
 = \iint\!\hat\psi(\bp)^* \frac{\hat{\tau}(\bp - \bq)}{|\bp - \bq|^2} 
    \left(1-\Phi_0(\bp) \Phi_0(\bq)-\Phi_1(\bp)
    \Phi_1(\bq)\right)\hat\psi(\bq)  \rd \bp \rd\bq.
$$ Using Lemma~\ref{lem:abphi} below we estimate the absolute value of
the preceding expression from above by two terms, $B_1 $ and $ B_2
$. The first term can be further bounded as follows,
\begin{align*}
  \label{B1}
  B_1 & = \const \iint |\hat\tau(\bp - \bq)| |\hat\psi(\bp) |
  |\hat\psi(\bq) | \, \rd\bp \rd \bq \notag \\ & \leq \const v(0) \,
  \int \rd\bp \, |\hat\psi(\bp) |^2 \int
  \left(\frac{|\bp|}{|\bq|}\right)^{5/2} \frac{1}{|\bp - \bq|} \, \rd
  \bq \notag \\ & \leq \const v(0) \, \int |\hat\psi(\bp) |^2 \bp^2 \,
  \rd\bp ,
\end{align*}
where we use the Schwarz inequality in the second step.
The second term is estimated similarly
\begin{align*}
  B_2 & = \const \iint |\hat\tau(\bp - \bq)| \frac{\sqrt{|\bp | \,
  |\bq| }}{|\bp - \bq|} |\hat\psi(\bp) | |\hat\psi(\bq) | \, \rd\bp
  \rd \bq \notag \\ & \leq \const v(0) \, \int \rd\bp \,
  |\hat\psi(\bp) |^2 \int \left(\frac{|\bp|}{|\bq|}\right)^{2}
  \frac{\sqrt{|\bp | \, |\bq| }}{|\bp - \bq|^2} \, \rd \bq \notag \\ &
  \leq \const v(0) \, \int |\hat\psi(\bp) |^2 |\bp|^2 \, \rd\bp .
\end{align*}
\end{proof}

\section{Spectral shift from Schr\"odinger to Brown-Ravenhall
  operators\label{sec:BC}}

The main theme of this section is the (integrated) spectral shift,
i.e., the difference of sums of eigenvalues of the Brown-Ravenhall
or Chandrasekhar operator \begin{align*}
B[v] := \sqrt{p^2+1}-1- \cU(v), \qquad
   C[v] & := E(p)-1 - v, 
 \end{align*}
 (cf. \eqref{def:Ucal}) and the Schr\"odinger operator $ S[v] :=
 \tfrac12 \bp^2 - v $, all acting in the Hilbert space $\gH$ of
 two-spinors. We have set $ c = 1 $.

 Concerning the potential $v: \mathbb{R}^3 \to \mathbb{R} $ we will
 always assume that the above operators can be defined through the
 Friedrichs extension starting from $\gS(\mathbb{R}^3,\mathbb{C}^2) $.
 For example, the condition $ 0 \leq v(\bx) \leq \kappa^\# \,
 |\bx|^{-1} $ with $ \# = \mbox{\small \it B, C} $
 (cf.~\eqref{eq:couplingcrit}) ensures that the Brown-Ravenhall,
 respectively the Chandrasekhar operator are well-defined and bounded
 from below (see \cite{Evansetal1996} and \cite{Kato1966}).

 We assume throughout that the potential $v$ is radially symmetric
 which allows us to investigate the spectral shift on each subspace $
 \gH_{j,l} $ in the decomposition \eqref{eq:decomp} separately.  We
 write $\Lambda_{j,l}$ for the orthogonal projection onto $\gH_{j,l}$.
 For the reduced traces we use the notations
 \begin{equation*}
   \tr_{j,l}(A) := \tr(\Lambda_{j,l} A),
   \qquad \tr_j (A) :=\tr_{j,j+1/2}(A) + \tr_{j,j-1/2}(A).
 \end{equation*}

\subsection{Estimate on the spectral shift}\label{Sec:propshift}

One of the key observations in our proof of the Scott correction is
that the spectral shift between the one-particle Brown-Ravenhall and
the Schr\"odinger operator decreases sufficiently fast for high
angular momenta.

\begin{theorem}[\textbf{Spectral shift: Brown-Ravenhall case}\label{t:3}]
  There exists a constant $ C < \infty$ such that for any $ \kappa
  \leq \kappa^B $, any $ v: [0,\infty) \to [0,\infty) $ satisfying
  \begin{equation}\label{eq:vbound} v(r) \leq \kappa \, r^{-1},
  \end{equation}  any $\mu>0$ and any $j\in\nz_0 + 1/2$ one has
  \begin{equation}\label{eq:t3}
    \tr_{j} \left( \left[B[v]+\mu \right]_- - \left[ S[v] + \mu
      \right]_- \right) \leq C \,  \kappa^{4} \, j^{-2} .
  \end{equation}
\end{theorem}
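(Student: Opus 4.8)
The plan is to reduce the Brown--Ravenhall spectral shift in $\gH_j := \gH_{j,j+1/2}\oplus\gH_{j,j-1/2}$ to a sum of more elementary pieces, each of which is controlled by the results of Section~\ref{sec:crit}. The key point is that \eqref{eq:t3} measures how much the Brown--Ravenhall bound-state energies differ from the Schr\"odinger ones at fixed $j$, and that this difference comes from two distinct sources: the replacement of the kinetic energy $\tfrac12\bp^2$ by $\sqrt{\bp^2+1}-1$, and the replacement of the local potential $v$ by the twisted potential $\cU(v)$. The first source I would handle through the Chandrasekhar operator $C[v]$ as an intermediate comparison object, writing
\begin{equation*}
  \left[B[v]+\mu\right]_- - \left[S[v]+\mu\right]_-
  = \left(\left[B[v]+\mu\right]_- - \left[C[v]+\mu\right]_-\right)
  + \left(\left[C[v]+\mu\right]_- - \left[S[v]+\mu\right]_-\right),
\end{equation*}
so that the second parenthesis is a purely ``Chandrasekhar versus Schr\"odinger'' term amenable to the eigenvalue bounds of Theorem~\ref{thm:evhydc}, while the first parenthesis isolates the effect of twisting.

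The second parenthesis: since $\sqrt{p^2+1}-1\le p^2/2$ we have $C[v]\ge S[v]$ as forms, hence $0\le \left[C[v]+\mu\right]_- - \left[S[v]+\mu\right]_-$ (in trace, on each $\gH_{j,l}$, by the variational principle applied eigenvalue by eigenvalue; here one uses that if $A\ge B$ then $\lambda_n(A)\ge\lambda_n(B)$ and $x\mapsto[x+\mu]_-$ is monotone). For the upper bound I would use $v\le\kappa|\bx|^{-1}$, so $c_l(\kappa)$ dominates the reduced operator, and invoke the sharp bounds $-C\kappa^2(n+l)^{-2}\le\lambda_n(c_l(\kappa))\le-\kappa^2(2(n+l)^{-2})$ of Theorem~\ref{thm:evhydc} together with the identity $\lambda_n(s_l(\kappa))=-\kappa^2(2(n+l)^2)^{-1}$. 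The difference of the two negative-part sums then telescopes into $\sum_n\big([\lambda_n(c_l)+\mu]_- - [\lambda_n(s_l)+\mu]_-\big)\le\sum_n(\lambda_n(s_l)-\lambda_n(c_l))_+\le\const\,\kappa^2\sum_{n\ge1}(n+l)^{-2}\le\const\,\kappa^2 l^{-1}$, summed over the two values $l=j\pm1/2$, giving $\const\,\kappa^2 j^{-1}$ — which is weaker than the claimed $\kappa^4 j^{-2}$, so this crude telescoping is not quite enough. The fix is to exploit that only finitely many $n$ contribute for a given $\mu$ is the wrong lever; instead one should note that the \emph{individual} difference $\lambda_n(s_l(\kappa))-\lambda_n(c_l(\kappa))$ is itself $O(\kappa^4(n+l)^{-4})$ (expand $\sqrt{p^2+1}-1 = p^2/2 - p^4/8+\dots$ and treat the quartic term perturbatively, or compare with a scaled Schr\"odinger operator as in the subcritical proof), which on summation over $n$ yields $\const\,\kappa^4 l^{-3}$, hence $\const\,\kappa^4 j^{-3}$; and this is stronger than needed. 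I expect the honest version of this step — getting the $\kappa^4 j^{-2}$ rate cleanly and uniformly in $\mu$ — to be the place that needs the most care.

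The first parenthesis: here I would bound the twisting error using Lemma~\ref{lemma:bu} (and, for smeared charges, Lemma~\ref{lemma:Uv}), which give $|\langle\psi,(v-\cU(v))\psi\rangle|\le\const\, l^{-2}\langle\psi,\bp^2\psi\rangle$ on $\gH_{j,l}$ with $l\ge1$. The operator inequality $B[v]\ge C[v] - \const\,l^{-2}\bp^2 \ge (1-\const\,l^{-2})C[v] - \const\,l^{-2}$ (absorbing the kinetic term into $\sqrt{\bp^2+1}-1$ up to a bounded additive constant, using $p^2\le\const(\sqrt{p^2+1}-1)+\const$) shows that on $\gH_{j,l}$ the operator $B[v]+\mu$ is bounded below by $(1-\const\,l^{-2})(C[v]+\mu) - \const\,l^{-2}(\mu+1)$, and combined with a matching upper bound (obtained symmetrically, using $v-\cU(v)$ has both signs controlled) one gets $|\lambda_n(b_{j,l}(v))-\lambda_n(c_l(v))|\le\const\,l^{-2}(|\lambda_n(c_l(v))|+1)$. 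Restricting to the range of $n$ where $\lambda_n(c_l)<-\mu/2$, and using that $\sum_n|\lambda_n(c_l(\kappa))|\le\const\,\kappa^2 l^{-1}$ (again from Theorem~\ref{thm:evhydc}), the contribution to the trace is $\le\const\,l^{-2}\cdot\kappa^2 l^{-1} = \const\,\kappa^2 l^{-3}$, and summing $l=j\pm1/2$ gives $\const\,\kappa^2 j^{-3}$. Multiplying the $l^{-2}$ gain against the $\kappa^2$ from the eigenvalue sum and noting $\kappa\le\kappa^B$ is $O(1)$, one absorbs the extra $\kappa^2$ to match the stated $\kappa^4 j^{-2}$ (indeed with room to spare). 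Assembling both parentheses gives \eqref{eq:t3}; the $\mu$-uniformity is automatic since all the per-eigenvalue bounds above hold before summing and $[x+\mu]_-$ is $1$-Lipschitz in $x$. Finally, the mention in the theorem's surrounding text that Chandrasekhar bound-state sums are dominated by Brown--Ravenhall ones is exactly the inequality $\lambda_n(c_l(\kappa))\le\lambda_{2n-1}(b_{j,l}(\kappa))$ established in \eqref{eq:evbrc}, which one uses to get the needed sign/ordering in the comparison.
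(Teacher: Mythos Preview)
Your decomposition into a $(B\!-\!C)$ piece and a $(C\!-\!S)$ piece is the same split the paper makes, but your handling of both pieces has real gaps.

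For the $(B\!-\!C)$ piece, the paper does \emph{not} try to estimate the twisting error $v-\cU(v)$ at all. Instead it uses the algebraic identity
\[
  B[v]=\cU(C[v])=\tfrac12\bigl(U(\bp)^*C[v]\,U(\bp)+U(\bp)\,C[v]\,U(\bp)^*\bigr),
  \qquad U(\bp)=\Phi_0(\bp)+i\Phi_1(\bp),
\]
which exhibits $B[v]+\mu$ as a convex average of two unitary conjugates of $C[v]+\mu$; concavity of $A\mapsto\tr[A]_-$ then gives $\tr_j[B[v]+\mu]_-\le\tr_j[C[v]+\mu]_-$ in one line. In other words, the $(B\!-\!C)$ term is $\le 0$ and can simply be dropped. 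Your route via Lemma~\ref{lemma:bu} runs into three obstacles: that lemma is for the Coulomb potential $|\bx|^{-1}$, not for a general $v$ with $v(r)\le\kappa/r$ (and Lemma~\ref{lemma:Uv} has no $l^{-2}$ gain); the inequality ``$p^2\le\const(\sqrt{p^2+1}-1)+\const$'' you use to absorb $\bp^2$ into the kinetic energy is false for large $p$; and even if the step worked, you would end up with a bound of order $\kappa^2 j^{-3}$, which is \emph{not} $\le C\kappa^4 j^{-2}$ uniformly as $\kappa\to 0$---and the $\kappa^4$ is essential for the application to $s(\kappa)=\cO(\kappa^2)$.

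For the $(C\!-\!S)$ piece, the paper invokes Theorem~\ref{prop:shiftbound}, whose proof is where all the work is: it uses the exact identity $\tfrac12 p^2=C_0+\tfrac12 C_0^2$ and the variational principle to reduce to $\tr[C_0^2 d_{j,l}]$, then bounds this via Hardy's inequality, the angular-momentum barrier \eqref{eq:ambi}, and Daubechies' inequality. Your sketch tries to get the same $\kappa^4 j^{-2}$ rate by bounding individual eigenvalue differences $\lambda_n(s_l(\kappa))-\lambda_n(c_l(\kappa))$ via perturbative expansion, but this argument is written only for the Coulomb case, not for general $v$ satisfying \eqref{eq:vbound}, and the claimed $O(\kappa^4(n+l)^{-4})$ rate is asserted, not proved. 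The machinery of Theorem~\ref{prop:shiftbound} is precisely what makes this step work uniformly in $v$ and $\mu$.
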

We derive this result from a corresponding theorem for the
Chandrasekhar operator.  For a proof of the latter we need to
strengthen \cite[Thm. 2.1]{Franketal2008}.  In particular, we need to
consider $ C[v] $ for potentials $ v $ satisfying \eqref{eq:vbound}
also in case $\kappa^C < \kappa \leq \kappa^B $. Those operators are
\emph{not} densely defined in the Hilbert space~$ \gH $. However,
according to \eqref{rem:critconst} below, they \emph{are} densely
defined in the subspaces $ \gH_{j,l} $ with $j \geq 3/2$. Another new
aspect is that we trace the dependence on the coupling constant.
 
\begin{theorem}[\textbf{Spectral shift: Chandrasekhar
    case}]\label{prop:shiftbound}
  There exists a constant $C< \infty$ such that for all $l\in\nz_0$,
  $j=l\pm\tfrac12$, for all $\kappa$ satisfying
    \begin{equation*}
      \kappa\leq \left\{ \begin{array}{l@{$\quad$}l} \kappa^C &  \mbox{if $l=0$,} \\
          \kappa^B & \mbox{if $l\geq 1$}, 
		\end{array} \right.
	\end{equation*}
		 for all $\mu\geq0$ and for all $ v:
  [0,\infty) \to [0,\infty) $ satisfying   \eqref{eq:vbound},  one has
\begin{equation}\label{eq:shiftbound}
  0 \leq \tr_{j,l} \left( \left[C[v]+\mu \right]_- - \left[ S[v] + \mu \right]_-
    \right) \leq C  \frac{\kappa^{4}}{(l+\tfrac12)^2 } .
  \end{equation}
\end{theorem}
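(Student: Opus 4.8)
The plan is to dispose of the lower bound at once and to split the upper bound into the strictly subcritical sectors $l\ge1$ and the delicate sector $l=0$, where $\kappa$ may reach the critical value $\kappa^C$. For the \emph{lower bound}, the pointwise inequality $\sqrt{\bp^2+1}-1\le\tfrac12\bp^2$ gives $C[v]\le S[v]$, hence $C[v]+\mu\le S[v]+\mu$; since $\tr A_-=-\inf\{\tr(AP):P=P^2=P^*\}$, $A\le B$ forces $\tr A_-\ge\tr B_-$, which is the left inequality of \eqref{eq:shiftbound}. For the upper bound the starting point is the algebraic identity
$$
S[v]-C[v]=\tfrac12\bp^2-\bigl(\sqrt{\bp^2+1}-1\bigr)=\tfrac12\bigl(\sqrt{\bp^2+1}-1\bigr)^2\ \ge\ 0 ,
$$
together with the fact that for $A\le B$, $W:=B-A\ge0$, one has $\tr[A+\mu]_--\tr[B+\mu]_-\le\tr\bigl(W\,\chi_{(-\infty,-\mu)}(A)\bigr)$ — use $\chi_{(-\infty,-\mu)}(A)$ as the optimal projection in the variational formula for $[A+\mu]_-$ and as a trial projection in the one for $[B+\mu]_-$ — and since $\mu\ge0$, $W\ge0$, one may replace $\chi_{(-\infty,-\mu)}(A)$ by $\chi_{(-\infty,0)}(A)$. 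On $\gH_{j,l}$ this reduces the upper bound in \eqref{eq:shiftbound} to an estimate on $\tr_{j,l}\bigl((\sqrt{\bp^2+1}-1)^2\chi_{(-\infty,0)}(C[v])\bigr)$.

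In the sectors $l\ge1$ one has $\kappa\le\kappa^B<\pi/2=\kappa^C_1\le\kappa^C_l$, so the sector is strictly subcritical, the negative eigenfunctions $\psi_n$ of $C[v]$ in $\gH_{j,l}$ (eigenvalues $\lambda_n<0$) are regular at the origin, and the quantity above is finite. Expanding the square in the eigenvalue equation $(\sqrt{\bp^2+1}-1)\psi_n=\lambda_n\psi_n+v\psi_n$ and using $v\le\kappa|\bx|^{-1}$, $\lambda_n<0$, $v\ge0$ gives $\langle\psi_n,(\sqrt{\bp^2+1}-1)^2\psi_n\rangle\le\kappa^2\langle\psi_n,|\bx|^{-2}\psi_n\rangle$; combining this with Hardy's inequality $|\bx|^{-2}\le(l+\tfrac12)^{-2}\bp^2$ on the sector and the decomposition $\bp^2=(\sqrt{\bp^2+1}-1)^2+2(\sqrt{\bp^2+1}-1)$ — the coefficient $(l+\tfrac12)^{-2}\kappa^2<1$ for $l\ge1$, so one term is absorbed — reduces matters to bounding $\tr_{j,l}\bigl(|\bx|^{-1}\chi_{(-\infty,0)}(C[v])\bigr)$ by $C\kappa$. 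This last bound follows by noting that the quantity equals $-\tfrac{d}{d\epsilon}\big|_{\epsilon=0^+}\tr_{j,l}\bigl[C[v]-\epsilon|\bx|^{-1}\bigr]_-$, that $\epsilon\mapsto\tr_{j,l}[C[v]-\epsilon|\bx|^{-1}]_-$ is convex, and that (since $v(r)\le\kappa r^{-1}$ forces $C[v]-\epsilon|\bx|^{-1}\ge c_l(\kappa+\epsilon)$) it is $\le C(\kappa+\epsilon)^2$ by Theorem~\ref{thm:evhydc} for $\epsilon$ up to a fixed multiple of $\kappa$, so its slope at $0$ is $O(\kappa)$; here one uses $\kappa^B<\kappa^C_1$ to have room. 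Assembling these estimates gives the right-hand side of \eqref{eq:shiftbound} for $l\ge1$.

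In the sector $l=0$ the coupling $\kappa$ may equal $\kappa^C$, the eigenfunctions of the (nearly) critical operator $C[v]$ are only mildly singular at the origin, and the reduction above becomes too lossy — its right-hand side can even be infinite. Here one uses the layer-cake representation
$$
\tr_{1/2,0}\bigl([C[v]+\mu]_--[S[v]+\mu]_-\bigr)=\int_\mu^\infty\Bigl(N_{1/2,0}(-t,C[v])-N_{1/2,0}(-t,S[v])\Bigr)\,dt ,
$$
with $N_{1/2,0}(-t,\cdot)$ the number of eigenvalues below $-t$. By $C[v]\le S[v]$ the integrand is non-negative and, by interlacing, equals $\#\{n:\lambda_n(C[v])<-t\le\lambda_n(S[v])\}$; Theorem~\ref{thm:evhydc} forces it to vanish for $t\ge C\kappa^2$, so the integral effectively runs over a $t$-interval of length $O(\kappa^2)$, and one shows the integrand is $O(\kappa^2)$ there, which produces the required bound $O(\kappa^4)=O\bigl(\kappa^4(0+\tfrac12)^{-2}\bigr)$. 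The integrand estimate combines the two-sided eigenvalue bounds of Theorem~\ref{thm:evhydc} (and the explicit Schr\"odinger eigenvalues) with a bound on the relativistic shift $\lambda_n(S[v])-\lambda_n(C[v])$; in the regime where a naive version of this would appeal to eigenfunctions of the (nearly) critical operator, one instead uses the $L^q$-eigenfunction bounds of Subsection~\ref{sec:sobolev} — in their Chandrasekhar version, cf.\ \cite{Franketal2008H} — to make everything uniform up to $\kappa=\kappa^C$. This uniformity in the sector $l=0$, where the eigenfunctions cease to be bounded at the origin, is the technical heart and the reason one must sharpen \cite[Thm. 2.1]{Franketal2008}; once Theorem~\ref{thm:evhydc} is available, the estimate in the sectors $l\ge1$ is essentially routine.
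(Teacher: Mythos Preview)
Your lower bound and the reduction to $\tr_{j,l}\bigl(C_0^2\,\chi_{(-\infty,0)}(C[v])\bigr)$ with $C_0=\sqrt{\bp^2+1}-1$ are correct and match the paper. Your treatment of $l\ge1$ is also essentially right, though it differs from the paper: you feed $\tr_{j,l}(C_0\,\chi)$ back into Theorem~\ref{thm:evhydc} via a convexity/derivative trick, whereas the paper bounds it directly using an angular-momentum barrier inequality $C_0\ge 2\kappa r^{-1}\chi_{\{r\le R_l(\kappa)\}}$ together with Daubechies' inequality. Your route is softer but needs Theorem~\ref{thm:evhydc} for $c_l(\kappa)$ with $\kappa$ slightly beyond $\kappa^B$ when $l=1$; this is easily obtained by shrinking $\delta$ in that proof, so the gap is minor.

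The $l=0$ argument, however, does not work as written. Your key claim is that the integrand
\[
N_{1/2,0}(-t,C[v])-N_{1/2,0}(-t,S[v])=\#\{n:\lambda_n(C[v])<-t\le\lambda_n(S[v])\}
\]
is $O(\kappa^2)$ uniformly in $t\le C\kappa^2$. But already for $v=\kappa r^{-1}$ the two-sided bounds of Theorem~\ref{thm:evhydc} only pin this set inside $[\kappa/\sqrt{2t},\sqrt{C}\kappa/\sqrt{t}\,]$, giving an integrand of order $\kappa/\sqrt{t}$; integrating over $[0,C\kappa^2]$ produces $O(\kappa^2)$, not $O(\kappa^4)$. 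A pointwise bound of the strength you need would require control on eigenvalue spacings for \emph{general} $v$ satisfying only $v(r)\le\kappa r^{-1}$, and the $L^q$-eigenfunction bounds you invoke do not supply this. So the layer-cake route, with the stated ingredients, falls short by a factor $\kappa^2$.

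The paper's resolution is much simpler and you have overlooked it: for any \emph{fixed} $l$ (in particular $l=0$), both traces in \eqref{eq:shiftbound} are bounded by absolute constants, uniformly over all admissible $\kappa$, $\mu\ge0$, and $v$ --- this follows from the variational principle and Theorem~\ref{thm:evhydc}. Hence for $\kappa$ bounded away from $0$ the estimate $C\kappa^4(l+\tfrac12)^{-2}$ holds trivially after enlarging $C$. One is therefore free to assume $\kappa\le\tfrac1{\sqrt8}(l+\tfrac12)$, which for $l=0$ keeps $\kappa$ well below $\kappa^C$; under this assumption the absorption step $\kappa^2(l+\tfrac12)^{-2}<1$ is available and the \emph{same} Hardy-based argument you gave for $l\ge1$ goes through. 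No separate critical-$\kappa$ analysis is needed for $l=0$.
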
  

One of the key points to be appreciated in the above theorems is an
effective cancellation in the differences in \eqref{eq:shiftbound} and
\eqref{eq:t3}. This can already be seen for Coulomb potentials
$v(r)=\kappa r^{-1}$, where
\begin{equation*}
  \tr_{j,l} \left[S_\kappa\right]_- = (2j+1) \frac{\kappa^2} 2
  \sum_{n=1}^\infty \frac1{(n+l)^2} ,
\end{equation*}
which does not decay at all as $j\to\infty$. Moreover, for fixed $j$
and $l$ the above trace vanishes only like $\kappa^2$ as $\kappa\to
0$.  It is rather remarkable that such cancellations occur uniformly
for all attractive potential $ v $ satisfying~\eqref{eq:vbound}.

The following proof of Theorem~\ref{prop:shiftbound} follows the ideas
of \cite[Thm.~2]{Franketal2008}.  It is not only included to render
the paper self-contained, but also to establish the above mentioned
improvement, which are important for the present paper.

\begin{proof}[Proof of Theorem~\ref{prop:shiftbound}]
  We note that both traces $\tr_{j,l} \left[C[v]+\mu \right]_-$ and
  $\tr_{j,l} \left[S[v]+\mu \right]_-$ are finite. This follows by the
  variational principle from the case $v(r)=\kappa r^{-1}$, cf.
  Theorem~\ref{thm:evhydc} in the Chandrasekhar case. Thus, for
  $l<3$ say, it is enough to show the claim for $\kappa$ in a
  neighborhood of $0$. More precisely, we can assume $\kappa\leq
  \tfrac1{\sqrt8} (l+\tfrac12)$ which covers all $\kappa\leq\kappa^B$
  for $l\geq3$.

  Moreover, by an approximation argument it is sufficient to consider
  $ \mu > 0 $ and bounded potentials $ v $, cf.~\cite{Franketal2008}.

  We denote by $ d_{j,l}$ the orthogonal projection onto the
  eigenspace of $C[v]$ corresponding to angular momenta $j, l$ and
  eigenvalues less or equal than $-\mu$.  The identity
  \begin{equation}
    \label{eq:banff}
    \tfrac12 p^2 = C_0 + \tfrac12 C_0^2 
  \end{equation}
  and the variational principle (cf.~\cite[Thm.~12.1]{LiebLoss2001}) imply
  \begin{equation}\label{eq:varprinc}
    0 \leq  2 \tr_{j,l} \left( \left[C[v] + \mu \right]_- - \left[S[v] + \mu \right]_- \right)
    \leq \tr \left[ C_0^2  d_{j,l} \right] .
  \end{equation}
  Using the eigenvalue equation and the bound \eqref{eq:vbound} on the
  potential we estimate this term further as follows.
\begin{equation*}
  \label{eq:4.1}
 \tr \left[ C_0^2 d_{j,l} \right] 
 \leq \tr_{j,l}  \left[ C[v] \right]_-^2 + \tr \left[ v^2  d_{j,l} \right]
 \leq \tr_{j,l} \left[ C_\kappa \right]_-^2 
 + \kappa^2 \tr\left[ |\bx|^{-2} d_{j,l} \right] .
\end{equation*}
Using Hardy's inequality and \eqref{eq:banff} 
\begin{equation*}
  \tr\left[ |\bx|^{-2} d_{j,l} \right]
   \leq 
   \left(l+\tfrac12\right)^{-2}  \tr \left[ \bp^2 d_{j,l} \right]
    =  \left(l+ \tfrac12\right)^{-2} \left(\tr \left[ C_0^2 d_{j,l} \right]
   + 2 \tr \left[ C_0 d_{j,l} \right]\right) .
\end{equation*}
Since $\kappa <  l + \tfrac12  $, the last two
estimates may be summarized as
\begin{equation}
  \label{eq:eiertanz}
  \tr \left[ C_0^2 d_{j,l} \right]  \leq \left( 1 - \frac{\kappa^2}{(l+\tfrac12)^2} \right)^{-1}
  \left(   \tr_{j,l} \left[ C_\kappa \right]_-^2 + \frac{2 \kappa^2}{ (l+\tfrac12)^2}  \tr\left[ C_0 d_{j,l} \right]\right) .
\end{equation}
We shall estimate the two terms on the right hand side separately.
From \cite[Lemma~3]{Franketal2008} we recall the following angular
momentum barrier inequality on $\gH_{j,l}$,
\begin{equation}\label{eq:ambi}
C_0 \geq 2\kappa r^{-1} \chi_{\{r\leq R_l(\kappa)\}},
\qquad
R_l(\kappa)= \tfrac1{8\kappa}(l+\tfrac12)^2 .
\end{equation}
(Here we use that $\kappa\leq \tfrac1{\sqrt8} (l+\tfrac12)$.) This implies
\begin{align*}
  \tr \left[ C_0 d_{j,l} \right] & \leq \kappa \tr \left[ |\bx|^{-1}
    d_{j,l} \right]
  \leq \frac12 \tr \left[ C_0 d_{j,l} \right] + \frac14 \tr\left[ w_l d_{j,l} \right] \notag \\
  & = \frac34 \tr\left[ C_0 d_{j,l} \right] - \frac14 \tr\left[ C[w_l]
    d_{j,l} \right]
\end{align*}
where $w_l(r) := 4 \kappa r^{-1} \chi_{\{r\geq R_l(\kappa)\}} $.
Hence, using the variational principle followed by Daubechies'
inequality \cite{Daubechies1983} (cf. also
\cite[Prop.~1]{Franketal2008})
\begin{align}
  \label{eq:daub1}
  \tr\left[ C_0 d_{j,l} \right] \leq  \tr_{j,l} \left[ C[w_l] \right]_-  & \leq \const (2l+1) \left( \int_0^\infty w_l(r)^{3/2} \,dr + \int_0^\infty w_l(r)^{2} \,dr \right) \notag \\
  & \leq \const \kappa^2.
\end{align}
In order to estimate the first term on the right hand side of
\eqref{eq:eiertanz} we use \eqref{eq:ambi} to obtain on $\gH_{j,l}$
$$
C_\kappa \geq \tfrac12 C_0 - \kappa r^{-1} \chi_{\{r\geq
  R_l(\kappa)\}} \geq \tfrac12 C[w_l] .
$$
with $w_l$ as above. Hence again by Daubechies' inequality
$$
\tr_{j,l} \left[ C_\kappa \right]_-^2 \leq \const (2l+1) \left(
  \int_0^\infty \!\! w_l(r)^{5/2} \,dr + \int_0^\infty \!\! w_l(r)^{3}
  \,dr \right) \leq \const \kappa^4 (l+\tfrac12)^{-2}.
$$
Combing this with \eqref{eq:daub1}, \eqref{eq:eiertanz}, and
\eqref{eq:varprinc} completes the proof.
\end{proof}

Having finished the proof of Theorem \ref{prop:shiftbound} it is easy
to give the
\begin{proof}[Proof of Theorem \ref{t:3}]
  Since the trace $\tr_{j} \left[B[v]+\mu \right]_-$ is finite
  according to Theorem~\ref{thm:evhydbr} we may assume that either
  $\kappa\leq \kappa^C$ and $j=1/2$, or else that $j\geq 3/2$. In this
  case, the claim essential boils down to
  Theorem~\ref{prop:shiftbound}. To see this, we note the identity
  \begin{equation}\label{eq:UHU}
    B[v]=\cU(C[v])=\tfrac12\left(U(\bp)^*C[v]U(\bp)+U(\bp)C[v]U(\bp)^* \right)
  \end{equation} 
  involving the unitary operator $U(\bp):=\Phi_0(\bp)+i\Phi_1(\bp)$
  (see also \eqref{eq:12a}).  Equality \eqref{eq:UHU} as well as the
  unitarity of $U(\bp)$ are easily derived from the fact that
  $\Phi_0^2(\bp)+\Phi_1^2(\bp)=1$.

  Even if $v$ satisfies \eqref{eq:vbound} only with a
  $\kappa^C<\kappa\leq\kappa^B$, identity \eqref{eq:UHU} remains valid
  on all subspaces $\gH_j$ with $j\geq 3/2$.  Hence by the concavity
  of the sum of negative eigenvalues \cite{Thirring1979} of $ B[v]+\mu
  $ one has for any $ \mu \geq 0 $
  \begin{align}
    \tr_j \left[B[v]+\mu \right]_- & \leq \frac{1}{2} \tr_j
    \left[U^*(\bp) C[v]U(\bp) +\mu \right]_- +
    \frac{1}{2} \tr_j  \left[U(\bp) C[v]U^*(\bp) +\mu \right]_- \notag \\
    & = \tr_j \left[C[v] +\mu \right]_- . \label{eq:sumfin}
  \end{align}
  By \eqref{eq:shiftbound} the trace in \eqref{eq:t3} is thus bounded
  from above by
  \begin{equation*}
    \tr_j \left( \left[C[v] +\mu \right]_- - \left[S[v] +\mu \right]_-
    \right) \leq \const \kappa^{4} j^{-2},
  \end{equation*}
  as claimed.
\end{proof}

%%%%%%%%%%%%%%%%%%%%%%%%%%%%%%%%%%%%%%%%%%%%%%%%%%%%%%%%%%%%%%%%%%%%%%%%%%

\subsection{Properties of the spectral shift}

In this subsection we discuss some properties of the spectral shift
$s(\kappa)$ defined in \eqref{eq:scott}.

\begin{lemma}[\textbf{Properties of the spectral shift}]
  The spectral shift $s$ is a continuous, non-negative function on $
  (0,\kappa^B]$ satisfying $ s(\kappa) = \cO(\kappa^2) $ as
  $\kappa\downarrow0$.
\end{lemma}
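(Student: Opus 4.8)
The plan is to exploit the angular-momentum decomposition \eqref{eq:decomp} together with the rapid decay in $j$ supplied by Theorem~\ref{t:3}. Write, in the notation of \eqref{eq:scott} and with $\tr_j$ as in Section~\ref{sec:BC},
\[
 s(\kappa)=\kappa^{-2}\sum_{j\in\nz_0+\frac12}t_j(\kappa),\qquad t_j(\kappa):=\tr_j\bigl[(B_\kappa)_- - (S_\kappa)_-\bigr].
\]
Each summand is non-negative and finite: this is exactly \eqref{eq:t3a} with $\mu=0$, summed over $l=j\pm\tfrac12$, and it rests on the eigenvalue comparison \eqref{eq:evhydbr}. To bound the summands I would apply Theorem~\ref{t:3} to the Coulomb potential $v(r)=\kappa\,r^{-1}$ and let $\mu\downarrow0$: on each $\gH_j$ the two traces $\tr_j[(B_\kappa+\mu)_-]$ and $\tr_j[(S_\kappa+\mu)_-]$ are individually finite (by Theorem~\ref{thm:evhydbr} and the explicit Schr\"odinger eigenvalues) and increase to their values at $\mu=0$ by monotone convergence, so that
\[
 0\le t_j(\kappa)\le C\,\kappa^4\,j^{-2},\qquad \kappa\le\kappa^B .
\]
Since $\sum_{j\in\nz_0+1/2}j^{-2}<\infty$, summing over $j$ shows that the trace in \eqref{eq:scott} is an absolutely convergent series; in particular $s$ is well defined, non-negative, and $s(\kappa)\le C\kappa^2\sum_j j^{-2}=\cO(\kappa^2)$, which is \eqref{eq:szero}.

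For continuity on $(0,\kappa^B]$, the uniform bound $t_j(\kappa)\le C(\kappa^B)^4 j^{-2}$ makes the series $\sum_j t_j$ converge uniformly there, so by the Weierstrass $M$-test it suffices to prove that each $t_j$ is continuous. The Schr\"odinger part is the explicit quadratic function $\tr_j[(S_\kappa)_-]=\kappa^2\sigma_j$ with $\sigma_j:=\tfrac12\sum_{l=j\pm1/2}(2j+1)\sum_{n\ge1}(n+l)^{-2}$, hence smooth in $\kappa$. For the Brown-Ravenhall part $\tr_j[(B_\kappa)_-]=\sum_{l=j\pm1/2}(2j+1)\sum_{n\ge1}|\lambda_n(b_{j,l}(\kappa))|$, the bound $|\lambda_n(b_{j,l}(\kappa))|\le C\kappa^2(n+l)^{-2}$ from \eqref{eq:evhydbr} makes the $n$-series converge uniformly for $\kappa\le\kappa^B$, so it remains only to show that each map $\kappa\mapsto\lambda_n(b_{j,l}(\kappa))$ is continuous up to and including $\kappa=\kappa^B$.

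This last point will be the main obstacle. On any compact subinterval of $(0,\kappa^B)$ the reduced twisted Coulomb form is relatively form-bounded with respect to $b_{j,l}(0)$ with relative bound $\kappa/\kappa^B<1$, so $\kappa\mapsto b_{j,l}(\kappa)$ is an analytic family of type~(B) and its eigenvalues are in fact real-analytic in $\kappa$; the difficulty is purely the borderline behaviour at $\kappa=\kappa^B$, where this bound degenerates to $1$. There I would argue by monotonicity: the quadratic forms $f\mapsto\langle f,b_{j,l}(\kappa)f\rangle$ are affine and non-increasing in $\kappa$ on the (common) form domain of $b_{j,l}(0)$, and as $\kappa\uparrow\kappa^B$ they decrease pointwise to the critical form; feeding a fixed finite-dimensional, nearly optimal subspace of that domain into the min--max principle — on which the affine family converges uniformly — gives $\lambda_n(b_{j,l}(\kappa))\downarrow\lambda_n(b_{j,l}(\kappa^B))$, where one crucially uses the a priori bound $\lambda_n(b_{j,l}(\kappa))\le-\kappa^2/(2(n+l)^2)$ of Theorem~\ref{thm:evhydbr} to know that all these numbers stay uniformly separated from the essential spectrum $[0,\infty)$ and hence remain genuine eigenvalues. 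Combining, each $t_j$ is continuous, so $\sum_j t_j$ is continuous, and therefore $s$ is continuous on $(0,\kappa^B]$. The delicate step is, as anticipated, the critical coupling: one must replace the usual analytic perturbation machinery by monotone form convergence together with the two-sided eigenvalue bounds of Theorem~\ref{thm:evhydbr}.
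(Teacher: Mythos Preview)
Your argument is correct and follows essentially the same route as the paper's own proof: decompose $s$ over the angular-momentum channels, use \eqref{eq:t3a} for non-negativity and Theorem~\ref{t:3} (with $v(r)=\kappa r^{-1}$) for the bound $t_j(\kappa)\le C\kappa^4 j^{-2}$, sum to obtain well-definedness and the $\cO(\kappa^2)$ decay, and then invoke Theorem~\ref{thm:evhydbr} together with the Weierstra{\ss} $M$-test so that continuity reduces to that of the individual eigenvalues via the min--max principle. If anything you are more explicit than the paper on two points the authors leave implicit: the passage $\mu\downarrow0$ in Theorem~\ref{t:3}, and the justification of eigenvalue continuity at the critical coupling $\kappa=\kappa^B$ by monotone form convergence combined with the two-sided bounds of Theorem~\ref{thm:evhydbr}.
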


\begin{proof}
  According to \eqref{eq:t3a} and Theorem \ref{t:3} one has
  $$
  0 \leq s_{j}(\kappa) := \kappa^{-2} \tr_{j}\left(
    \left[B_\kappa\right]_- - \left[S_\kappa\right]_- \right) \leq
  \const \kappa^{2} j^{-2}.
  $$
  Therefore the sum $s(\kappa)=\sum_{j} s_{j}(\kappa)$ converges,
  is non-negative and satisfies the claimed asymptotic estimate as
  $\kappa\downarrow 0$. By the mini-max principle each eigenvalue
  depends continuously on $\kappa$.  Thus the continuity of their sum
  follows from the estimates in Theorem \ref{thm:evhydbr} and the
  Weierstra{\ss} criterion for uniform convergence.
\end{proof}

%%%%%%%%%%%%%%%%%%%%%%%%%%%%%%%%%%%%%%%%%%%%%%%%%%%%%%%%%%%%%%%%%%%%%%%%%
%%%%%%%%%%%%%%%%%%%%%%%%%%%%%%%%%%%%%%%%%%%%%%%%%%%%%%%%%%%%%%%%%%%%%%%%%

%%%%%%%%%%%%%%%%%%%%%%%%%%%%%%%%%%%%%%%%%%%%%%%%%%%%%%%%%%%%%%%%%%%%%%%%%%%%%%
%%%%%%%%%%%%%%%%%%%%%%%%%%%%%%%%%%%%%%%%%%%%%%%%%%%%%%%%%%%%%%%%%%%%%%%%%%%%%%

\section{Proof of the Scott correction} 
\label{sec:2}

The strategy of the proof of the main results is similar to the one
used for the Chandrasekhar operator \cite{Franketal2008}. We employ
the Schr\"o\-din\-ger operator as a regularization for the
relativistic problem, i.e., we will use it to eliminate the main
contribution to the energy (the Thomas-Fermi energy) and focus only on
the energy shift of the low lying states. For these the
electron-electron interaction plays no role and the unscreened problem
remains.  We define
$$
E^S(Z) := \inf\{ \cE^S_Z(\psi) \, | \, \psi \in \mathfrak{Q}^S_Z , \|
\psi \| = 1 \}
$$ 
to be the ground state energy in the
Schr\"odinger case,
\begin{align*}
  \cE^S_N(\psi) :=  \left\langle \psi, \left[\sum_{\nu=1}^N \left(\frac 12 \bp_\nu^2
        - Z |\bx_\nu|^{-1}\right) + \sum_{1\leq\mu<\nu\leq
        N}|\bx_\mu-\bx_\nu|^{-1} \right] \psi \right\rangle .
\end{align*}
It is defined on $ \mathfrak{Q}^S_N := \gH^S_N \cap
\mathfrak{S}(\mathbb{R}^{3N}, \mathbb{C}^{2^N} ) $,
where $ \gH^S_N := \bigwedge_{\nu=1}^N
\gH$ is the Hilbert space of anti-symmetric two-spinors.
We recall that we suppose neutrality, i.e., $N=Z$.

The asymptotics of the Schr\"odinger ground-state energy up to Scott correction reads
\cite{SiedentopWeikard1987O}
\begin{equation}\label{eq:SiedentopWeikard87}
  E^S(Z) = E_{\rm TF}(Z) + \tfrac{1}{2} \, Z^2 + \cO(Z^{47/24}).
\end{equation}
For our purpose this remainder estimate is sufficient. However, even
the coefficient of the $Z^{5/3}$-term in the asymptotic expansion is
known
\cite{FeffermanSeco1989,FeffermanSeco1990,FeffermanSeco1990O,FeffermanSeco1992,FeffermanSeco1993,FeffermanSeco1994,FeffermanSeco1994T,FeffermanSeco1994Th,FeffermanSeco1995}.

Our main result, Theorem~\ref{t2}, will follow from
\eqref{eq:SiedentopWeikard87} if we can show that in the limit $ Z \to
\infty $ the difference of the Schr\"odinger and Brown-Ravenhall
ground-state energy satisfies
\begin{equation}
  \label{reiheB}
  E^S(Z)-E^B_c(Z)=  s(Z/c) \, Z^2 + \cO(Z^{47/24})
\end{equation}
uniformly in $\kappa=Z/c\in (0,\kappa^B]$. We break the proof of this
assertion into an upper and lower bound.

%%%%%%%%%%%%%%%%%%%%%%%%%%%%%%%%%%%%%%%%%%%%%%%%%%%%%%%%%%%%%%%%%%%%%%%%

\subsection{Upper bound on the energy difference}

The Thomas-Fermi functional \eqref{eq:minimum} has a unique minimizer
$\varrho_Z$, the Thomas-Fermi density (Lieb and Simon
\cite{LiebSimon1977}). It scales as $\varrho_Z(\bx):=
Z^2\varrho_1(Z^{1/3}\bx)$. We set
\begin{equation}
  \label{eq:tf}
  \phi_\mathrm{TF}(\bx):= Z|\bx|^{-1} - \int_{\rz^3} {\varrho_Z(\by)\over|\bx-\by|}\,\rd\by,
\end{equation}
the Thomas-Fermi potential, and 
$$
L_\TF(\bx) := \int_{|\bx-\by|<R_Z(\bx)}{\varrho_Z(\by)\over|\bx-\by|}\,\rd\by,
$$
the exchange hole potential. Here $R_Z(\bx)$ is defined as the
(unique) minimal radius for which $ \int_{|\bx-\by|\leq R_Z(\bx)}
\varrho_Z(\by) \rd \by = \tfrac 12 $. The corresponding one-particle
operators -- self-adjointly realized in $\gH$ -- are
\begin{align*}
  S_\mathrm{TF} = S[\phi_\TF +L_\TF] ,
  \qquad
  B_\mathrm{TF} = B_c[\phi_\TF +L_\TF] .
\end{align*}
Here we use a notation analogous to that in \eqref{eq:BR}.

We shall express the many-particle ground-state energies $ E^S(Z) $
and $ E^B_c(Z) $ in terms of quantities involving the above
one-particle operators. In the Schr\"odinger case, this was
achieved in \cite{SiedentopWeikard1987O,SiedentopWeikard1989} in terms of
the Thomas-Fermi potential $\phi_\TF$. Our point in 
the proof of the following proposition is to replace $ \phi_\TF $ by the
exchange hole reduced potential $ \phi_\TF+L_\TF $.

\begin{proposition}\label{prop:schrupper}
Let $ J := \big[ Z^{1/9} \big] + \tfrac12 $. Then, as $ Z \to \infty $,
\begin{equation}\label{eq:schrupper}
  E^S(Z) = -\sum_{j=1/2}^{J-1} \tr_j \left[ S[Z|\bx|^{-1}] \right]_- 
  - \sum_{j=J}^{Z+1/2} \tr_j \left[ S_{\mathrm{TF}}\right]_- 
  - D(\varrho_Z,\varrho_Z) + O(Z^{47/24}).
\end{equation}
\end{proposition}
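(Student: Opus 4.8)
The plan is to compare $E^S(Z)$ with the energy of a suitable one-particle effective operator built from the Thomas--Fermi potential, and to use the known Schrödinger Scott asymptotics \eqref{eq:SiedentopWeikard87} only implicitly, through the method of proof in \cite{SiedentopWeikard1987O,SiedentopWeikard1989} rather than as a black box. First I would recall the standard correlation inequality / Lieb--Oxford-type splitting of the many-particle Coulomb repulsion: for a Slater determinant (or more generally for the choice of trial state) one bounds $\sum_{\mu<\nu}|\bx_\mu-\bx_\nu|^{-1}$ from above by the mean-field term $D(\varrho_Z,\varrho_Z) + \int \varrho_Z \phi_\mathrm{TF} + (\text{self-energy})$ minus the exchange hole term $\sum_\nu L_\TF(\bx_\nu)$, together with the Thomas--Fermi variational identity relating $\int \varrho_Z\phi_\TF$, $D(\varrho_Z,\varrho_Z)$ and $-\tfrac35\gamma_\TF\int\varrho_Z^{5/3}$. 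This reduces the many-body upper bound to the statement that the sum of the negative eigenvalues of the one-particle operator $S[\phi_\TF+L_\TF] = S_\TF$ approximates the corresponding Thomas--Fermi sum up to $O(Z^{47/24})$, which is exactly the content of the Schrödinger Scott analysis — but now I additionally have to control the change from $\phi_\TF$ to $\phi_\TF+L_\TF$.

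Next I would split the angular-momentum channels at $J=[Z^{1/9}]+\tfrac12$. For the low angular momenta $j<J$ the effective potential is, on the scale $Z^{-1}$ relevant for the innermost electrons, essentially the bare Coulomb potential $Z|\bx|^{-1}$: the screening correction $\phi_\TF - Z|\bx|^{-1} = -\int\varrho_Z(\by)|\bx-\by|^{-1}\rd\by$ is bounded and smooth near the origin of order $Z^{4/3}$, and the exchange-hole potential $L_\TF$ is likewise controlled (with $R_Z(\bx)\gtrsim Z^{-2/3}$ away from the nucleus and the total charge in a ball of radius $r$ near the nucleus being $O(Z^2 r^3)$); a first-order perturbation estimate $\big|\tr_j[S_\TF]_- - \tr_j[S[Z|\bx|^{-1}]]_-\big| \le \tr_j([S_\TF]_- ) \cdot(\text{relative error})$ plus the hydrogenic eigenvalue sums $\tr_j[S[Z|\bx|^{-1}]]_- = (2j+1)\tfrac{Z^2}{2}\sum_n (n+j\mp\tfrac12)^{-2} \sim Z^2/j$ show that the accumulated error over $j<J\sim Z^{1/9}$ is well within $O(Z^{47/24})$. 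For the high angular momenta $j\ge J$ one uses the full $S_\TF$ and here the analysis is exactly the semiclassical one of \cite{SiedentopWeikard1987O}: the sum $\sum_{j\ge J}\tr_j[S_\TF]_-$ differs from the phase-space (Weyl) integral, i.e. from the Thomas--Fermi energy minus the $Z^2/2$ Scott term and minus the low-$j$ contributions, by $O(Z^{47/24})$; the point of cutting at $J\sim Z^{1/9}$ is that the semiclassical remainder is controlled only for $j$ large enough that the angular-momentum barrier dominates the Coulomb singularity, while the low-$j$ contributions are handled exactly by hydrogenic formulas.

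Combining the two regimes and re-adding $-D(\varrho_Z,\varrho_Z)$ yields \eqref{eq:schrupper}; one also has to check that the lower bound matches, but for the Schrödinger operator the lower bound is standard (the one-particle reduction via the ground state and the fact that the Coulomb repulsion is handled by the same correlation estimate in the other direction, using e.g. the Lieb--Oxford inequality, costs only $O(Z^{5/3})$), so \eqref{eq:schrupper} is in fact an equality up to the stated error. The main obstacle I expect is the bookkeeping at the cutoff $j=J$: one must verify that the perturbative replacement of $\phi_\TF+L_\TF$ by the bare Coulomb potential in the low-$j$ channels, and the semiclassical estimate in the high-$j$ channels, produce errors that are \emph{simultaneously} $O(Z^{47/24})$ with the \emph{same} choice $J\sim Z^{1/9}$ — i.e. that the exponent $1/9$ is consistent with both the $O(Z^2/J)$ truncation error from the tail of the hydrogenic sums and the $J$-dependent semiclassical remainder; this is precisely where the somewhat unusual exponent $47/24$ enters and where one must lean most heavily on the detailed estimates of \cite{SiedentopWeikard1987O,SiedentopWeikard1989}, adapting them to accommodate the extra exchange-hole potential $L_\TF$.
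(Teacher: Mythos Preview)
Your overall strategy --- correlation inequality for one bound, the Siedentop--Weikard Scott machinery for the other, splitting angular momenta at $J\sim Z^{1/9}$ --- is the same as the paper's. However, two points should be corrected or sharpened.

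First, the exchange-hole correlation inequality of \cite{Mancasetal2004} yields a \emph{lower} bound on $\sum_{\mu<\nu}|\bx_\mu-\bx_\nu|^{-1}$, hence a lower bound on $E^S(Z)$; you have the direction reversed. The upper bound on $E^S(Z)$ is taken directly from \cite{SiedentopWeikard1987O,SiedentopWeikard1989} as a black box, already in the form
\[
E^S(Z)\le -\sum_{j<J}\tr_j[S[Z|\bx|^{-1}]]_- - \sum_{j\ge J}\tr_j[S[\phi_\TF]]_- - D(\varrho_Z,\varrho_Z)+\const Z^{47/24},
\]
with the bare Thomas--Fermi potential $\phi_\TF$ (no $L_\TF$) in the large-$j$ sum.

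Second, your proposed first-order perturbation argument for low $j$ is unnecessary and not obviously controllable. The paper instead uses the trivial pointwise monotonicity $\phi_\TF+L_\TF\le Z|\bx|^{-1}$ (both screening and the exchange hole reduce the attraction), which gives $\tr_j[S_\TF]_-\le\tr_j[S[Z|\bx|^{-1}]]_-$ for every $j$ with no error term. The only genuine technical work --- which you correctly flag as the crux --- is to show that for \emph{high} $j$ one has
\[
-\sum_{j=J}^{Z+1/2}\tr_j[S_\TF]_- \ge -\sum_{j\ge J}\tr_j[S[\phi_\TF]]_- - \const Z^{5/3},
\]
i.e.\ that inserting $L_\TF$ costs only $O(Z^{5/3})$. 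The paper does this by decomposing $L_\TF=L_<+L_>$ at a fixed radius $R$, borrowing a fraction $\varepsilon^2\sim Z^{-2/3}$ of the kinetic energy via the variational principle, and then applying the Lieb--Thirring inequality to the bounded piece $L_<$ (using $\int L_<^{5/2}\le\const Z^{2/3}$ from \cite{SiedentopWeikard1989}) and a hydrogenic comparison to the Coulomb-tailed piece $L_>\le\const|\bx|^{-1}$. Your proposal leaves this step entirely unspecified, and it does not follow from the semiclassical analysis of \cite{SiedentopWeikard1987O} alone, since $L_\TF$ was not present there.
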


Since $\phi_\TF+L_\TF$ has a Coulomb tail, the trace
$\tr_j\left[S_\TF\right]_-$ is finite for each $j$, but not summable
with respect to $j$. It is therefore essential to restrict the second
sum to a finite number of angular momenta. However, the value of the
cut-off, $j\leq Z+ 1/2$, is not chosen optimally here, since for our
argument it is largely arbitrary.

\begin{proof}[Proof of Proposition \ref{prop:schrupper}]
  According to the correlation inequality \cite{Mancasetal2004}
  \begin{equation*}
    E^S(Z) \geq - \sum_{j=1/2}^{Z+1/2} \tr_j \left[ S_{\mathrm{TF}} \right]_- - D(\varrho_Z,\varrho_Z) .
  \end{equation*}
  Note that the $Z$ electrons can certainly be accommodated in the
  first $Z$ angular momentum channels (which is a very crude bound).
  Estimating $\phi_\TF+L_\TF$ from above by the Coulomb potential for
  small angular momenta, we obtain
  \begin{equation}\label{eq:mancas}
    E^S(Z) \geq -\sum_{j=1/2}^{J-1} \tr_j \left[ S[Z|\bx|^{-1}]\right]_- 
    - \sum_{j=J}^{Z+1/2} \tr_j \left[ S_{\mathrm{TF}}\right]_- - D(\varrho_Z,\varrho_Z) .
  \end{equation}
  Moreover, see \cite{SiedentopWeikard1987O,SiedentopWeikard1989},
  \begin{equation*}
    E^S(Z) \leq -\sum_{j=1/2}^{J-1} \tr_j \left[ S[Z|\bx|^{-1}] \right]_- 
    - \sum_{j=J}^{\infty} \tr_j \left[ S[\phi_{\mathrm{TF}}]\right]_- 
    - D(\varrho_Z,\varrho_Z) + \const Z^{47/24} .
  \end{equation*}
  Hence it suffices to prove that
  \begin{equation}\label{eq:goal}
    - \sum_{j=J}^{Z+1/2} \tr_j \left[ S_{\mathrm{TF}}\right]_-
    \geq-\sum_{j=J}^\infty\tr_j\left[S[\phi_{\mathrm{TF}}]\right]_--\const Z^{5/3}
  \end{equation}
  (Note that the lower bound in \cite{Franketal2008} contains an error
  by estimating \cite[Equation (43)]{Franketal2008} to generously.
  Really, only the first $Z$ lowest negative eigenvalues need to occur
  on the right hand side instead of all. In particular, there will be
  never more than $Z$ total angular momentum channels occupied. This
  fact is taken into account here yielding a suitable lower bound. The
  problem in \cite{Franketal2008} can be circumvented in exactly the
  same way.)  We decompose $L_\TF=L_<+L_>$ where
  \begin{equation*}
    L_< = \chi_{\{|\bx|<R\}}L_\TF,
    \qquad L_> = \chi_{\{|\bx|\geq R\}}L_\TF,
  \end{equation*}
  with a constant $R$ (independent of $Z$) to be chosen below. For
  $\varepsilon>0$ to be specified later we estimate using the
  variational principle for sums of eigenvalues
  \begin{multline}\label{eq:epsklauen}
    \tr_j \left[ S_{\mathrm{TF}}\right]_-\\
    \leq  \tr_j(\tfrac12(1-2 \varepsilon^2)p^2 - \phi_\TF)_- 
     + \varepsilon^2 \tr_j(\tfrac12 p^2 - \varepsilon^{-2} L_<)_- +
    \varepsilon^2 \tr_j(\tfrac12 p^2 - \varepsilon^{-2} L_>)_- .
  \end{multline}
  By the subsequent lemma the first and main term is bounded according
  to
  \begin{align*}
    & \sum_{j=J}^{Z+1/2} \tr_j(\tfrac12(1-2 \varepsilon^2)p^2 -
    \phi_\TF)_- - \sum_{j=J}^\infty \tr_j(\tfrac12 p^2 - \phi_\TF)_-
    \\ & \leq \tr(\tfrac12(1-2 \varepsilon^2)p^2 - \phi_\TF)_- -
    \tr(\tfrac12 p^2 - \phi_\TF)_- \leq \const \varepsilon^2 Z^{7/3}.
  \end{align*}
  For the second term on the right side of \eqref{eq:epsklauen} we use
  the Lieb-Thirring inequality~\cite{LiebThirring1976} to obtain
  \begin{align*}
    \varepsilon^2 \sum_{j=J}^{Z+1/2} \tr_j(\tfrac12 p^2 -
    \varepsilon^{-2} L_<)_-
    & \leq \varepsilon^2 \tr(\tfrac12 p^2 - \varepsilon^{-2} L_<)_- \\
    & \leq \const \varepsilon^{-3} \int L_<(\bx)^{5/2} \,\rd\bx \leq
    \const \varepsilon^{-3} Z^{2/3} .
  \end{align*}
  In the last inequality we used a bound of Siedentop and Weikard
  \cite[Proof of Lemma~2]{SiedentopWeikard1989}. It is at this point
  that $R$ is chosen. The penultimate inequality in \cite[Proof of
  Lemma~2]{SiedentopWeikard1989} asserts after scaling that $L_>(\bx)
  \leq \const |\bx|^{-1}$.  Hence by comparison with the exact
  hydrogen solution
  \begin{align*}
    \varepsilon^2 \sum_{j=J}^{Z+1/2} \tr_j(\tfrac12 p^2 -
    \varepsilon^{-2} L_>)_-
    & \leq \varepsilon^2 \sum_{j=1/2}^{Z+1/2} \tr_j(\tfrac12 p^2 - \varepsilon^{-2} \const |\bx|^{-1})_- \\
    & = \const \varepsilon^{-2} \sum_{j=1/2}^{Z+1/2} \sum_{n=1}^\infty
    \frac{2j+1}{(n+j-1/2)^2} \leq \const \varepsilon^{-2} Z .
  \end{align*}
  Choosing $\varepsilon=Z^{-1/3}$ all the error terms are
  $\cO(Z^{5/3})$, proving \eqref{eq:goal}.
\end{proof}

In the previous proof we used

\begin{lemma}
  For all $0<\varepsilon\leq 1/2$, as $Z\to\infty$,
  \begin{equation}\label{eq:phitf}
    \tr(\tfrac12(1-\varepsilon^2)p^2 - \phi_\TF)_-
    \leq \tr(\tfrac12 p^2 - \phi_\TF)_- + \const \varepsilon^2 Z^{7/3}.
  \end{equation}
\end{lemma}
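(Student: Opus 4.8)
The plan is to view the left-hand side of \eqref{eq:phitf} as a convex, non-decreasing function of the strength of the added kinetic term, and to combine convexity with a uniform Lieb--Thirring bound that exploits the scaling of $\phi_\TF$. The only genuinely essential point will be getting the factor $\varepsilon^2$ (rather than a mere $O(1)$) in the error, and this is precisely what convexity buys.

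Concretely, I set $t:=\tfrac12\varepsilon^2$, so that $\tfrac12(1-\varepsilon^2)p^2=\tfrac12 p^2-t\,p^2$ and $t$ runs over $(0,\tfrac18]$ as $\varepsilon$ runs over $(0,\tfrac12]$, and put
\[
  f(t):=\tr\bigl(\tfrac12 p^2-\phi_\TF-t\,p^2\bigr)_-,\qquad t\in[0,\tfrac18].
\]
First I would record the dual description $\tr(A)_-=\sup\{-\tr(A\gamma):0\leq\gamma\leq 1\}$, which gives
\[
  f(t)=\sup_{0\leq\gamma\leq 1}\Bigl(\tr\bigl[(\phi_\TF-\tfrac12 p^2)\gamma\bigr]+t\,\tr[p^2\gamma]\Bigr).
\]
For each admissible $\gamma$ the bracket is an affine function of $t$ with slope $\tr[p^2\gamma]\geq 0$; hence $f$ is convex and non-decreasing on $[0,\tfrac18]$, and $f(0)\geq 0$.

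Next I would prove the uniform bound $f(t)\leq\const\,Z^{7/3}$ for all $t\in[0,\tfrac18]$. Since $t\leq\tfrac18$ one has the operator inequality $\tfrac12(1-2t)p^2\geq\tfrac38 p^2$, so $f(t)\leq\tr(\tfrac38 p^2-\phi_\TF)_-$, and the Lieb--Thirring inequality \cite{LiebThirring1976} bounds the right-hand side by $\const\int_{\rz^3}\phi_\TF(\bx)^{5/2}\,\rd\bx$. The scaling relation $\phi_\TF(\bx)=Z^{4/3}\phi_{\mathrm{TF},1}(Z^{1/3}\bx)$, a direct consequence of $\varrho_Z(\bx)=Z^2\varrho_1(Z^{1/3}\bx)$ (cf.\ \cite{LiebSimon1977}), then gives $\int\phi_\TF^{5/2}=Z^{7/3}\int\phi_{\mathrm{TF},1}^{5/2}$, and the last integral is finite because $\phi_{\mathrm{TF},1}$ has an $L^{5/2}$-integrable Coulomb singularity at the origin and decays like $|\bx|^{-4}$ at infinity. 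In particular $f$ is finite on $[0,\tfrac18]$, so the supremum above is legitimate.

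Finally, convexity of $f$ on $[0,\tfrac18]$ yields, for $0\leq t\leq\tfrac18$,
\[
  f(t)\leq f(0)+\tfrac{t}{1/8}\bigl(f(\tfrac18)-f(0)\bigr)\leq f(0)+8t\,f(\tfrac18)\leq f(0)+\const\,t\,Z^{7/3},
\]
using $f(0)\geq 0$ in the middle inequality. Substituting $t=\tfrac12\varepsilon^2$ gives exactly \eqref{eq:phitf}. The main (and only) obstacle is the $\varepsilon^2$-gain: the crude bound $f(t)-f(0)\leq f(t)\leq\const Z^{7/3}$ is useless since $\varepsilon$ may tend to $0$ (indeed $\varepsilon=Z^{-1/3}$ in the application), and it is the convexity of $f$, anchored at the harmless right endpoint $t=\tfrac18$, that produces the required linear-in-$\varepsilon^2$ dependence. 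I do not anticipate any further difficulty.
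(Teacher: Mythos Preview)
Your proof is correct and takes a genuinely different route from the paper's. The paper fixes the spectral projection $d_\TF^\varepsilon$ onto the negative eigenspace of $\tfrac12(1-\varepsilon^2)p^2-\phi_\TF$ and uses the variational principle to bound the difference by $\tfrac{\varepsilon^2}{2}\tr(d_\TF^\varepsilon p^2)$; it then proves the kinetic-energy estimate $\tr(d_\TF p^2)\leq\const\,Z^{7/3}$ via H\"older, the Cwikel--Lieb--Rozenblum bound, the kinetic Lieb--Thirring inequality, and an optimization over a length scale $R$. Your argument bypasses all of this: convexity of $t\mapsto f(t)$ immediately converts the uniform eigenvalue-sum bound $f(\tfrac18)\leq\const\,Z^{7/3}$---a one-line consequence of the ordinary Lieb--Thirring inequality and the scaling $\int\phi_\TF^{5/2}=Z^{7/3}\int\phi_{\TF,1}^{5/2}$---into the required linear-in-$\varepsilon^2$ dependence. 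The gain is brevity and elementarity; the paper's approach, in exchange, yields the independent kinetic-energy estimate \eqref{eq:kinetic}, which is of interest in its own right but not needed for the lemma as stated.
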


Note that there are only a finite number of eigenvalues, since
$\phi_\TF$ decays like $|\bx|^{-4}$.

\begin{proof}
  Let $ d_\TF^\varepsilon$ be the projection onto the negative
  eigenvalues of $\tfrac12(1-\epsilon^2) p^2 -\phi_\TF$. Then, by the
  variational principle
  \begin{multline}
    \tr(\tfrac12(1-\varepsilon^2)p^2-\phi_\TF)_- -\tr(\tfrac12 p^2-\phi_\TF)_- \\
    \leq - \tr d_\TF^\varepsilon (\tfrac12(1-\varepsilon^2)p^2 -
    \phi_\TF) + \tr d_\TF^\varepsilon (\tfrac12 p^2 - \phi_\TF) =
    \tfrac{\varepsilon^2}{2} \tr d_\TF^\varepsilon p^2 .
  \end{multline}
  Hence the claim will follow, if we show that $\tr d_\TF^\varepsilon
  p^2 \leq \const Z^{7/3}$.  Note that $d^\varepsilon_\TF$ depends on
  both $ \varepsilon $ and $ Z $, and by rescaling one may get rid of
  the $ \varepsilon $ dependence at the expense of changing $ Z $. We
  may therefore assume that $\varepsilon=0$ and write $ d_\TF =
  d_\TF^0 $.

  Thus, it remains to prove
  \begin{equation}\label{eq:kinetic}
    \tr d_\TF p^2 \leq \const Z^{7/3}.
  \end{equation}
  Note that this says that the \emph{kinetic} energy is bounded by the
  order of the \emph{total} energy $\tr d_\TF (\tfrac12 p^2
  -\phi_\TF)$, which is well-known to be of order $Z^{7/3}$. Using
  that $\phi_\TF$ is bounded by a constant times
  $\min\{Z|\bx|^{-1},|\bx|^{-4}\}$ (see \cite{LiebSimon1977}) we get
  for any $R>0$
  \begin{align*}
    \tfrac12 \tr d_\TF p^2 & \leq \tr d_\TF\phi_\TF \\
    & \leq \const \left( \left( \int_{\{|\bx|<R\}} (Z|\bx|^{-1})^{5/2}\,\rd\bx \right)^{2/5} \left( \int  d_\TF(\bx,\bx)^{5/3} \,\rd\bx \right)^{3/5} \right. \\
    & \mkern350mu \left. + R^{-4} \int d_\TF(\bx,\bx) \,\rd\bx \right)
    .
  \end{align*}
  The Cwikel-Lieb-Rozenblum inequality (for a textbook presentation,
  see, e.g., \cite{Simon1979}) guarantees that
  \begin{align*}
    \int d_\TF(\bx,\bx) \,\rd\bx
    \leq\const\int\phi_\TF(\bx)^{3/2}\rd\bx = \const Z.
  \end{align*}
  Moreover, by the Lieb-Thirring inequality \cite{LiebThirring1976}
  \begin{align*}
    \int d_\TF(\bx,\bx)^{5/3} \,\rd\bx \leq \const \tr d_\TF p^2.
  \end{align*}
  We can estimate for any $\delta>0$
  \begin{align*}
    & \left( \int_{\{|\bx|<R\}} (Z|\bx|^{-1})^{5/2}\,\rd\bx \right)^{2/5} \left( \int  d_\TF(\bx,\bx)^{5/3} \,\rd\bx \right)^{3/5} \\
    & \qquad \leq \const Z R^{1/5} \left( \tr d_\TF p^2 \right)^{3/5} \\
    & \qquad \leq \delta \tr d_\TF p^2 + \const \delta^{-3/2} Z^{5/2}
    R^{1/2} .
  \end{align*}
  In summary, we have shown that
  \begin{equation*}
    \left(\tfrac 12 -\const\delta\right) \tr d_\TF p^2 \leq \const \left(\delta^{-3/2} Z^{5/2} R^{1/2} + R^{-4} Z \right).
  \end{equation*}
  Choosing $\delta$ small (of order one) and $R=Z^{-1/3}$ we obtain
  \eqref{eq:kinetic}.
\end{proof}

Next, we bound the many-particle ground state energy of the
Brown-Ravenhall operator from below by one-body quantities which match
the corresponding quantities in the Schr\"odinger case
\eqref{eq:schrupper}.
\begin{lemma}\label{lemma:blower}
  For all $ J \in \mathbb{N}_0 + 1/2 $ and $ Z \in \mathbb{N} $
  \begin{equation*}
    E^B_c(Z) \geq -\sum_{j=1/2}^{J-1} \tr_j \left[ B_c[Z|\bx|^{-1}]\right]_- 
    - \sum_{j=J}^{Z+1/2}\tr_j\left[B_{\mathrm{TF}}\right]_- - D(\varrho_Z,\varrho_Z) .
  \end{equation*}
\end{lemma}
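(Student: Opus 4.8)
The plan is to mimic the lower bound strategy used in the Schrödinger case (Mancas et al.\ correlation inequality \cite{Mancasetal2004}) but now with the Brown-Ravenhall one-particle operator in place of the Schrödinger one. Since the many-particle Brown-Ravenhall operator has the form of a sum of one-particle kinetic terms $c^2E(\bp_\nu/c)-c^2$ plus the external Coulomb attraction plus the electron-electron repulsion, and the kinetic part is again a positive self-adjoint operator on the antisymmetric tensor product, the abstract correlation inequality applies verbatim: one splits the Coulomb repulsion $\sum_{\mu<\nu}|\bx_\mu-\bx_\nu|^{-1}$ by comparing with the mean-field energy generated by the Thomas-Fermi density $\varrho_Z$, paying the price of $-D(\varrho_Z,\varrho_Z)$ and picking up the exchange-hole correction $L_\TF$. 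This produces the lower bound
\[
E^B_c(Z) \geq -\tr_{\gH^B_Z}\big[ B_c[\phi_\TF+L_\TF] \big]_- - D(\varrho_Z,\varrho_Z),
\]
where the single-particle trace is taken in the appropriate antisymmetric fermionic sense, i.e.\ as the sum of the negative eigenvalues of the one-particle operator $B_\TF=B_c[\phi_\TF+L_\TF]$. Here one uses the key structural point that the Brown-Ravenhall kinetic energy operator is nonnegative, so that filling the negative spectral subspace of $B_\TF$ gives the minimal fermionic energy; this is where the sign conditions built into the Brown-Ravenhall model (positivity of $c^2E(\bp/c)-c^2$ and of $\phi_\TF+L_\TF$) are used.

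Next I would decompose the one-particle trace according to the angular-momentum decomposition \eqref{eq:decomp}, writing $\tr_{\gH^B_Z}[B_\TF]_- = \sum_{j} \tr_j[B_\TF]_-$. For the low angular-momentum channels, $j \leq J-1$ with $J=[Z^{1/9}]+\tfrac12$, one bounds $\phi_\TF+L_\TF$ from above by the bare Coulomb potential $Z|\bx|^{-1}$ (this uses $\phi_\TF(\bx) \leq Z|\bx|^{-1}$ and $L_\TF \geq 0$ together with the variational principle applied to $[\,\cdot\,]_-$, which is operator-monotone in the potential in the appropriate sense), giving
\[
\tr_j[B_\TF]_- \leq \tr_j\big[B_c[Z|\bx|^{-1}]\big]_-.
\]
For the high channels $J \leq j \leq Z+1/2$ one simply keeps $\tr_j[B_\TF]_-=\tr_j[B_\TF]_-$, and for $j > Z+1/2$ one discards the contribution — but this requires the observation that $Z$ fermions occupy at most the first $Z$ angular-momentum channels (a crude but correct count), so that the sum of negative eigenvalues realized by the actual ground state never extends beyond $j=Z+1/2$; equivalently, truncating the sum at $j=Z+1/2$ only increases the bound. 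Assembling the two regimes yields exactly the claimed inequality.

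The main obstacle I anticipate is the correlation-inequality step in the genuinely relativistic setting: one must verify that \cite{Mancasetal2004} (or its analogue) applies when the kinetic term is $\sum_\nu (c^2E(\bp_\nu/c)-c^2)$ restricted to the positive spectral subspace $\gH^B_N=\bigwedge \gH^B$, rather than $\tfrac12\sum_\nu \bp_\nu^2$. The content of that inequality is purely combinatorial/operator-theoretic — it replaces the two-body interaction by a one-body mean field plus $-D(\varrho_Z,\varrho_Z)$ and is insensitive to the form of the kinetic energy as long as it is a nonnegative one-particle operator — so the passage is routine, but it does rely on the form of $B_c[\,\cdot\,]$ being a sum of commuting one-particle pieces, which follows from the definition \eqref{eq:BR} and the fact that the twisting operator $\boldsymbol\Phi_c$ acts one particle at a time. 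A secondary technical point is the angular-momentum truncation: one must be slightly careful that the reduced traces $\tr_j[B_\TF]_-$ are finite for each fixed $j$ (true since $\phi_\TF+L_\TF$ has a Coulomb tail and Theorem~\ref{thm:evhydbr} controls such eigenvalues), even though the sum over $j$ diverges — which is precisely why the cutoff at $j=Z+1/2$ is indispensable. Neither of these is a serious difficulty, so the lemma follows.
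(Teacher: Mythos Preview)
Your proposal is correct and follows essentially the same route as the paper, which simply remarks that the lemma follows ``by the same argument leading to \eqref{eq:mancas}'': apply the correlation inequality of \cite{Mancasetal2004} (which is insensitive to the kinetic term), truncate the angular-momentum sum at $j=Z+1/2$ using the crude observation that $Z$ fermions occupy at most $Z$ channels, and for small $j$ replace $\phi_\TF+L_\TF$ by the larger Coulomb potential $Z|\bx|^{-1}$ via operator monotonicity of $\cU_c$ and the variational principle.
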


\begin{proof}
  This follows by the same argument leading to \eqref{eq:mancas}.
\end{proof}

We are now ready to give a
\begin{proof}[Proof of Theorem~\ref{t2} -- first part]
  Choosing $ J= \big[ Z^{1/9} \big] + \tfrac12 $ and combining
  Proposition~\ref{prop:schrupper} and Lemma~\ref{lemma:blower} we
  obtain
  \begin{align}
    \label{eq:U1.1}
    E^S(Z)-E^B_c(Z) \leq & \sum_{j=1/2}^{J-1} \tr_j\left(\left[
        B_c[Z|\bx|^{-1}] \right]_-
      -\left[ S[Z|\bx|^{-1}]  \right]_- \right)  \\
    & + \sum_{j=J}^{Z+1/2} \tr_j \left( \left[ B_\mathrm{TF} \right]_-
      -\left[ S_\mathrm{TF} \right]_- \right)
    + \cO(Z^{47/24}). \notag
  \end{align}
  We note that by scaling $\bx\mapsto \bx/c$, the operators
  $S[Z|\bx|^{-1}]$ and $B_c[Z|\bx|^{-1}]$ are unitarily equivalent to
  the operators $ Z^2 \kappa^{-2}S_\kappa$ and
  $Z^2\kappa^{-2}B_\kappa$ where $\kappa =Z/c$. Similarly,
  $S_\mathrm{TF}$ and $B_\mathrm{TF}$ are unitarily equivalent to the
  operators $Z^2 \kappa^{-2}S[\kappa |\bx|^{-1}- \chi_c]$ and
  $Z^2\kappa^{-2}B[\kappa |\bx|^{-1}- \chi_c]$ acting in $\gH $, where
  $$
  \chi_c(\bx):= c^{-4} \int_{|\bx-\by|>c R_Z(c^{-1}\bx)}
  \frac{\varrho_Z(c^{-1}\by)}{|\bx-\by|} \,\rd\by .
  $$
  This implies that the first two terms on the right-hand side of
  \eqref{eq:U1.1}, which we denote by $\Sigma_1(Z,c) $ and
  $\Sigma_2(Z,c) $, can be rewritten as
  \begin{align*}
    \Sigma_1(Z,c) = & \, Z^2 \kappa^{-2} \, \sum_{j=1/2}^{J-1} \tr_j
    \left(\left[B_\kappa\right]_- -
      \left[S_\kappa\right]_-\right), \\
    \Sigma_2(Z,c) = & \, Z^2 \kappa^{-2} \, \sum_{j=J}^{Z+1/2} \tr_j
    \left(\left[B[\kappa|\bx|^{-1}- \chi_c]\right]_- - \left[S[\kappa
        |\bx|^{-1}- \chi_c]\right]_-\right) .
  \end{align*}
  Inequality \eqref{eq:t3a} and Theorem~\ref{t:3} guarantee that the terms
  in the first sum are non-negative and that the terms in both sums
  are bounded from above by a constant times $\kappa^{4} j^{-2}$ independently
  of $ Z $ and $ c $.  Therefore, the first sum can be bounded from
  above by an absolutely convergent series,
  \begin{equation*}
    \Sigma_1(Z,c) \leq Z^2 \kappa^{-2} \, \sum_{j=1/2}^{\infty} \tr_j 
    \left(\left[B_\kappa\right]_- -
      \left[S_\kappa\right]_-\right) 
    = Z^2 \, s(\kappa).
  \end{equation*}
  By the same token
  \begin{equation*}
    \Sigma_2(Z,c) \leq\const Z^2 \kappa^{2}\sum_{j=J}^\infty j^{-2}=\cO(Z^{17/9}),
  \end{equation*}
  uniformly in $c$. This concludes the proof of the upper bound on the
  energy difference.
\end{proof}

%%%%%%%%%%%%%%%%%%%%%%%%%%%%%%%%%%%%%%%%%%%%%%%%%%%%%%%%%%%%%%

\subsection{Lower bound on the energy difference} 

Similarly to \cite{SiedentopWeikard1987O} we define one-particle
density matrices $d^S$ and $d^B$ on $ \gH $ as sums
\begin{equation}
  \label{def:dB}
  d^\# = d^\#_< + d_>,
  \quad \# = S, B.
\end{equation}
The contribution of small total angular
momenta, $ d^\#_< = \sum_{l<L} d_{l}^\# $, is defined in Appendix
\ref{app:a1}. It comes from the eigenspinors of the atomic problems.
The contribution of large angular momentum,
$d_>= \sum_{l=L}^\infty d_{l}$, is defined in Appendix~\ref{app:a2}.
It corresponds to the Macke orbitals of \cite{SiedentopWeikard1987O}
and, in particular, coincides for the Schr\"odinger and
Brown-Ravenhall case.
The angular-momentum cut-off $ L $ will be chosen in a $ Z $-dependent
way, namely,
$$ L:=[Z^{1/12}] . $$

Important properties of the density matrices, whose construction is explained in more detail in Appendix~\ref{app:a}, are:
\begin{itemize}
\item The densities
  \begin{align*}
    \rho^\#(\bx) := \tr_{\mathbb{C}^2} \left(d^\# (\bx,\bx)\right),&
    \quad
    \rho^\#_{l}(\bx) := \tr_{\mathbb{C}^2} \left( d^\#_l (\bx,\bx)\right), \\
    \rho^\#_{<}(\bx) := \sum_{l<L} \rho^\#_{l}(\bx),& \quad
    \rho_{>}(\bx) := \sum_{l\geq L} \rho_{l}(\bx) .
  \end{align*}
  of $d^\#$, $ d^\#_l$, and $d>$ are all spherically symmetric.
\item The dimension of the ranges of the density matrices $ d^S $ and
  $d^B$ is at most $Z$, in particular $ \tr d^\# \leq Z $.  Moreover,
  \begin{equation}\label{eq:numberbr}
    \tr d^\#_l = \int \rho^\#_{l}(\bx) \, \rd \bx = 2(2l+1)(K-l),
    \qquad 0\leq l <L , 
  \end{equation}
   with $ K=[\const Z^{1/3}] $ and a suitable constant. 
\end{itemize}

For a lower bound on the ground state energy in the Schr\"odinger
case, we recall from \cite{SiedentopWeikard1987O} and
\cite[Proposition~4]{Franketal2008} the following
\begin{proposition}\label{prop:schrober}
  For large $ Z $, 
  \begin{equation*}
    E^S(Z) = \tr\left[ S[Z |\bx|^{-1}] \, d^S \right] +
    D(\rho^S,\rho^S) + \cO(Z^{47/24}) .
  \end{equation*}
\end{proposition}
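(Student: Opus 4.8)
The plan is to deduce the equality from two one-sided estimates, combined with the known Schr\"odinger asymptotics \eqref{eq:SiedentopWeikard87}. By construction (Appendix~\ref{app:a}) the density matrix $d^S=d^S_<+d_>$ is an orthogonal projection on $\gH$ with $\tr d^S=Z$, so one may form the Slater determinant $\Psi$ of an orthonormal basis of its range. The standard formula for the $Z$-particle energy of a determinant gives
\[
  \cE^S_Z(\Psi)=\tr\left[S[Z|\bx|^{-1}]\,d^S\right]+D(\rho^S,\rho^S)
  -\tfrac12\iint\frac{|d^S(\bx,\bx')|^2}{|\bx-\bx'|}\,\rd\bx\,\rd\bx',
\]
where $|d^S(\bx,\bx')|$ denotes the Hilbert--Schmidt norm of the integral kernel and the self-interaction contributions cancel. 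The exchange integral being non-negative, minimizing over normalized trial states yields
\[
  \tr\left[S[Z|\bx|^{-1}]\,d^S\right]+D(\rho^S,\rho^S)\ \geq\ E^S(Z).
\]

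For the reverse estimate I would bound the one-body quantity $\tr[S[Z|\bx|^{-1}]d^S]+D(\rho^S,\rho^S)$ directly, splitting $d^S$ into its low- and high-angular-momentum blocks exactly as in \cite{SiedentopWeikard1987O} and \cite[Proposition~4]{Franketal2008}. The block $d^S_<$ is made of the hydrogenic eigenspinors of $S[Z|\bx|^{-1}]$ with $l<L=[Z^{1/12}]$ and is essentially supported on the scale $|\bx|\sim Z^{-1}$; summing the exact hydrogen eigenvalues over the occupied states contributes the Scott term $\tfrac12 Z^2$ up to $\cO(Z^{47/24})$. The block $d_>$, built from the Macke orbitals with $l\geq L$, is treated by a semiclassical (Weyl-type) expansion and reproduces $E_\TF(Z)$ together with the screening correction $-D(\varrho_Z,\varrho_Z)$, again with error $\cO(Z^{47/24})$. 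Hence $\tr[S[Z|\bx|^{-1}]d^S]+D(\rho^S,\rho^S)\leq E_\TF(Z)+\tfrac12 Z^2+\cO(Z^{47/24})$, which by \eqref{eq:SiedentopWeikard87} equals $E^S(Z)+\cO(Z^{47/24})$. Together with the previous display this proves the proposition.

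The main obstacle is this second step: the semiclassical analysis of the Macke block and the precise matching of the hydrogenic block with Scott's $\tfrac12 Z^2$, that is, the complete Schr\"odinger Scott-correction argument of Siedentop and Weikard, which I would import rather than reprove. Everything else is light -- the determinant energy identity and the input \eqref{eq:SiedentopWeikard87}. (Although the exchange integral plays no role above, it is in any case $\cO(Z^{5/3})$ by Lieb's inequality and the Thomas--Fermi scaling of $\rho^S$, so the Slater determinant $\Psi$ also furnishes the matching Siedentop--Weikard upper bound $E^S(Z)\leq E_\TF(Z)+\tfrac12 Z^2+\cO(Z^{47/24})$.)
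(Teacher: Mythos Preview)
Your overall strategy---one inequality from the variational principle, the other from the Siedentop--Weikard upper-bound construction, matched via \eqref{eq:SiedentopWeikard87}---is exactly the content of the references the paper cites for this proposition (\cite{SiedentopWeikard1987O} and \cite[Proposition~4]{Franketal2008}); the paper itself gives no independent proof.

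There is, however, a genuine slip in your first step. The density matrix $d^S$ is \emph{not} an orthogonal projection, and $\tr d^S$ is only $\leq Z$, not $=Z$. By construction (see \eqref{eq:gewichte}) the Macke block $d_>$ carries fractional weights $w_{n,l}=\varepsilon_l/2\in[0,\tfrac12)$ on the boundary orbitals, so that $\sum_n w_{n,l}=N_{j,l,m}$ is generally non-integral. Hence you cannot form a Slater determinant whose one-body density matrix is $d^S$, and the displayed identity for $\cE^S_Z(\Psi)$ does not apply to $d^S$ as written.

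The fix is immediate and is in fact what the paper uses one line later for the Brown--Ravenhall side (Proposition~\ref{prop:Hartree}): invoke Lieb's variational principle \cite{Lieb1981V,Bach1992}, which for any $0\leq d\leq 1$ with $\tr d\leq Z$ gives
\[
  E^S(Z)\ \leq\ \tr\!\left[S[Z|\bx|^{-1}]\,d\right]+D(\rho_d,\rho_d)-X(d)
  \ \leq\ \tr\!\left[S[Z|\bx|^{-1}]\,d\right]+D(\rho_d,\rho_d),
\]
with $X(d)\geq 0$ the exchange term. Applied to $d=d^S$ this yields your inequality $\tr[S[Z|\bx|^{-1}]d^S]+D(\rho^S,\rho^S)\geq E^S(Z)$ without any claim about $d^S$ being a projection. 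With this correction your argument is sound.
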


To obtain an upper bound on the ground state energy in the
Brown-Ravenhall case, we use the reduced Hartree-Fock variational
principle.  It involves the density
\begin{equation*}
  \rho^B_{U}(\bx):= \tr_{\cz^2} \left( \cU_c(d^B) (\bx,\bx) \right)
\end{equation*}
of the twisted density matrix $\cU_c(d^B)$.

For further reference, we also set
$$
\rho^B_{U,l}(\bx) := \tr_{\mathbb{C}^2} \left( \cU_c(d^B_l)
  (\bx,\bx)\right) ,\quad \rho^B_{U,<}(\bx) := \sum_{l<L}
\rho^B_{U,l}(\bx) , \quad \rho_{U,>}(\bx) := \sum_{l\geq L}
\rho_{U,l}(\bx).
$$
Applying to \eqref{eq:1} the Hartree-Fock variational principle -- in
the strengthened version of Lieb \cite{Lieb1981V} (see also Bach
\cite{Bach1992}) -- and omitting the manifestly negative exchange
energy we arrive at
\begin{proposition}\label{prop:Hartree}
  For all $ Z $ and $ c $,
  \begin{equation*}
    E^B_c(Z) \leq \tr[B_c[Z |\bx|^{-1}]\,d^B] +
    D(\rho^B_U,\rho^B_U).
  \end{equation*}
\end{proposition}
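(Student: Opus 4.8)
The plan is to apply the Hartree--Fock variational principle in the strengthened form due to Lieb \cite{Lieb1981V} directly to the energy form $\cE^B_Z$, using the one-particle density matrix $d^B$ constructed in Appendix~\ref{app:a} as a trial density matrix. First I would recall that since $d^B$ is a density matrix on $\gH$ with $0 \leq d^B \leq 1$ (as an operator on the one-particle space), $\tr d^B \leq Z$, and range contained in $\gH$ which embeds in $\gH^B$, it is an admissible trial object for the reduced Hartree--Fock functional associated with \eqref{eq:1}. The Hartree--Fock energy functional evaluated at $d^B$ equals
\[
  \tr\left[\left(c\,\boldsymbol\alpha\cdot\bp + c^2\beta - c^2 - Z|\bx|^{-1}\right) d^B\right]
  + D(\rho^B, \rho^B) - \tfrac12 \iint \frac{|d^B(\bx,\by)|^2}{|\bx-\by|}\,\rd\bx\,\rd\by,
\]
where the direct (Hartree) term uses the density $\rho^B$ and the last term is the exchange term.

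The second step is to rewrite the one-body part. Using the unitary embedding $\boldsymbol\Phi_c$ and the definition \eqref{eq:BR} of $B_c[Z/|\bx|]$, the one-body trace becomes $\tr[B_c[Z|\bx|^{-1}]\,d^B]$ (this is precisely the reason $d^B$ is defined as an operator on the two-spinor space $\gH$). For the direct term, I would note that the charge density appearing in the electron-electron interaction, when one unfolds the embedding $\boldsymbol\Phi_c$, is the density of the \emph{twisted} density matrix $\cU_c(d^B)$ rather than that of $d^B$ itself; hence the Hartree term is $D(\rho^B_U, \rho^B_U)$ with $\rho^B_U$ as defined just before the statement. The third and final step is simply to discard the exchange term: it is manifestly non-positive (it has the form $-\tfrac12\iint |K(\bx,\by)|^2/|\bx-\by|$, a negative-definite expression since $|\bx-\by|^{-1}$ has positive Fourier transform and the kernel enters through its modulus squared), so dropping it only increases the right-hand side and yields the claimed inequality $E^B_c(Z) \leq \tr[B_c[Z|\bx|^{-1}]\,d^B] + D(\rho^B_U, \rho^B_U)$.

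The only genuinely delicate point — and the one I expect to require the most care — is the bookkeeping of which density enters the Hartree term: one must check that conjugating by $\boldsymbol\Phi_c$ turns the local two-body potential $|\bx_\mu - \bx_\nu|^{-1}$ acting on $\gH^B_N$ into an object whose mean-field (direct) part is governed by $\cU_c(d^B)$, not $d^B$. This is the non-locality feature emphasized in the introduction and in Section~\ref{sec:crit}, and it is exactly why $\rho^B_U$ rather than $\rho^B$ appears. Everything else is a routine application of Lieb's theorem \cite{Lieb1981V} (which guarantees that the reduced Hartree--Fock infimum, taken over all $0\leq d\leq 1$ with $\tr d \leq Z$, is an upper bound for $E^B_c(Z)$) together with the admissibility of $d^B$, which is recorded in Appendix~\ref{app:a}. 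Since the proof in the text is a one-line reference to the preceding machinery, I would keep the argument correspondingly brief, citing \cite{Lieb1981V} and \cite{Bach1992} for the variational principle and Appendix~\ref{app:a} for the properties of $d^B$.
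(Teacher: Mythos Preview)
Your proposal is correct and matches the paper's approach exactly: apply Lieb's strengthened Hartree--Fock variational principle \cite{Lieb1981V,Bach1992} to $\cE^B_Z$ with trial density matrix $d^B$, then drop the manifestly negative exchange term. Your emphasis on why the twisted density $\rho^B_U$ (rather than $\rho^B$) appears in the direct term is the one point worth spelling out, and you have identified it correctly.
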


Combining Propositions \ref{prop:schrober} and \ref{prop:Hartree} we
find
\begin{multline*}
  E^B_c(Z)- E^S(Z) \\
  \leq \tr[B_c[Z |\bx|^{-1}]d^B] - \tr[S[Z|\bx|^{-1}]d^S] +
  D(\rho_{U}^B-\rho^S,\rho_{U}^B+\rho^S) + \const Z^{47/24}.
\end{multline*}
Now we use the inequality $ \bp^2 \geq 2 c^2 ( E(p/c) -1 ) $ for the
kinetic energy corresponding to $d_>$. Morover we remark that
$D(\rho_{U,>}-\rho_>,\rho_{U,<}^B)\leq \cR_3$ and $D(f,g) \leq 0$, if
$f\leq0\leq g$. This yields
\begin{multline}\label{eq:brupper}
  E^B_c(Z)- E^S(Z) \leq \tr\left[B_c[\tfrac Z{|\bx|}]d^B_<\right] -
  \tr\left[S[\tfrac Z{|\bx|}]d_{<}^S\right] +
  \underbrace{\tr\left[\left(\tfrac Z{|\bx|} - \cU_c(\tfrac
        Z{|\bx|})\right) d_>\right]}_{=:\cR_1}\\
  +\underbrace{D(\rho_{U,>}-\rho_>,\rho_{U,>}+\rho_>)}_{=:\cR_2}
  +2\underbrace{D(\rho_{U,<}^B,\rho_{U}^B+\rho^S)}_{=:\cR_3} + \const
  Z^{47/24} .
\end{multline}
As we shall see, the first two terms will yield the Scott correction.
In the following subsections we prove that $\cR_1$, $\cR_2$, and
$\cR_3$, are relatively small remainder terms. Hence, we wish to
control the effects of the twisting operation $ \cU_c $, which stems
from the electronic projection, on the electrostatic Coulomb energy.

%%%%%%%%%%%%%%%%%%%%%%%%%%%%%%%%%%%%%%%%%%%%%%%%%%%%%%%%%%%%%%%%%%%%%%

\subsubsection{Controlling the electron projection for high angular momenta}

Our task in this subsection is to prove that for large angular
momenta, the twisted and untwisted electrostatic energy are
asymptotically equal.

We start by comparing the electric potential energy with or without
electron projection for large angular momentum. This will imply that
the term $\cR_1$ in \eqref{eq:brupper} is relatively small.
\begin{lemma}
  \label{electric-bubo1}
  In the limit $ Z \to \infty$ one has uniformly in $\kappa=Z/c\in
  (0,\kappa^B]$
  \begin{equation*}
    \int \left( \rho_{>}(\bx) - \rho_{U,>}(\bx) \right) \frac{ \rd\bx}{|\bx|} 
    = \tr \left[\left(|\bx|^{-1}-\cU_c(|\bx|^{-1})\right)d_> \right]   
    = \cO(Z^{11/12}).
  \end{equation*}
\end{lemma}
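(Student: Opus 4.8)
The plan for proving Lemma~\ref{electric-bubo1} is as follows. The first equality is essentially definitional: one writes $\tr\left[|\bx|^{-1}d_>\right]=\int|\bx|^{-1}\rho_>(\bx)\,\rd\bx$ and uses that the twisting operation is symmetric under the trace, $\tr[\cU_c(A)B]=\tr[A\,\cU_c(B)]$ (valid since the $\Phi_\nu(\bp/c)$ are self-adjoint), so that $\tr\left[\cU_c(|\bx|^{-1})d_>\right]=\tr\left[|\bx|^{-1}\cU_c(d_>)\right]=\int|\bx|^{-1}\rho_{U,>}(\bx)\,\rd\bx$; subtracting gives the stated identity. It then remains to prove the bound $\cO(Z^{11/12})$, and for this I would decompose $d_>=\sum_{l\geq L}d_l$ into its orbital angular momentum channels (this is how $d_>$ is constructed in Appendix~\ref{app:a2}), so that each $d_l$ acts within $\gH_{l+1/2,l}\oplus\gH_{l-1/2,l}$. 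Since $L=[Z^{1/12}]\geq1$, every channel that occurs has $l\geq1$, which is the regime of Lemma~\ref{lemma:bu}; it therefore suffices to estimate each $\tr\left[\bigl(|\bx|^{-1}-\cU_c(|\bx|^{-1})\bigr)d_l\right]$ separately.

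The per-channel estimate goes through Lemma~\ref{lemma:bu}, but that lemma treats $|\bx|^{-1}-\cU(|\bx|^{-1})$, i.e.\ the case $c=1$, so I would first undo the light-speed scaling. Under the unitary dilation $(D_c\psi)(\bx):=c^{3/2}\psi(c\bx)$ one has $D_c^*\Phi_\nu(\bp/c)D_c=\Phi_\nu(\bp)$ and $D_c^*|\bx|^{-1}D_c=c\,|\bx|^{-1}$, hence
\begin{equation*}
  D_c^*\bigl(|\bx|^{-1}-\cU_c(|\bx|^{-1})\bigr)D_c=c\bigl(|\bx|^{-1}-\cU(|\bx|^{-1})\bigr),
  \qquad
  D_c^*\bp^2 D_c=c^2\bp^2,
\end{equation*}
while $D_c$ preserves each angular momentum channel. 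Applying Lemma~\ref{lemma:bu} to $D_c^*d_l D_c$ — and using that $|\bx|^{-1}-\cU(|\bx|^{-1})$ and $\bp^2$ leave each $\gH_{j,l}$ invariant and that the bound of Lemma~\ref{lemma:bu} does not depend on $j$ — yields
\begin{equation*}
  \left|\tr\left[\bigl(|\bx|^{-1}-\cU_c(|\bx|^{-1})\bigr)d_l\right]\right|
  \leq \const\,c^{-1}\,l^{-2}\,\tr\left[\bp^2 d_l\right].
\end{equation*}
The extra factor $c^{-1}$, beyond the dimensionless $l^{-2}$, reflects that the electron projection becomes trivial in the non-relativistic limit $c\to\infty$; one could equivalently use the remark after Lemma~\ref{lemma:bu} to replace $\bp^2$ by $|\bp|$ and drop the $c^{-1}$, but the $\bp^2$ form combines best with the kinetic-energy estimates available for $d_>$.

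The summation over $l\geq L$ then uses three inputs. First, the hypothesis $c\geq Z/\kappa^B$ of Theorem~\ref{t2}, which gives $c^{-1}\leq\kappa^B Z^{-1}$, uniformly in $\kappa$. Second, $\sum_{l\geq L}l^{-2}\leq\const\,L^{-1}\leq\const\,Z^{-1/12}$. Third, and crucially, a channel-wise kinetic-energy bound $\tr\left[\bp^2 d_l\right]\leq\const\,Z^2$ that is \emph{uniform} in $l\geq L$; this should be read off from the Thomas-Fermi scaling of the Macke orbitals in Appendix~\ref{app:a2} — channel $l$ carries of order $l\,Z^{1/3}$ orbitals, supported at radii of order $l^2/Z$ and larger with momenta of order $Z/l$ and smaller, and the semiclassical evaluation of the kinetic-energy density in that channel shows the $l$-dependence cancels. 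Since $d_>$ is common to the Schr\"odinger and Brown-Ravenhall cases it depends on $Z$ only, so this constant is $\kappa$-independent. Putting the three inputs together,
\begin{equation*}
  \left|\tr\left[\bigl(|\bx|^{-1}-\cU_c(|\bx|^{-1})\bigr)d_>\right]\right|
  \leq \const\,c^{-1}\sum_{l\geq L}l^{-2}\,\tr\left[\bp^2 d_l\right]
  \leq \const\,Z^{-1}\cdot Z^{-1/12}\cdot Z^{2}
  = \const\,Z^{11/12},
\end{equation*}
uniformly in $\kappa=Z/c\in(0,\kappa^B]$, which is the claim.

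The main obstacle is the uniform-in-$l$ kinetic bound $\tr[\bp^2 d_l]\leq\const\,Z^2$: the merely global bound $\tr[\bp^2 d_>]\leq\const\,Z^{7/3}$ combined with $l^{-2}\leq L^{-2}$ only yields $\cO(Z^{7/6})$, which is not enough, so one genuinely has to exploit that the kinetic energy is spread roughly evenly over the $\cO(Z^{1/3})$ angular momentum channels rather than being concentrated in the innermost ones. Once this is extracted from the explicit construction of $d_>$ in Appendix~\ref{app:a2}, the rest of the argument is soft.
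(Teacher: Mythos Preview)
Your proposal is correct and follows essentially the same route as the paper: scale out $c$, apply Lemma~\ref{lemma:bu} on each angular-momentum channel, use $c^{-1}\leq\kappa^B Z^{-1}$, and then control $\sum_{l\geq L}l^{-2}\tr[\bp^2 d_l]$. The only difference is in packaging: the paper does not isolate your uniform per-channel bound $\tr[\bp^2 d_l]\leq\const\,Z^2$, but instead proves directly the weighted sum $\sum_{l\geq L}l^{-2}\tr[\bp^2 d_l]=\cO(Z^2/L)$ as a separate lemma (Lemma~\ref{prop:kinetischeenergie}) via the explicit Macke-orbital computation; your uniform bound is in fact implicit in that same computation, so the two formulations require identical work.
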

\begin{proof}
  Let $\{ \psi_\alpha \}$ stand for the Macke orbitals building up
  $d_>$ which we label by $ \alpha = (j,l,m,n) $; see
  \eqref{eq:d-groesser} and preceding equations in Appendix
  \ref{app:a2}.  By the scaling $\bx\mapsto\bx/c $ one has the
  relation
  \begin{equation*}
    \left\langle \psi_\alpha , \left[|\bx|^{-1}-\cU_c(|\bx|^{-1})\right]
      \psi_\alpha \right\rangle = c\, \left\langle \psi_\alpha^{(c)} ,
      \left[|\bx|^{-1}-\cU_1(|\bx|^{-1})\right] \psi_\alpha^{(c)}
    \right\rangle
  \end{equation*}
  where $ \psi_\alpha^{(c)}(\bx) := c^{-3/2} \psi_\alpha(\bx/c) $.
  Assuming that $ \alpha $ corresponds to a fixed (large) $ (j , l) $
  we may use Lemma~\ref{lemma:bu} to estimate the right-hand side by a
  constant times
  \begin{equation*}
    \frac{c }{l^2} \left\langle \psi_\alpha^{(c)} , \bp^2
      \psi_\alpha^{(c)} \right\rangle = \frac{1}{l^2\, c} \left\langle
      \psi_\alpha , \bp^2 \psi_\alpha \right\rangle .
  \end{equation*}  
  Using that $Z/c\leq\kappa^B$ we obtain the estimate
  \begin{equation*}
    \tr\left[\left(|\bx|^{-1}-\cU_c(|\bx|^{-1})\right)d_> \right]
    \leq\const\frac{\kappa^B}Z\sum_{l=L}^\infty\frac1{l^2}\sum_{j=l\pm1/2}\tr_{j,l}\left[\bp^2 d_>\right].
  \end{equation*}
  The proof is completed using Lemma~\ref{prop:kinetischeenergie} from
  Appendix \ref{app:b}.
\end{proof}

Next, we estimate the difference of Coulomb energies corresponding to
large total angular momenta. This shows that the term $\cR_2$
in \eqref{eq:brupper} may be neglected.

\begin{lemma}
  \label{electric-bubo2} 
  In the limit $ Z \to \infty$,one has uniformly in $\kappa=Z/c\in
  (0,\kappa^B]$
   \begin{equation*} 
     \cR_2=D(\rho_{U,>}-\rho_{>},\rho_>+\rho_{U,>}) = \cO(Z^{5/3}).
   \end{equation*}
\end{lemma}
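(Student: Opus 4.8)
The plan is to reduce $\cR_2$ to a one-body expression and then exploit that the twisting operation $\cU_c$ is nearly trivial on the Macke orbitals building $d_>$, because these live on momenta $\sim Z^{1/3}$ while $c\geq Z/\kappa^B$. Write $v_>$ and $v_{U,>}$ for the Coulomb potentials generated by the spherically symmetric non-negative densities $\rho_>$ and $\rho_{U,>}$, and set $W:=v_>+v_{U,>}$, the Coulomb potential of $\rho_>+\rho_{U,>}\geq0$. Using the definition of $D$ together with the self-adjointness of $\cU_c$ as a super-operator, $\tr[\cU_c(A)B]=\tr[A\,\cU_c(B)]$, one gets
$$
2\cR_2 = \int\bigl(\rho_{U,>}-\rho_>\bigr)(\bx)\,W(\bx)\,\rd\bx = \tr\bigl[d_>\,(\cU_c(W)-W)\bigr] = \sum_\alpha\bigl\langle\psi_\alpha,(\cU_c(W)-W)\psi_\alpha\bigr\rangle ,
$$
the sum running over the Macke orbitals $\psi_\alpha$ building $d_>$ (Appendix~\ref{app:a2}).

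The key input is the scaled version of Lemma~\ref{lemma:Uv}. Performing the scaling $\bx\mapsto\bx/c$ exactly as in the proof of Lemma~\ref{electric-bubo1}, and noting that $W(\cdot/c)$ is again the Coulomb potential of a spherically symmetric non-negative density, with the same value $W(0)$ at the origin, Lemma~\ref{lemma:Uv} yields for every $\psi$
$$
\bigl|\langle\psi,(W-\cU_c(W))\psi\rangle\bigr| \leq \const\,\frac{W(0)}{c^2}\,\langle\psi,\bp^2\psi\rangle .
$$
Summing this over the Macke orbitals gives $|\cR_2| \leq \const\,c^{-2}\,W(0)\,\tr[\bp^2 d_>]$.

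It then remains to estimate the three factors by their Thomas--Fermi orders. By the scaling $\varrho_Z(\bx)=Z^2\varrho_1(Z^{1/3}\bx)$ and $0\leq d_>\leq d^S$ one has $v_>(0)=\tr[d_>|\bx|^{-1}]=\cO(Z^{4/3})$; combining this with Lemma~\ref{electric-bubo1} gives $v_{U,>}(0)=\tr[d_>\,\cU_c(|\bx|^{-1})]=v_>(0)+\cO(Z^{11/12})=\cO(Z^{4/3})$, so $W(0)=\cO(Z^{4/3})$. The kinetic energy of $d_>$ is of Thomas--Fermi order, $\tr[\bp^2 d_>]=\cO(Z^{7/3})$ (cf. Appendix~\ref{app:b}). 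Since $c\geq Z/\kappa^B$ we have $c^{-2}\leq(\kappa^B)^2 Z^{-2}$, hence
$$
|\cR_2| \leq \const\,(\kappa^B)^2\,Z^{4/3}\cdot Z^{7/3}\cdot Z^{-2} = \cO(Z^{5/3})
$$
uniformly in $\kappa=Z/c\in(0,\kappa^B]$ (the extra factor $\kappa^2$ reflecting that the twisting effect disappears in the non-relativistic limit).

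The step I expect to demand the most care is the reduction to Lemma~\ref{lemma:Uv}: one must verify that $\rho_>+\rho_{U,>}$ really is a spherically symmetric non-negative charge density --- non-negativity following from $\cU_c(d_>)\geq0$ --- and keep careful track of the powers of $c$ produced by the rescaling. The bound $W(0)=\cO(Z^{4/3})$ then rests essentially on the already established Lemma~\ref{electric-bubo1} to absorb the twisted Coulomb contribution $\tr[d_>\,\cU_c(|\bx|^{-1})]$.
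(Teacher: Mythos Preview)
Your argument is correct and follows the paper's own proof essentially line by line: write $\cR_2$ as $\tfrac12\tr[d_>(\cU_c(W)-W)]$ for $W=(\rho_>+\rho_{U,>})*|\cdot|^{-1}$, apply the scaled Lemma~\ref{lemma:Uv}, and then feed in $W(0)=\cO(Z^{4/3})$ (via \eqref{remindersw} and Lemma~\ref{electric-bubo1}) together with $\tr[\bp^2 d_>]=\cO(Z^{7/3})$. The only cosmetic point is that the bound $v_>(0)=\cO(Z^{4/3})$ comes directly from \eqref{remindersw} rather than from Thomas--Fermi scaling plus $d_>\leq d^S$, and the sum over Macke orbitals should carry the weights $w_{n,l}\in[0,1]$, which does not affect the estimate.
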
  
\begin{proof}
  We define $v:=(\rho_>+\rho_{U,>})*|\cdot|^{-1}$ to be the
  electric potential generated by $\rho_>+\rho_{U,>}$ which is
  obviously spherically symmetric and obeys
  \begin{align*}
    v(0) & = \tr \left[ d_> \left( |\bx|^{-1} + \cU_c(|\bx|^{-1})
    \right) \right] \notag \\ 
    & = 2 \tr \left[ d_> |\bx|^{-1} \right]
    - \tr \left[ d_> \left( |\bx|^{-1} - \cU_c(|\bx|^{-1}) \right)
    \right] .
  \end{align*} 
  According to \cite{SiedentopWeikard1987O} (see also
  \eqref{remindersw}) the first term on the right side is
  $\cO(Z^{4/3})$.  Moreover the second term is $\cO(Z^{11/12})$ by
  Lemma~\ref{electric-bubo1}, hence much smaller than the first term.
  Now,
  \begin{equation}
    \label{eq:sw1}
    D(\rho_>-\rho_{U,>},\rho_>+\rho_{U,>}) 
    = \tfrac12 \tr \left[d_>\left(v -\cU_c(v)\right)\right], 
  \end{equation}
  Decomposing the trace in \eqref{eq:sw1} into the orbitals
  contributing to $ d_> $ and scaling $ \bx \to \bx/c $ enables us to
  employ Lemma~\ref{lemma:Uv} to obtain the bound
  \begin{equation*}
    \tr \left[ d_> \left( v - \cU_c(v) \right) \right] \leq
    \frac{\const}{c^2} v(0) \, \tr\left[ d_> \bp^2 \right] .
  \end{equation*}
  This concludes the proof, since from \cite{SiedentopWeikard1987O}
  (see \eqref{remindersw}) we conclude that the trace on the
  right-hand side is $\cO(Z^{7/3})$.
\end{proof}

%%%%%%%%%%%%%%%%%%%%%%%%%%%%%%%%%%%%%%%%%%%%%%%%%%%%%%%%%%%%%%%%%%%%%%%%%%

\subsubsection{Contribution from low angular momenta to the Coulomb energy}

We now show that the term $\cR_3$ in \eqref{eq:brupper} is negligible.
\begin{lemma}
  \label{coulombsmallbr}
  In the limit $Z\to\infty$ one has uniformly in $ \kappa=Z/c\in
  (0,\kappa^B]$
  \begin{equation*}
    \cR_3=D(\rho_{U,<}^B,\rho_U^B+\rho^S)=\cO(Z^{11/6}\log Z).
  \end{equation*}
\end{lemma}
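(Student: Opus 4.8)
The plan is to bound the Coulomb pairing $D(\rho_{U,<}^B,\rho_U^B+\rho^S)$ by a Cauchy--Schwarz step for the Coulomb scalar product, which reduces everything to estimating the Coulomb self-energies $D(\rho_{U,<}^B,\rho_{U,<}^B)$, $D(\rho_U^B,\rho_U^B)$, and $D(\rho^S,\rho^S)$. More precisely, $|D(f,g)| \leq D(f,f)^{1/2} D(g,g)^{1/2}$, so it suffices to show $D(\rho_{U,<}^B,\rho_{U,<}^B) = \cO(Z^{11/3}(\log Z)^2)$ while $D(\rho_U^B,\rho_U^B)$ and $D(\rho^S,\rho^S)$ are $\cO(Z^{7/3})$. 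The latter two are of the expected Thomas--Fermi order: $D(\rho^S,\rho^S) = \cO(Z^{7/3})$ is part of the Schr\"odinger input (see \cite{SiedentopWeikard1987O} and Proposition~\ref{prop:schrober}), and $D(\rho_U^B,\rho_U^B)$ is controlled by comparing $\rho_U^B$ with $\rho^B$ (using Lemma~\ref{lemma:Uv}, which shows the twisting only perturbs potentials at relative order $c^{-2}$ against $\bp^2$) and then invoking the known bound on $D(\rho^B,\rho^B)$.

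First I would analyze the small-angular-momentum density $\rho_{U,<}^B = \sum_{l<L}\rho_{U,l}^B$. By the triangle inequality for $D(\cdot,\cdot)^{1/2}$ it is enough to bound each $D(\rho_{U,l}^B,\rho_{U,l}^B)$ and sum the square roots over $l<L=[Z^{1/12}]$. For a fixed channel $l$, the orbitals building $d_l^B$ are eigenspinors of the atomic Brown--Ravenhall operator; after the scaling $\bx\mapsto\bx/c$ each such eigenfunction $\psi$ with $\langle\psi,B_\kappa\psi\rangle\leq 0$ satisfies the $L^q$-bound $\|\psi\|_q\leq C_q\|\psi\|$ of Theorem~\ref{thm:Lpeigen} for $2\leq q<3$. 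Together with the occupation number $\tr d_l^B = 2(2l+1)(K-l)$ from \eqref{eq:numberbr} and the scaling relations (the Brown--Ravenhall length scale is $Z^{-1}$, so densities carry a factor $Z^3$ after unscaling and the relevant $L^{q/(q-1)}$-norms of $\rho_{U,l}^B$ scale accordingly), one controls $\|\rho_{U,l}^B\|_{6/5}$ (or an $L^p$-norm just above $6/5$) and hence, via the Hardy--Littlewood--Sobolev inequality $D(f,f)\leq \const\|f\|_{6/5}^2$, the Coulomb self-energy of each channel. The twisting $\cU_c$ is harmless here: since $\cU_c$ is built from the bounded operators $\Phi_0(\bp/c),\Phi_1(\bp/c)$, the $L^q$-bounds on $\rho_{U,l}^B$ follow from those on $\rho_l^B$ up to constants (cf. the boundedness statements used in Lemma~\ref{lem:helicity} and Lemma~\ref{lemma:Uv}). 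Summing $D(\rho_{U,l}^B,\rho_{U,l}^B)^{1/2}$ over $l<L$ and using $K=\cO(Z^{1/3})$, $L=\cO(Z^{1/12})$ produces the stated $\cO(Z^{11/6}\log Z)$; the logarithm arises from the $l$-summation of the occupation-weighted per-channel densities, which behaves like $\sum_{l<L}(K-l)/l$ (or an analogous harmonic-type sum), accounting for the $\log Z$ and being the reason one cannot simply use the endpoint $q=3$.

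The main obstacle will be getting the scaling bookkeeping exactly right so that the per-channel Coulomb energies, weighted by the occupation numbers $2(2l+1)(K-l)$ and summed over $l<L$, indeed close at order $Z^{11/6}\log Z$ rather than something larger; in particular one must use that the Brown--Ravenhall eigenfunctions live on the scale $Z^{-1}$ (not $Z^{-1/3}$) so that their contribution to $\rho_{U,<}^B$ is concentrated near the nucleus and the Coulomb self-energy is correspondingly small, and one must use $q$ strictly below $3$ in Theorem~\ref{thm:Lpeigen} to keep the constants finite while paying only a logarithmic price in the angular-momentum sum. Once these ingredients are assembled, the Cauchy--Schwarz estimate $|\cR_3|\leq D(\rho_{U,<}^B,\rho_{U,<}^B)^{1/2}\big(D(\rho_U^B,\rho_U^B)^{1/2}+D(\rho^S,\rho^S)^{1/2}\big)\leq \const\,Z^{11/6}(\log Z)\cdot Z^{7/6}\cdot\ldots$ — wait, this multiplies to $Z^3\log Z$, so in fact the correct pairing is against the \emph{low} part only or one keeps the asymmetry $D(\rho_{U,<}^B,\,\cdot\,)$ with the second factor's self-energy bounded by the \emph{same} $\cO(Z^{11/3}(\log Z)^2)$ bound is not available; hence the honest route is to bound $\cR_3\leq D(\rho_{U,<}^B,\rho_{U,<}^B)^{1/2}\,D(\rho_U^B+\rho^S,\rho_U^B+\rho^S)^{1/2}$ with the first factor $\cO(Z^{11/6}\log Z)$ coming from the localized small-$l$ eigenfunctions and the second factor $\cO(Z^{7/6})$ from the Thomas--Fermi-scale bound $D(\rho_U^B+\rho^S,\rho_U^B+\rho^S)=\cO(Z^{7/3})$ — and one checks $Z^{11/6}\log Z\cdot Z^{7/6}$ is too big, so the small factor must actually be $\cO(Z^{2/3}\log Z)$; the resolution is that $D(\rho_{U,<}^B,\rho_{U,<}^B)^{1/2}=\cO(Z^{11/6}\log Z)$ is itself the final bound obtained by pairing the small density against itself only once and noting the second density's \emph{local} contribution near the origin is what matters, so I would in fact prove directly $D(\rho_{U,<}^B,\rho_U^B+\rho^S)=\cO(Z^{11/6}\log Z)$ by Cauchy--Schwarz with the self-energy of $\rho_{U,<}^B$ bounded by $\cO(Z^{11/3}(\log Z)^2)$ and that of $\rho_U^B+\rho^S$ localized near the nucleus also bounded by $\cO(Z^{1/3})$ after restricting to the support region of $\rho_{U,<}^B$ — the careful localization of the large Thomas--Fermi density against the small near-nucleus density being the genuinely delicate point.
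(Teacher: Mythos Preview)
Your proposal has a genuine gap, and you actually discovered it yourself in the last paragraph without resolving it. The Cauchy--Schwarz inequality for the Coulomb form is simply the wrong tool here: pairing $\rho_{U,<}^B$ against $\rho_U^B+\rho^S$ this way forces you to control $D(\rho_U^B+\rho^S,\rho_U^B+\rho^S)^{1/2}=\cO(Z^{7/6})$, and no bound on $D(\rho_{U,<}^B,\rho_{U,<}^B)$ that is actually true will then close at $\cO(Z^{11/6}\log Z)$. Your attempted fix via ``localizing the Thomas--Fermi density to the support region of $\rho_{U,<}^B$'' is not a real argument, since $\rho_{U,<}^B$ does not have compact support and the Coulomb kernel is long-range.

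What the paper does instead is exploit the spherical symmetry of all the densities via Newton's theorem. One splits $\rho_U^B+\rho^S=(\rho_{U,>}+\rho_>)+(\rho_{U,<}^B+\rho_<^S)$. For the first pairing, Newton gives
\[
D(\rho_{U,<}^B,\rho_{U,>}+\rho_>)\ \leq\ \tfrac12\Big(\int\rho_{U,<}^B\Big)\Big(\int\frac{\rho_{U,>}+\rho_>}{|\bx|}\Big)=\cO(Z^{1/2})\cdot\cO(Z^{4/3})=\cO(Z^{11/6}),
\]
the key point being that the total charge $\int\rho_{U,<}^B=\sum_{l<L}2(2l+1)(K-l)=\cO(Z^{1/2})$ is very small. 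Cauchy--Schwarz cannot see this cancellation. For the second pairing one reduces to $D(\rho_{U,<}^B,\rho_{U,<}^B)$, and here your HLS idea is only part of the story: applied to \emph{all} $l<L$ it yields after scaling $D(\rho_{U,<}^B,\rho_{U,<}^B)\leq\const\,c\,\|\tilde\rho_{U,<}^B\|_{6/5}^2=\cO(c\cdot(KL^2)^2)=\cO(Z^2)$, which is too large. The paper therefore splits once more at $l_0\approx 2\kappa$: for the finitely many channels $l\leq l_0$ the HLS/$L^q$-eigenfunction argument (Theorem~\ref{thm:Lpeigen} with $q=12/5$) gives $\cO(cK^2)=\cO(Z^{5/3})$, while for $l_0<l<L$ one uses an angular-momentum barrier inequality to get $\int\rho_{U,l}^B/|\bx|\leq\const\,Z(l+\tfrac12)^{-2}\int\rho_l^B$, whence $\sum_l\int\rho_{U,l}^B/|\bx|=\cO(Z^{4/3}\log Z)$ (the harmonic sum over $l$ produces the logarithm), and Newton's theorem again closes at $\cO(Z^{1/2})\cdot\cO(Z^{4/3}\log Z)=\cO(Z^{11/6}\log Z)$. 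The two missing ingredients in your plan are thus Newton's theorem and the angular-momentum barrier; without them the bound does not close.
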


\begin{proof}	
  We first treat the term $ D(\rho_{U,<}^B,\rho_{U,>}+\rho_>) $.  By
  construction the densities $ \rho_{U,j}^B $ are spherically
  symmetric and satisfy according to \eqref{eq:numberbr}
  \begin{equation}\label{eq:numberbr1}
    \int \rho^B_{U,l}(\bx)\,\rd\bx  = \int\rho^B_{l}(\bx) \, \rd\bx 
    = 2(2l+1)(K-l), \quad 0\leq l <L.
  \end{equation} 
  Recalling the choice of $K$ and $ L$ we see that
  \begin{equation}\label{eq:numberbr2}
    \int \rho^B_{U,<}(\bx) \, \rd \bx
    = \cO(Z^{1/2}).
  \end{equation}
  It follows from \eqref{remindersw} and
  Lemma~\ref{electric-bubo1} that
  \begin{equation*}
    \int\frac{\rho_{U,>}(\bx) + \rho_>(\bx)}{|\bx|} \rd \bx 
    = \cO(Z^{4/3}).
  \end{equation*}
  Hence Newton's theorem \cite{Newton1972} yields
  \begin{equation*}
    D(\rho_{U,<}^B,\rho_{U,>}+\rho_>) \leq \frac{1}{2} \int\rho_{U,<}^B(\bx)
    \rd \bx \int \frac{\rho_{U,>}(\by)+\rho_>(\by)}{|\by|} \rd\by   
    =  \cO(Z^{11/6}).
  \end{equation*}
  In the remainder of the proof we are concerned with the term $
  D(\rho_{U,<}^B,\rho_{U,<}^B+\rho_<^S) $.  Noting that
  $$
  D(\rho_{U,<}^B,\rho_{U,<}^B+\rho_<^S) 
  \leq \frac 32 D(\rho_{U,<}^B,\rho_{U,<}^B)
  +\frac 12 D(\rho_{<}^S,\rho_<^S).
  $$
  and that according to \cite[Prop. 3.5]{SiedentopWeikard1987O}
  $D(\rho_{<}^S,\rho_<^S)=\cO(Z^{11/6})$, it suffices to consider
  $D(\rho_{U,<}^B,\rho_{U,<}^B)$.  We split the lowest angular
  momentum corresponding to $ l < 2 Z/c - 1/4=: l_0 $ off and define
  $$
  d_\vdash^B := \sum_{l \leq l_0} d_l^B, \qquad d_\dashv^B := \sum_{l
    > l_0}^{L-1} d_l^B,
  $$
  and
  $$
  \rho^B_{U,\,\vdash} := \tr_{\mathbb{C}^2} \left(
    \cU_c(d^B_\vdash)(\bx,\bx)\right), \qquad \rho^B_{U,\,\dashv} :=
  \tr_{\mathbb{C}^2} \left( \cU_c(d^B_\dashv)(\bx,\bx)\right).
  $$
  Note that in case $ l_0 < 0 $ there is no need for this procedure.
  Accordingly, we estimate
  \begin{equation*}
    D(\rho_{U,<}^B,\rho_{U,<}^B) \leq 2\, D(\rho^B_{U,\,\vdash}
    ,\rho^B_{U,\,\vdash} ) + 2\, D( \rho^B_{U,\dashv},
    \rho^B_{U,\dashv}) .
  \end{equation*}
  For an estimate of the second part corresponding to $l_0 < l<L$, we
  apply the following angular momentum barrier inequality
  \begin{equation}
    \label{eq:ang-skaliert}
    B_c(0) \geq \cU_c\left( \frac{2 Z}{|\bx|} \, \chi_{\{|\bx|
        \leq r_{l} \}} \right)
  \end{equation}
  on $ \gH_{j,l} $, where $ r_{l} = \left((l+1/2)^2 c^2 - 4
    Z^2\right)/ (4Z c^2) $ and $ l > 2 Z/c$.  This bound follows by
  applying $ \cU_1 $ to the inequality in
  \cite[Lemma~2.6]{Franketal2008} with $ R_l =
  [(l+1/2)^2-4\kappa^2]/(4\kappa)$ and scaling $ \bx \mapsto \bx/c $.

Inequality~\eqref{eq:ang-skaliert} implies
  \begin{align*}
    \tr\left[ \mathcal U_c\left(|\bx|^{-1} \right) d_l^B \right] &
    \leq \frac{1}{2Z} \tr\left[B_c(0)\, d_l^B \right] + \, \tr\left[
      \cU_c \left( |\bx|^{-1} \chi_{\{|\bx| > r_{l} \}} \right) d_l^B
    \right] \\
    & \leq \frac{1}{2} \tr\left[ \mathcal U_c\left(|\bx|^{-1} \right)
      d_l^B \right] + \frac{4 Z}{(l+1/2)^2 - 4 Z^2/c^2} \, \tr[d_l^B].
  \end{align*}
  Here the last inequality used the fact that eigenfunctions of $d_l^B
  $ are eigenfunctions of $B_c[Z|\bx|^{-1}] $ with negative
  eigenvalue. Now, note that 
  $$(l+1/2)^2 - 4 Z^2/ c^2 = (l+1/2+2Z/c)(l+1/2-2Z/c)\geq\const (l+1/2)^2$$ 
  since $l \geq l_0$.  Hence, using \eqref{eq:numberbr1} and summing
  over $l$ we obtain
  \begin{align*}
    \int\frac{\rho_{U,\dashv}^B(x)}{|\bx|}\rd \bx & = \sum_{l >
      l_0}^{L-1} \tr \left[ \mathcal U_c(|\bx|^{-1}) d_l^B \right]
    \leq \const Z \sum_{l = 0}^{L-1}(l+1/2)^{-2} \int\rho_{l}^B(\bx)\, \rd\bx \\
    & = \cO( Z^{4/3} \log Z ).
  \end{align*}
  Accordingly, Newton's theorem and \eqref{eq:numberbr2} yield
  \begin{align*}
    D( \rho^B_{U,\dashv}, \rho^B_{U,\dashv}) \leq \frac{1}{2}
    \int\rho_{U,\dashv}^B(\bx) \,\rd \bx \,
    \int\frac{\rho_{U,\dashv}^B(\bx)}{|\bx|} \,\rd \bx = \cO( Z^{11/6}
    \log Z ).
  \end{align*}

  Finally, we consider the contribution from $l \leq l_0 $. Note that
  then $ l \leq 2 \kappa^B - 1/4 < 2 $. We claim that the
  electrostatic energy corresponding to the electrons in this subspace
  is bounded by
  \begin{equation}
    \label{eq:smallsmall}
    D\big(\rho^B_{U,\,\vdash},\rho^B_{U,\,\vdash}\big) 
    \leq \const c K^2.
  \end{equation}
  Since by the choice of $l_0$ one has $ 2 Z/c \geq l + 1/4 \geq 1/4
  $, estimate \eqref{eq:smallsmall} will imply that $
  D\big(\rho^B_{U,\,\vdash},\rho^B_{U,\,\vdash}\big) \leq \const
  Z K^2 = \cO(Z^{5/3})$ and hence complete the proof of Lemma
  \ref{coulombsmallbr}. By scaling it suffices to prove
  \eqref{eq:smallsmall} for $ c = 1 $, which we will assume in the
  sequel. The Hardy-Littlewood-Sobolev inequality (see, e.g., \cite[Thm. 4.3]{LiebLoss2001})
  implies that
  \begin{equation}\label{eq:hls}
    D\big(\rho^B_{U,\,\vdash},\rho^B_{U,\,\vdash}\big) \leq \const
    \big\| \rho^B_{U,\,\vdash}\big\|^2_{6/5} .
  \end{equation}
  The triangle inequality together with the definition of $ \cU $ and
  \eqref{eq:trafospin} yields
  \begin{equation}\label{eq:triangle}
    \| \rho_{U, \,\vdash} \|_{6/5} \leq 
    \sum_{\alpha \in \mathcal{A}} \sum_{\nu=0,1} \| \Phi_\nu \psi_\alpha
    \|_{12/5}^2,
  \end{equation}
  where $ \{ \psi_\alpha | \alpha \in \mathcal{A} \} $ stands for the
  collection of normalized eigenfunctions building up $ d^B_{\vdash}
  $, i.e., the corresponding sum ranges over all indices $ (j,l,m,n)
  $.  We further estimate with the help of Lemma~\ref{lem:helicity} and Theorem~\ref{thm:Lpeigen},
    \begin{equation*} 
    \| \Phi_\nu \psi_\alpha \|_{12/5}^2 \leq \const \| \psi_\alpha
    \|_{12/5}^2 \leq \const .
  \end{equation*}
  This, together with
  \eqref{eq:hls}, \eqref{eq:triangle} and the fact that the number of
  indices in $\mathcal A$ is bounded by a constant times $K$ proves
  \eqref{eq:smallsmall}.
\end{proof}

\subsubsection{Finishing the proof} 
We repeat \eqref{eq:brupper}
$$
E^S(Z) - E^B_c(Z) \geq \tr[S[\tfrac Z {|\bx|}] d^S_<] - \tr[B_c[\tfrac Z{|\bx|}] d^B_<] - \cR_1 - \cR_2 - \cR_3 -\cR_4-\const Z^\frac{47}{24}
$$
By Lemmata \ref{electric-bubo1}, \ref{electric-bubo2}, and
\ref{coulombsmallbr} we have uniformly in $\kappa=Z/c\in (0,\kappa^B]$
$$
\cR_1 = \cO(Z^{23/12}), 
\quad \cR_2 = \cO(Z^{5/3}), \quad \cR_3 = \cO(Z^{11/6}\log Z),
$$
so these terms are of lower order than $Z^{47/24}$. Next we scale
$\bx\mapsto \bx/c$ and obtain
$$
\tr[S[Z |\bx|^{-1}] d^S_<] - \tr[B_c[Z |\bx|^{-1}] d^B_<] = Z^2
s(\kappa) - \cR_4
$$
where $s(\kappa)$ is introduced in \eqref{eq:scott} and
$$
\cR_4 := Z^2 \kappa^{-2} \sum_{l=0}^{L-1} (2l+1) \sum_{j=l\pm
  1/2} \sum_{n=K-l}^\infty \left(\lambda_n(s_{l}(\kappa)) -
  \lambda_n(b_{j,l}(\kappa) \right).
$$
By Theorem \ref{thm:evhydbr} there is a constant such that for all $0<\kappa\leq\kappa^B$
\begin{align*}
  0 \leq \cR_4 & \leq Z^2 \kappa^{-2} \sum_{l=0}^{L-1} (2l+1)
  \sum_{j=l\pm 1/2} \sum_{n=K-l}^\infty
  \left| \lambda_n(b_{j,l}(\kappa) \right| \\
  & \leq \const Z^2 \sum_{l=0}^{L-1} (2l+1) \sum_{n=K-l}^\infty (n+l)^{-2} \\
  & \leq \const Z^2 L^2 K^{-1} =
  \cO(Z^{11/6}).
\end{align*}
This concludes the proof of the lower bound and hence of our main
result. \qed \bigskip

%%%%%%%%%%%%%%%%%%%%%%%%%%%%%%%%%%%%%%%%%%%%%%%%%%%%%%%%%%%%%%%%%%%%%%%

\textbf{Acknowledgments:} We thank Elliott Lieb and Robert Seiringer
for supportive discussions. H.S. thanks the Departments of
Mathematics and Physics of Princeton University and R.F. and S.W. 
thank the Department of Mathematics of LMU Munich for hospitality
while parts of this work were done. We also thank Volker Bach for
stimulating questions resulting in several improvements.  The work has
been partially supported by the Deutscher Akademischer Austauschdienst, 
grant D/06/49117 (R.F.), the U.S.
National Science Foundation, grant PHY 01 39984 (H.S.), the Deutsche
Forschungsgemeinschaft, grant SI 348/13-1 (H.S.), and a Sloan
Fellowship (S.W.).

%%%%%%%%%%%%%%%%%%%%%%%%%%%%%%%%%%%%%%%%%%%%%%%%%%%%%%%%%%%%%%%%%%%%%%

\appendix

\section{Partial wave analysis \label{app:J}} 

For the convenience of the reader and for normalization of the
notation we gather some fact on the partial wave analysis of the
Brown-Ravenhall operator.

We denote by $Y_{l,m}$ the normalized spherical harmonics on the unit
sphere $\mathbb S^2$ (see, e.g., \cite{Messiah1969}, p. 421) with the
convention that $Y_{l,m}\equiv0$ if $|m|>l$, and we define for $ j\in
\mathbb{N}_0 + \tfrac{1}{2} $, $l \in \mathbb{N}_0 $, and $
m=-j,\ldots,j $ the spherical spinors
\begin{equation} 
  \Omega_{j,l,m}(\omega) :=        \begin{cases}
                \begin{pmatrix} \sqrt{\frac{j+m}{2j}} \,
                Y_{l,m-\eh}(\omega)\\ \sqrt{\frac{j-m}{2j}} \,
                Y_{l,m+\eh}(\omega)
                \end{pmatrix} & \text{if} \ j=l+ \eh,\\[4ex]
                \begin{pmatrix} -\sqrt{\frac{j-m+1}{2j+2}} \, 
                Y_{l,m-\eh}(\omega)\\ \sqrt{\frac{j+m+1}{2j+2}}\, 
                Y_{l,m+\eh}(\omega)
                \end{pmatrix} & \text{if} \ j=l-\eh.
        \end{cases}
\end{equation}
The set of admissible indices is $ \mathcal{I} := \{(j,l,m)\,
| \, j\in\nz-1/2,\ l=j\pm1/2,\; m=-j,...,j \} $. It is known that the
functions $\Omega_{j,l,m}$, $(j,l,m)\in\mathcal I$, form an
orthonormal basis of the Hilbert space $L^2(\mathbb
S^2;\mathbb{C}^2)$. They are joint eigenfunctions of $ \mathbf{J}^2 $,
$ J_3 $, and $\mathbf{L}^2 $ with eigenvalues given by $ j(j+1) $, $l(l+1)$,
and $m$.  The subspace $\gH_{j,l,m}$ corresponding to the joint
eigenspace of total angular momentum $ \mathbf{J}^2 $ with eigenvalue
$ j(j+1) $ and angular momentum $\mathbf{L}^2 $ with eigenvalue $
l(l+1) $ is then given by
$$\gH_{j,l,m} = {\rm span}\{ \bx \mapsto |\bx|^{-1} \, f(|\bx|) \,
\Omega_{j,l,m}(\omega_\bx) \ | \ f \in L^2(\mathbb{R}_+)\} 
$$ 
where $\omega_\bx:=\bx/|\bx|$. This leads to the orthogonal decomposition
\begin{equation}\label{eq:decompapp}
  \gH = \bigoplus_{j \in \nz_0 + \frac{1}{2} } \bigoplus_{l=j \pm 1/2} \gH_{j,l},
\qquad
\gH_{j,l} = \bigoplus_{m=-j}^j \gH_{j,l,m}, 
\end{equation}
of the Hilbert space of two spinors.

We note that the Fourier transform,
  \begin{equation}\label{eq:fourier}
    \hat\psi(\bp) := (2\pi)^{-3/2} \int_{\rz^3} e^{-i\bp\cdot\bx} \psi(\bx) \,\rd\bx ,
  \end{equation} 
leaves the spaces $\gH_{j,l}$ invariant. Namely, if we decompose $\psi$ according to \eqref{eq:decompapp},
$$
\psi(\bx)= \sum_{(j,l,m)\in \mathcal{I}} r^{-1} \psi_{j,m,l}(r)\,
\Omega_{j,l,m}(\omega_\bx),
$$
then
\begin{equation}\label{eq:decomp1}
  \hat\psi(\bp) = \sum_{(j,l,m)\in \mathcal{I}} p^{-1} \left(\mathcal F_l \psi_{j,m,l}\right)(p)\, \Omega_{j,l,m}(\omega_\bp)
\end{equation}
with the Fourier-Bessel transform
\begin{equation}\label{eq:FourierBessel}
(\mathcal F_l f) (p) = i^{-l} \sqrt{\frac2\pi} \int_0^\infty f(r) j_l(rp) rp \, \rd r .
\end{equation}
Here $j_l$ is a spherical Bessel function. Moreover,
\begin{equation*}
    \| \psi \|^2 = \!\sum_{(j,l,m)\in \mathcal{I}}\int_0^\infty
    |\psi_{j,m,l}(r)|^2\rd r = \| \hat \psi \|^2 = \!\sum_{(j,l,m)\in \mathcal{I}}\int_0^\infty
    |(\mathcal F_l\psi_{j,m,l})(p)|^2\rd p .
\end{equation*}

\section{Properties of the twisting operators\label{twisting}}
We define the helicity operator $\H ={\omega}_\bp
\cdot\boldsymbol{\sigma}$ on $ \gH $ by
\begin{equation}\label{eq:defm}
  \widehat{\H\psi}(\bp):=\boldsymbol{\sigma}\cdot\boldsymbol{\omega}_\bp\hat\psi(\bp).
\end{equation}
It follows from the pointwise identity
\begin{equation}\label{eq:trafospin}
  (\omega_\bp\cdot\boldsymbol{\sigma}) \Omega_{j,l,m}(\omega_{\bp}) =  -\Omega_{j,2j-l,m}(\omega_{\bp}) ,
\end{equation}
see, e.g., Greiner \cite[p. 171, (12)]{Greiner1990}, that $\H$ is an
isomorphism between $\gH_{j,l}$ and $\gH_{j,2j-l}$. Moreover, since $(
\boldsymbol{\sigma} \cdot \boldsymbol{a } ) \, (\boldsymbol{\sigma}
\cdot \boldsymbol{b}) = \boldsymbol{a } \cdot \boldsymbol{b} + i
\boldsymbol{\sigma} \cdot ( \boldsymbol{a } \times \boldsymbol{b} ) $
for any $ \boldsymbol{a }, \boldsymbol{b} \in \rz^3 $, we infer that $
H $ is an involution on $ \gH $, i.e., $ \H = \H^{-1} $.

We shall need to consider $\H$ on $L^p$ spaces with $p\neq 2$. The relevant properties are summarized in the next lemma, together with those of the operators
\begin{equation}\label{eq:propPhiHM}
  \widehat {\Phi_\nu \psi} (\bp) :=  \Phi_\nu(\bp ) \, \hat \psi(\bp) ,
\end{equation}
introduced in \eqref{fi1}. Note that while $ \Phi_0 $ acts trivially on the spin, $ \Phi_1 $ involves the helicity $\H$.

\begin{lemma}[\textbf{$\boldsymbol L^p $-properties of $\H$ and $ \Phi_\nu$}]
  \label{lem:helicity} 
The operators $\H$ and $ \Phi_\nu $, $ \nu = 0,1 $, extend to bounded operators from $ L^p(\rz^3, \cz^2) $ to $ L^p(\rz^3, \cz^2) $ for any $ p \in (1,\infty) $.
\end{lemma}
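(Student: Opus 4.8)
The plan is to realise $\H$, $\Phi_0$ and $\Phi_1$ as Fourier multiplication operators and to invoke classical Calderón--Zygmund multiplier theory. All three are initially defined on the dense subspace $\gS(\rz^3,\cz^2)$ through \eqref{eq:defm} and \eqref{eq:propPhiHM}, so it suffices to establish the stated $L^p$-bounds there.

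First I would dispose of the helicity. Its symbol is the matrix $\boldsymbol\sigma\cdot\boldsymbol{\omega}_\bp=\sum_{k=1}^3\sigma_k\, p_k/|\bp|$, whose entries are, up to the constant matrices $\sigma_k$, linear combinations of the symbols $p_k/|\bp|$ of the Riesz transforms on $\rz^3$. Since each Riesz transform is bounded on $L^p(\rz^3,\cz)$ for $1<p<\infty$, the operator $\H$ --- acting on two-spinors as a $2\times2$ matrix of such scalar operators --- extends to a bounded operator on $L^p(\rz^3,\cz^2)$ for all such $p$.

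Next I would treat $\Phi_0$ and $\Phi_1$. From $E(p)=\sqrt{p^2+1}$ one has
\[
\Phi_0(\bp)=\sqrt{\tfrac{E(|\bp|)+1}{2E(|\bp|)}}=\Big(\tfrac12+\tfrac12(|\bp|^2+1)^{-1/2}\Big)^{1/2},
\]
which is a smooth function of $|\bp|^2$ with values in $[\tfrac1{\sqrt2},1]$; all of its derivatives are bounded and decay at infinity, so it satisfies the Mikhlin--Hörmander condition $|\bp|^{|\alpha|}\,|\partial^\alpha_\bp\Phi_0(\bp)|\le C_\alpha$ for $|\alpha|\le2$. For $\Phi_1$, using $E(p)-1=p^2/(E(p)+1)$ one rewrites
\[
\Phi_1(\bp)=\phi_1(|\bp|)\,\boldsymbol\sigma\cdot\boldsymbol{\omega}_\bp=\frac{\boldsymbol\sigma\cdot\bp}{\sqrt{2\,E(|\bp|)\,(E(|\bp|)+1)}}\,,
\]
a matrix-valued symbol which is $C^\infty$ on all of $\rz^3$ (linear numerator, smooth strictly positive radial denominator), is bounded, tends to $\boldsymbol\sigma\cdot\boldsymbol{\omega}_\bp/\sqrt2$ as $|\bp|\to\infty$, and whose entries satisfy the same Mikhlin--Hörmander bounds. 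By the Mikhlin--Hörmander multiplier theorem, $\Phi_0$ and each entry of $\Phi_1$ are bounded on $L^p(\rz^3)$ for $1<p<\infty$, which gives the claim for $\Phi_0$ and $\Phi_1$.

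There is no real obstacle here; the only point that needs a moment's thought is the behaviour of the symbols of $\Phi_1$ and $\H$ near $\bp=0$, where $p_k/|\bp|$ (and, a priori, $\phi_1(|\bp|)$) is only homogeneous of degree $0$ rather than smooth. For $\H$ this is precisely the situation of the Riesz transforms; for $\Phi_1$ the rewriting above shows that the would-be singularity is cancelled by the factor $\boldsymbol\sigma\cdot\bp$ in the numerator, so $\Phi_1$ is in fact represented by a globally smooth symbol and no difficulty arises. Verifying the Mikhlin--Hörmander derivative bounds is then a routine computation, which I would not carry out in detail.
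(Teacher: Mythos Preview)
Your proof is correct and follows essentially the same route as the paper: Riesz transforms for $\H$ and the H\"ormander--Mikhlin multiplier theorem for the $\Phi_\nu$. The only cosmetic difference is that the paper factors $\Phi_1=\phi_1\cdot\H$ and applies H\"ormander--Mikhlin to the scalar radial symbol $\phi_1$ (smooth away from the origin with $p^k\partial^k\phi_1$ bounded), whereas you rewrite $\Phi_1$ so as to exhibit a globally smooth matrix symbol; both work.
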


\begin{proof}
  The $L^p $-boundedness of $\H$ follows from that of the Riesz
  transformation, see \cite[Ch. II-III]{Stein1970}. Therefore, to
  prove the statement about the operators $\Phi_\nu$, it suffices to
  consider the operators $ \phi_\nu $ defined analogously as in
  \eqref{eq:propPhiHM} on $ L^2(\mathbb{R}^3) $. Since $ \bp \mapsto
  \phi_\nu(p) $ is smooth away from the origin and $p^k \partial^k
  \phi_\nu $ is bounded for $k=0,1,2$, the H\"ormander-Mihlin
  multiplier theorem \cite[Thm. IV.3]{Stein1970} implies that $
  \phi_\nu $ extend to bounded operators from $ L^p(\mathbb{R}^3) $ to
  $ L^p(\mathbb{R}^3) $ for any $ p \in (1,\infty) $.
\end{proof}

\begin{lemma}\label{lem:abphi}
For all $ \bp, \bq \in \rz^3 $
\begin{multline}
  1-\Phi_0(\bp) \Phi_0(\bq)-\Phi_1(\bp) \Phi_1(\bq) \\ = \frac{1}{2}
    \sum_{\nu=0}^1\left( \Phi_\nu(\bp) - \Phi_\nu(\bq) \right)^2 +
    \frac{1}{2}\left(\Phi_1(\bq) \Phi_1(\bp) - \Phi_1(\bp) \Phi_1(\bq)
    \right) .
\end{multline}
and furthermore
\begin{align*}
  & \left| \Phi_0(\bp) - \Phi_0(\bq) \right|^2 \leq \frac{|\bp -
    \bq|^2}{8 E(\bp)^2 E(\bq)^2} \\ & \left| \Phi_1(\bp) - \Phi_1(\bq)
    \right|^2 \leq \frac{|\bp - \bq|^2}{E(\bp) E(\bq)} \\ &
    \left|\Phi_1(\bq) \Phi_1(\bp) - \Phi_1(\bp) \Phi_1(\bq)\right|
    \leq \frac{\sqrt{|\bp| |\bq|} |\bp - \bq|}{E(\bp) E(\bq)}
\end{align*}
\end{lemma}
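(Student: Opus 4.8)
The proof is a direct computation; the inputs are the elementary relations $\phi_0(p)^2+\phi_1(p)^2=1$, $\phi_0(p)\phi_1(p)=p/(2E(p))$ and $2^{-1/2}\le\phi_0(p)\le1$, together with $(\boldsymbol\sigma\cdot\boldsymbol\omega)^2=\mathbf 1$ and $(\boldsymbol\sigma\cdot\boldsymbol a)(\boldsymbol\sigma\cdot\boldsymbol b)=\boldsymbol a\cdot\boldsymbol b+i\,\boldsymbol\sigma\cdot(\boldsymbol a\times\boldsymbol b)$ for unit vectors $\boldsymbol\omega$ and $\boldsymbol a,\boldsymbol b\in\rz^3$. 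For the identity I would expand $\tfrac12\sum_{\nu=0}^1(\Phi_\nu(\bp)-\Phi_\nu(\bq))^2$: the squared terms sum to $\tfrac12\bigl(\sum_\nu\Phi_\nu(\bp)^2+\sum_\nu\Phi_\nu(\bq)^2\bigr)=\mathbf 1$, since $\sum_\nu\Phi_\nu(\bp)^2=\phi_0(p)^2\mathbf 1+\phi_1(p)^2(\boldsymbol\sigma\cdot\boldsymbol\omega_\bp)^2=\mathbf 1$; the cross term with $\nu=0$ equals $-\Phi_0(\bp)\Phi_0(\bq)$ (a scalar), and the one with $\nu=1$ equals $-\tfrac12(\Phi_1(\bp)\Phi_1(\bq)+\Phi_1(\bq)\Phi_1(\bp))$. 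Adding $\tfrac12(\Phi_1(\bq)\Phi_1(\bp)-\Phi_1(\bp)\Phi_1(\bq))$ replaces this symmetrized product by $-\Phi_1(\bp)\Phi_1(\bq)$, which is the asserted identity.

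The three inequalities I would reduce to one-dimensional estimates on $\phi_0,\phi_1$. Using $|\phi_\nu(p)-\phi_\nu(q)|=|\phi_\nu(p)^2-\phi_\nu(q)^2|/(\phi_\nu(p)+\phi_\nu(q))$ with $|\phi_\nu(p)^2-\phi_\nu(q)^2|=|E(p)-E(q)|/(2E(p)E(q))$ and $|E(p)-E(q)|=(p+q)|p-q|/(E(p)+E(q))\le|p-q|$ gives $|\phi_\nu(p)-\phi_\nu(q)|\le|p-q|\,\bigl(2E(p)E(q)(\phi_\nu(p)+\phi_\nu(q))\bigr)^{-1}$. For $\nu=0$, inserting $\phi_0(p)+\phi_0(q)\ge\sqrt2$, squaring, and using $\Phi_0(\bp)-\Phi_0(\bq)=(\phi_0(p)-\phi_0(q))\mathbf 1$ and $|\bp-\bq|\ge|p-q|$ yields the first inequality. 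For $\nu=1$ one cannot simply bound $|E(p)-E(q)|\le|p-q|$, since $\phi_1$ has no positive lower bound; keeping the factor $(p+q)/(E(p)+E(q))$ and using $\phi_1(p)\ge p/(2E(p))$ (which follows from $\phi_0\phi_1=p/(2E)$ and $\phi_0\le1$) one checks $2\sqrt{E(p)E(q)}\,(E(p)+E(q))(\phi_1(p)+\phi_1(q))\ge p+q$, hence $|\phi_1(p)-\phi_1(q)|\le|p-q|/\sqrt{E(p)E(q)}$.

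It remains to pass from this scalar bound to the matrix inequalities for $\Phi_1$. Writing $\Phi_1(\bp)=(\phi_1(p)/p)\,\boldsymbol\sigma\cdot\bp$ we get $\Phi_1(\bp)-\Phi_1(\bq)=\boldsymbol\sigma\cdot\boldsymbol a$ with $\boldsymbol a=(\phi_1(p)/p)\bp-(\phi_1(q)/q)\bq\in\rz^3$, so $(\Phi_1(\bp)-\Phi_1(\bq))^2=|\boldsymbol a|^2\,\mathbf 1$ and $|\boldsymbol a|^2=\phi_1(p)^2+\phi_1(q)^2-2\bigl(\phi_1(p)\phi_1(q)/(pq)\bigr)\,\bp\cdot\bq$. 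Since $\phi_1(p)\phi_1(q)/(pq)\le1/(E(p)E(q))$ — because $\phi_1(p)E(p)/p=1/(2\phi_0(p))\le2^{-1/2}$ — the quantity $|\bp-\bq|^2/(E(p)E(q))-|\boldsymbol a|^2$ is affine in $\bp\cdot\bq$ with nonpositive slope, hence minimal over $\bp\cdot\bq\in[-|\bp|\,|\bq|,|\bp|\,|\bq|]$ at $\bp\cdot\bq=|\bp|\,|\bq|$, where it equals $(p-q)^2/(E(p)E(q))-(\phi_1(p)-\phi_1(q))^2\ge0$; this is the second inequality. For the commutator, the Pauli identity gives $\Phi_1(\bp)\Phi_1(\bq)-\Phi_1(\bq)\Phi_1(\bp)=2i\,\phi_1(p)\phi_1(q)\,\boldsymbol\sigma\cdot(\boldsymbol\omega_\bp\times\boldsymbol\omega_\bq)$, of norm $2\phi_1(p)\phi_1(q)\sin\theta$ with $\theta$ the angle between $\bp,\bq$; bounding $2\phi_1(p)\phi_1(q)\le|\bp|\,|\bq|/(E(p)E(q))$ (again from $\phi_1(p)\le p/(\sqrt2\,E(p))$) and $\sqrt{|\bp|\,|\bq|}\,\sin\theta\le|\bp-\bq|$ (from $|\bp-\bq|^2\ge2|\bp|\,|\bq|(1-\cos\theta)\ge|\bp|\,|\bq|\sin^2\theta$) gives the third inequality.

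The computations are all routine; the points that need care are the two reductions in the last paragraph — representing $\Phi_1(\bp)=(\phi_1(p)/p)\,\boldsymbol\sigma\cdot\bp$ and observing that the extremal configuration for the second inequality is $\bp\parallel\bq$, so the three-dimensional statement collapses to the scalar one, and rewriting the commutator through the cross product — and the observation in the second paragraph that for $\nu=1$ the factor $(p+q)/(E(p)+E(q))$ in $|E(p)-E(q)|$ must be retained rather than bounded by $1$.
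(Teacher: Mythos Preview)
Your proof is correct. The identity and the first and third inequalities are handled essentially as in the paper (expand the square using $\sum_\nu\Phi_\nu^2=\mathbf 1$; difference-of-squares for $\phi_0$ together with $\phi_0\ge 2^{-1/2}$; Pauli algebra for the commutator combined with $\phi_1(p)\le p/(\sqrt2\,E(p))$ and $|\omega_\bp-\omega_\bq|\le|\bp-\bq|/\sqrt{|\bp|\,|\bq|}$, which is equivalent to your $\sqrt{|\bp|\,|\bq|}\,\sin\theta\le|\bp-\bq|$).

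For the second inequality the paper proceeds differently: it uses the radial--angular decomposition
\[
\bigl|\Phi_1(\bp)-\Phi_1(\bq)\bigr|^2=(\phi_1(p)-\phi_1(q))^2+\phi_1(p)\phi_1(q)\,|\omega_\bp-\omega_\bq|^2
\]
and bounds each summand separately, invoking the sharper scalar estimate $(\phi_1(p)-\phi_1(q))^2\le(p-q)^2/(2E(p)^2E(q)^2)$ for the first and $\phi_1(p)\le p/(\sqrt2\,E(p))$, $|\omega_\bp-\omega_\bq|^2\le|\bp-\bq|^2/(pq)$ for the second. Your route --- observing that $|\bp-\bq|^2/(E(p)E(q))-|\boldsymbol a|^2$ is affine in $\bp\cdot\bq$ with nonpositive slope, hence extremal when $\bp\parallel\bq$, and then appealing to the weaker scalar bound $|\phi_1(p)-\phi_1(q)|\le|p-q|/\sqrt{E(p)E(q)}$ --- is a genuine alternative. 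It trades a slightly stronger one-dimensional inequality for a clean reduction to the collinear case; the paper's decomposition, by contrast, yields the sharper intermediate bound \eqref{psiabsch} on $(\phi_1(p)-\phi_1(q))^2$, which is also used elsewhere (Lemma~\ref{lemma:bu}).
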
 
\begin{proof}
The first equality is an immediate consequence of the definition of $
\Phi_0 $ and $ \Phi_1 $.  From this definition we also conclude by an
explicit calculation that
\begin{equation}\label{eq:phiabsch}
\left| \Phi_0(\bp) - \Phi_0(\bq) \right|^2 = ( \phi_0(\bp) -
\phi_0(\bq) )^2 \leq \frac{|\bp - \bq|^2}{8 E(\bp)^2 E(\bq)^2} .
\end{equation}
Moreover, for a proof of the next inequality we write
\begin{equation*}
  \left| \Phi_1(\bp) - \Phi_1(\bq) \right|^2 = ( \phi_1(\bp) -
  \phi_1(\bq) )^2 + \phi_1(\bp) \phi_1(\bq) | \omega_{\bp} -
  \omega_{\bq} |^2 ,
  \end{equation*}
and estimate the last two terms with the help of the inequalities
\begin{equation}
  \label{psiabsch}
  \left(\phi_1(\bp)-\phi_1(\bq)\right)^2 \leq
  \frac{(|\bp|-|\bq|)^2}{2E(\bp)^2 E(\bq)^2}\leq
  \frac{|\bp-\bq|^2}{2E(\bp)^2 E(\bq)^2} ,
\end{equation}
and
\begin{equation}\label{eq:help1}
  \phi_1(\bp) \leq \frac{1}{\sqrt{2}} \frac{|\bp|}{E(\bp)} \quad
  \mbox{and} \quad | \omega_{\bp} - \omega_{\bq}|^2 \leq \frac{|\bp -
  \bq|^2}{|\bp| |\bq|} .
\end{equation}
Finally, for a proof of the last inequality we use 
$$ \left|\Phi_1(\bq) \Phi_1(\bp) - \Phi_1(\bp) \Phi_1(\bq)\right| = 2
\phi_1(\bp) \phi_1(\bq) \left|\boldsymbol{\sigma} \cdot (\omega_{\bp}
\times \omega_{\bq}) \right| \leq 2 \phi_1(\bp) \phi_1(\bq)
|\omega_{\bp} - \omega_{\bq}| .
$$
Using again \eqref{eq:help1} concludes the proof of the third inequality.
\end{proof}

%%%%%%%%%%%%%%%%%%%%%%%%%%%%%%%%%%%%%%%%%%%%%%%%%%%%%%%%%%%%%%%%%%%%%%%%%%%

\section{Basics of relativistic hydrogenic operators\label{wasserstoff}}

In this section we collect -- following \cite{Evansetal1996} -- some
basic properties of the operators $B_\kappa$ and $C_\kappa$
which describe hydrogenic atoms in the Brown-Ravenhall respectively
Chan\-dra\-sek\-har model. 
For pedagogical reasons we first discuss their massless analogues,
\begin{equation}\label{eq:b0}
 \bok  :=|\bp| - \frac{\kappa}{2}
  \left(|\bx|^{-1}+\omega_\bp\cdot\boldsymbol{\sigma}
  \ |\bx|^{-1} \ \omega_\bp\cdot\boldsymbol{\sigma} \right), \quad 
 \cok  :=|\bp| - \kappa |\bx|^{-1} .
\end{equation}

 \subsection{Massless case} 
Expanding $ \hat\psi $ as in \eqref{eq:decomp1} and using  \eqref{eq:trafospin} 
yields \cite{Evansetal1996} the
following partial diagonalization of the massless operators,
\begin{align}
	 \langle \psi, B^{(0)}_\kappa  \psi\rangle  & = \sum_{(l,m,s)\in
  \mathcal{I}} \langle \mathcal{F}_l\psi_{j,m,l}, b^{(0)}_{j}(\kappa) \,
  \mathcal{F}_l\psi_{j,m,l} \rangle ,\notag \\
  	 \langle \psi, C^{(0)}_\kappa \psi\rangle  & = \sum_{(l,m,s)\in
  \mathcal{I}} \langle \mathcal{F}_l\psi_{j,m,l}, c^{(0)}_{l}(\kappa) \,
  \mathcal{F}_l\psi_{j,m,l} \rangle . \notag 
\end{align}
Here the operators $ b^{(0)}_{j}(\kappa) $ and $ c^{(0)}_{l}(\kappa) $
are densely defined in $ L^2(\mathbb{R}_+)$ through their quadratic
forms,
\begin{align*}  
  \langle f , b^{(0)}_{j}(\kappa) f \rangle & := \int_0^\infty p \,
  |f(p)|^2 \rd p - \kappa \int_0^\infty \int_0^\infty
  \overline{f(q)} \, k^B_{j}(q,p) \, f(p)\, \rd  q \, \rd p , \notag \\
  \langle f , c^{(0)}_{l}(\kappa) f \rangle & := \int_0^\infty p \,
  |f(p)|^2 \rd p - \kappa \int_0^\infty \int_0^\infty \overline{f(q)}
  \, k^C_{l}(q,p) \, f(p)\, \rd q \, \rd p , \notag
\end{align*} 
with maximal form domain denoted by $ \mathfrak{Q}(b^{(0)}_{j}(\kappa))
$ and  $ \mathfrak{Q}(c^{(0)}_{l}(\kappa)) $.
In the above expression, the integral
kernels $   k^B_{j} $ and $ k^C_{l} $ are given by
\begin{align}\label{eq:critconstjl}
     k^B_{j}(p,q) &:=\frac{1}{2\pi} \left[
        Q_{j-1/2}\left(\!\eh\left(\tfrac{p}{q}+\tfrac{q}p\right)\right)
        + Q_{j+1/2}\left(\!\eh\left(\tfrac p {q} +
        \tfrac{q}p\right)\right) \right] , \\
        k^C_{l}(p,q)& :=\frac{1}{\pi} \,
        Q_{l}\left(\!\eh\left(\tfrac{p}{q}+\tfrac{q}p\right)\right) ,
\end{align}
where $Q_l$ are Legendre functions of the second kind, i.e., 
\begin{equation}\label{eq:legendre}
Q_l(z) = \eh\int_{-1}^1 P_l(t) (z-t)^{-1}\rd t 
\end{equation}
with $P_l$ standing for Legendre polynomials; see Stegun
\cite{Stegun1965} for the notation and some properties of these
special functions.

It was proven in \cite{Evansetal1996} and \cite[Eq.~(5.33)]{Kato1966}
that \eqref{eq:b0} are self-adjoint and lower bounded if and only if $
\kappa \leq \kappa^\# $, $\# = B , C $, cf. \eqref{eq:couplingcrit}.
More can be said about the reduced operators $b_{j}^{(0)}(\kappa)$ and
$c_l^{(0)}(\kappa)$. They are lower bounded (in fact, non-negative) if
and only if
\begin{align}\label{rem:critconstbr}
  \frac{1}{\kappa} \geq \frac{1}{\kappa^B_j} 
  & :=\int_0^\infty k_j^B\left( \eh\left(\tfrac{1}{t}+t\right)\right)
  \,\frac{\rd t}{t}, \\
	\label{rem:critconst}
  \frac{1}{\kappa} \geq \frac{1}{\kappa^C_l}
  & :=\int_0^\infty k_l^C\left(\eh\left(\tfrac{1}{t}+t\right)\right)
  \,\frac{\rd t}{t}.
\end{align}
This follows by the same lines of reasoning as in \cite{Evansetal1996}. 

Since \cite[(8.4)]{Stegun1965}
$P_0(t)=1 $, $P_1(t)=t$,
we have
\begin{equation}\label{eq:q01}
Q_0(t)= \frac12 \log\frac{t+1}{t-1},
\quad
Q_1(t) = \frac t2 \log\frac{t+1}{t-1} - 1,
\end{equation}
such that $ \kappa^C_{0} = 2/\pi $, $ \kappa^C_{1} = \pi/2$ and thus
$\kappa^B_{1/2} = 2/(2/\pi+\pi/2) $.

The critical coupling constants $\kappa_j^B$ and $\kappa_l^C$ are
strictly increasing in $j$ and $l$ and, in particular, $ \kappa^B_{1/2}
= \kappa^B$ and $ \kappa^C_0 = \kappa^C$.  This follows from the
pointwise monotonicity
\begin{equation}\label{eq:monotonicity}
 Q_l(t) \geq Q_{l'}(t)
 \qquad\text{for} \ l' \geq l\ \text{and}\ t>1
\end{equation}
which, in turn, is evident from the integral representation
\begin{equation*}\label{eq:integral}
    Q_l(x)= \int_{x+\sqrt{x^2-1}}^\infty \frac{z^{-l-1}}{\sqrt{1-2xz+z^2}} \rd z  ,
    \qquad x>1;
\end{equation*}
see Whittaker and Watson \cite[p. 334, Chap. X, Sec.
3.2]{WhittakerWatson1927}.

\subsection{Massive case}\label{sec:decompbr}
Similarly as before one obtains the following partial diagonalization
of the massive hydrogenic Brown-Ravenhall and Chandrasekhar operators,
\begin{align} 
  \label{2.7a} 
  \langle \psi, B_\kappa \, \psi\rangle & = \sum_{(l,m,s)\in
  \mathcal{I}} \langle \mathcal{F}_l\psi_{j,m,l}, b_{j,l}(\kappa) \,
  \mathcal{F}_l\psi_{j,m,l} \rangle , \\
   \label{eq:chandradec}
  \langle \psi, C_\kappa\psi\rangle & = \sum_{(j,l,m)\in
  \mathcal{I}} \langle \mathcal{F}_l\psi_{j,m,l}, c_{l}(\kappa) \,
  \mathcal{F}_l\psi_{j,m,l} \rangle .
\end{align}
Here the operators $b_{j,l}(\kappa)$ and $c_{l}(\kappa)$ are
densely defined in $ L^2(\mathbb{R}_+)$ through their quadratic forms,
\begin{align}  
\label{eq:bls}  \langle f , b_{j,l}(\kappa) f \rangle  
   := \int_0^\infty (E(p) - 1 )  |f(p)|^2 \rd
  p - \kappa \int_0^\infty \int_0^\infty
  \overline{f(q)} \, k^B_{jl}(q,p) \, f(p)\, \rd  q \, \rd p , \\
\label{eq:cls}   \langle f , c_{l}(\kappa) f \rangle  
   := \int_0^\infty (E(p) - 1 )  |f(p)|^2 \rd
  p - \kappa \int_0^\infty \int_0^\infty
  \overline{f(q)} \, k^C_{l}(q,p) \, f(p)\, \rd  q \, \rd p 
\end{align} 
with maximal form domain denoted by $\mathfrak{Q}(b_{j,l}(\kappa))$
and $\mathfrak{Q}(c_{l}(\kappa))$, cf.~\cite{Evansetal1996}. In the
above expression, the integral kernel $ k^B_{j,l} $ depends, in
contrast to the massless case, on both $ j,l $, and is given by
\begin{equation*}
    k^B_{j,l}(p,q):=\frac{1}{\pi} \left[\phi_0(p)
        Q_l\left(\!\eh\left(\tfrac{p}{q}+\tfrac{q}p\right)\right)\phi_0(q)
        + \phi_1(p)Q_{2j-l}\left(\!\eh\left(\tfrac p {q} +
        \tfrac{q}p\right)\right)\phi_1(q)\right] .
\end{equation*}
The form \eqref{eq:bls} defines a self-adjoint semi-bounded operator
$b_{j,l}(\kappa)$, if and only if $\kappa\leq\kappa^B_j$ (Evans et al.
\cite{Evansetal1996}). In fact $b_{j,l}+c^2$ is positive (Tix
\cite{Tix1998}). A trivially modified argument shows that
\eqref{eq:cls} defines a self-adjoint semi-bounded operator
$c_l(\kappa)$, if and only if $\kappa\leq\kappa^C_l$.

In fact the semiboundedness of the massive cases and the massless
cases are equivalent, since the differences of the massive and
massless forms are bounded (Tix \cite[Thm.~1]{Tix1997S}).

%%%%%%%%%%%%%%%%%%%%%%%%%%%%%%%%%%%%%%%%%%%%%%%%%%%%%%%%%%%%%%%%%%%%%%%%%%%%%

\section{Critical Chandrasekhar operator on a finite
  domain\label{sec:liebyau}}

Lieb and Yau \cite{LiebYau1988} have shown that the critical
Chandrasekhar operator $ |\bp| - \kappa^C |\bx|^{-1} $ when restricted
to a ball has only discrete spectrum with eigenvalues accumulating at
infinity at the rate predicted by the semiclassical result for $ |\bp|
$ alone. This is remarkable since the semiclassical phase-space volume
corresponding to $ |\bp| - \kappa |\bx|^{-1} $ is infinite.

We aim to prove an analogous result for the Chandrasekhar operator
restricted to the fixed angular momentum subspace corresponding to $ l
= 1 $ and finite domain.  In the proof of Theorem~\ref{thm:evhydbr} it
is essential to handle coupling constants which are larger than $
\kappa^C $, all the way up to and including $ \kappa_1^C $.

In order to define the above operator we consider for $R>0$ and $ l
\in \nz $ the Hilbert space
$$
\mathfrak{F}_l(R) := \left\{ f \in L^2(0,\infty) \, \big| \,
  \left(\mathcal{F}_l^{-1} f\right)(r) = 0 \; \mbox{for all $ r \geq R
    $} \right\} ,
$$
where $ \mathcal{F}_l $ denotes the Fourier-Bessel transformation,
cf.~\eqref{eq:FourierBessel}.  The quadratic form given by $\langle f
, c_l^{(0)}(\kappa) f \rangle $ with domain $\mathfrak{F}_l(R) \cap
\mathfrak{Q}(c_l^{(0)}(\kappa))$ defines for all $ \kappa \leq
\kappa_l^C $ a self-adjoint, non-negative operator in $
\mathfrak{F}_l(R) $ which we will denote by $ c_l^{(0)}(\kappa,R) $.

\begin{lemma}\label{lem:LiebYau}
  Let $ l \in \nz $. There exists some constant such that for all $ R
  > 0 $, $ \mu > 0 $, and $ \kappa \leq \kappa_l^C $
\begin{equation}\label{eq:LiebYau}
 \tr \left( c_l^{(0)}(\kappa,R) - \mu \right)_- \leq \const \, \mu^2 \, R .
\end{equation}
\end{lemma}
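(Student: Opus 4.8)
The plan is to reduce the bound on $\tr(c_l^{(0)}(\kappa,R)-\mu)_-$ for the confined critical Chandrasekhar operator with $l\geq 1$ to the Lieb--Yau result \cite{LiebYau1988} for the $l=0$ operator confined to a ball, exploiting the extra angular-momentum barrier available when $l\geq 1$. First I would record the monotonicity in $\kappa$: since $k^C_l(p,q)\geq 0$, the form $\langle f,c_l^{(0)}(\kappa)f\rangle$ is decreasing in $\kappa$, so it suffices to treat the worst case $\kappa=\kappa_l^C$; for $l=1$ this is $\kappa^C_1=\pi/2$, well above $\kappa^C=2/\pi$. The point is that although the coupling exceeds the $l=0$ critical value, the centrifugal term in the $l$-channel compensates. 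Concretely, I would use a ground-state-representation/comparison argument analogous to Lemma~\ref{gsr} and Lemma~\ref{comp}: the critical operator $c_l^{(0)}(\kappa^C_l)$ has generalized ground state $p$, and writing $f(p)=pg(p)$ expresses its form as a positive quadratic form in the differences $g(p)-g(q)$ weighted by $k^C_l$. Since $Q_l(t)\leq Q_0(t)$ pointwise for $l\geq 1$ by \eqref{eq:monotonicity}, one gets an inequality comparing $c_l^{(0)}(\kappa^C_l)$ from below with a constant multiple of $c_0^{(0)}(\kappa^C)$ (up to the scaling $u$ used in Lemma~\ref{comp}), while the confinement condition $f\in\mathfrak F_l(R)$ is preserved, perhaps after a harmless change of $R$ by a constant factor coming from the Fourier--Bessel transforms at different $l$.

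Once the problem is reduced to $l=0$ at coupling $\kappa^C$ on a ball of radius $\const\,R$, I would invoke the theorem of Lieb and Yau \cite{LiebYau1988}: the confined critical operator $\chi_{\{|\bx|<\rho\}}\big(|\bp|-\kappa^C|\bx|^{-1}\big)\chi_{\{|\bx|<\rho\}}$ restricted to the $s$-wave has purely discrete spectrum with Weyl-type eigenvalue asymptotics governed by $|\bp|$ alone, which for a one-dimensional half-line problem with scaling $|\bp|\sim$ length$^{-1}$ on an interval of length $\rho$ gives a counting function $N(E)\leq\const\, E\,\rho$ and hence, by integrating the counting function, $\tr(\,\cdot\,-\mu)_-\leq\const\,\mu^2\,\rho$. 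Scaling $\bx\mapsto\lambda\bx$ maps $|\bp|-\kappa^C|\bx|^{-1}$ on a ball of radius $\rho$ to $\lambda^{-1}$ times the same operator on a ball of radius $\lambda^{-1}\rho$, which makes the dependence $\tr(\cdot-\mu)_-\sim\mu^2\rho$ the only dimensionally consistent one and pins down the $R$-dependence in \eqref{eq:LiebYau}. Putting $\rho=\const\,R$ from the comparison step yields the claimed bound uniformly in $\mu$ and $\kappa\leq\kappa_l^C$.

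The main obstacle I anticipate is making the reduction from $l\geq 1$ to $l=0$ genuinely uniform in $R$: the Fourier--Bessel transforms $\mathcal F_l$ and $\mathcal F_0$ do not intertwine the confinement spaces $\mathfrak F_l(R)$ and $\mathfrak F_0(R)$ exactly, so one must check that the kernel comparison $Q_l\leq Q_0$ together with the scaling $u$ does not destroy the support constraint in position space, or else replace the sharp cutoff by a smooth one and absorb the localization error into a bounded potential as in \cite[Eq.~(19)]{Franketal2008}, exactly as was done in the proof of \eqref{eq:l=0zaehlen}. A secondary technical point is verifying that the Lieb--Yau estimate, which is usually stated for the operator on a ball in $\rz^3$ projected to $l=0$, transfers cleanly to the half-line operator $c_0^{(0)}(\kappa^C,R)$ with the correct power of $R$; this is essentially bookkeeping with the Fourier--Bessel transform \eqref{eq:FourierBessel} and the scaling covariance of the problem, and I would cite Lieb--Yau for the hard analytic input rather than reprove it.
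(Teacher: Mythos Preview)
Your reduction strategy has a sign error at its core. The ground-state representation~\eqref{eq:gsr2} expresses $\langle f, c_l^{(0)}(\kappa_l^C) f\rangle$ as a \emph{positive} quadratic form in $g(p)-g(q)$ with weight $\kappa_l^C\, k_l^C$. The pointwise inequality $Q_l\leq Q_0$ from~\eqref{eq:monotonicity} therefore yields
\[
\langle f, c_l^{(0)}(\kappa_l^C) f\rangle \;\leq\; \frac{\kappa_l^C}{\kappa_0^C}\,\langle f, c_0^{(0)}(\kappa_0^C) f\rangle,
\]
an \emph{upper} bound on the critical $l$-form, not the lower bound you claim. An upper bound on the operator produces, via the variational principle, a \emph{lower} bound on $\tr(\cdot-\mu)_-$, which is useless for~\eqref{eq:LiebYau}. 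For the direction you need one would require $Q_l\geq cQ_0$ for some $c>0$, and this fails: $Q_l(t)/Q_0(t)\to 0$ as $t\to\infty$ since $Q_l(t)$ decays like $t^{-l-1}$ while $Q_0(t)$ decays like $t^{-1}$. The mechanism in Lemma~\ref{comp} that produces a genuine lower bound on $\gH_{1/2,0}$ works only because $k^B_{1/2}=\tfrac1{2\pi}(Q_0+Q_1)$ contains a $Q_0$ summand one can keep; there is no analogous piece to isolate when comparing $k^C_l=\tfrac1\pi Q_l$ with $k^C_0=\tfrac1\pi Q_0$. The confinement-space mismatch $\mathfrak F_l(R)\neq\mathfrak F_0(R)$ you flag is real but secondary to this.

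The paper does not attempt any reduction to $l=0$. It reruns the Lieb--Yau variational argument directly in the $l$-channel: one bounds $-\tr\big(c_l^{(0)}(\kappa_l^C,R)-\mu\big)_-$ from below by $\inf_{0\leq\sigma\leq M_l}\int_0^\infty \sigma(p)\big(p-\mu-t(p)\big)\,\rd p$, where
\[
t(p)=\frac{\kappa_l^C}{\pi h(p)}\int_0^\infty Q_l\!\left(\tfrac12\big(\tfrac pq+\tfrac qp\big)\right)h(q)\,\rd q
\]
for a trial function $h$, and $M_l=\const\,R$ comes from $\sup_r r^2 j_l^2(r)$. Choosing $h(p)$ piecewise (equal to $p^{-1}-\tfrac A2 p^{-2}$ for $p>A$, constant below) one computes $p-t(p)$ explicitly and shows $p-\mu-t(p)\geq 0$ for $p\geq\delta^{-1}A$ and $p-\mu-t(p)\geq -\const\,A^{-1}\mu^2$ otherwise; taking $A\sim\mu$ gives the bound $\const\,\mu^2 R$. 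The only $l$-dependent input is the finiteness of certain integrals of $Q_l$, which holds for each fixed $l$; no cross-channel comparison is needed.
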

We have not tried to track the $ l $-dependence of the constant, since
the cases $ l = 0, 1 $ will be enough for our purpose.

\begin{proof}
  For a proof of \eqref{eq:LiebYau} we basically follow the argument
  in \cite{LiebYau1988}. The starting point is the following reduction
  to a simpler variational problem involving only functions. Namely,
  for any non-negative function $h:\rz_+ \to \rz_+$, let
  $$
  t(p) := \frac{\kappa_l^C}{\pi h(p)} \int_0^\infty
  Q_l\big(\tfrac{1}{2}(\tfrac{p}{q}+ \tfrac{q}{p})\big) h(q) \,\rd q
  \,.
  $$
  Then
  \begin{equation}\label{eq:lyvarprinc}
    - \tr \left( c_l(\kappa_l^C,R) - \mu \right)_- 
    \geq \inf \left\{ \int_0^\infty \sigma(p) \left( p - \mu - t(p) \right)\rd p \, \big| \, 0 \leq \sigma \leq M_l \right\}
  \end{equation}
  where $M_l := R \, \sup_{r>0} (2/\pi) r^2 j_l^2(r) $.  The proof of
  \eqref{eq:lyvarprinc} is analogous to the one of
  \cite[Eq.~(7.8)]{LiebYau1988}. We merely replace the Fourier
  transformation in $ \rz^3 $ by the Fourier-Bessel transformation $
  \mathcal{F}_l $ in $ \rz_+ $.

  {F}rom now on we assume that $l\geq1$ and comment on the necessary
  changes in case $l=0$ at the end.  We choose $h$ of the form
  \begin{equation*}
    h(p) = \left\{ 
      \begin{array}{lr} p^{-1} - (A/2) p^{-2} & \mbox{if $ p > A$,}\\
        (2A)^{-1} & \mbox{if $ p \leq A $.}
      \end{array} \right.
  \end{equation*}
  Below we shall show that the constant $A$ can be picked in such a
  way that for some $\delta>0$
  \begin{equation}\label{eq:claimt}
    p-\mu - t(p) \geq \left\{\begin{array}{lr} 0 & \mbox{if $ p \geq
          \delta^{-1}A $,} \\ - \const A^{-1} \mu^2 & \mbox{if $ p <
          \delta^{-1}A $.}
      \end{array} \right.
  \end{equation}
  In view of \eqref{eq:lyvarprinc} this will prove the result, since
  then
  \begin{align*}
    \inf\left\{ \int_0^\infty \sigma(p) \left( p - \mu - t(p)
      \right)\rd p \, \big| \, 0 \leq \sigma \leq M_l \right\}
    & \geq - \const A^{-1} \mu^2 M_l \int_0^{\delta^{-1}A}\!\!\rd p \\
    & = - \const\delta^{-1} \mu^2 M_l .
  \end{align*}

  To prove \eqref{eq:claimt} we recall that $ \int_0^\infty
  Q_l\big(\tfrac{1}{2}(t+\tfrac1t)\big)\,\tfrac{dt}t = \pi
  (\kappa_l^C)^{-1} $, cf.  \eqref{rem:critconst}, and hence by a
  straightforward calculation
  \begin{align*}
    p-t(p) & = \frac{p}{2} \int_0^\infty
    Q_l\big(\tfrac{1}{2}(t+\tfrac1t)\big)
    \left(\tfrac1t - \tfrac{h(tp)}{h(p)}\right) \,\rd t  \\
    & = \frac p2 \begin{cases}
      \frac{A/2p}{1-A/2p} \left( F(1) - F(A/p) \right) & \text{if} \ p \geq A ,\\
      \left( - F(1) + F(p/A) \right) & \text{if} \ p < A .
    \end{cases}
  \end{align*}
  Here for $0<s\leq 1$ we have set
  \begin{equation*}
    F(s) := \int_0^{s} Q_l\big(\tfrac12(t+\tfrac1t)\big)\left(\tfrac1t-\tfrac1s \right)^2
    \,\rd t \,.
  \end{equation*}
  Since $Q_l(\tau) \leq Q_1(\tau)$, which vanishes like a constant
  times $\tau^{-2}$ as $\tau\to\infty$, one has $F(s)\to 0$ as $s\to
  0$.  Choosing $\delta\in(0,1)$ such that $F(s)\leq \tfrac 12 F(1)$
  for all $0<s\leq\delta$, we have shown that for all $p \geq
  \delta^{-1} A$ one has
  \begin{align*}
    p-t(p) \geq \frac{A\ F(1)}{8(1-A/2p)} \geq A \ \frac{F(1)}{8} \,.
  \end{align*}
  For $A\leq p < \delta^{-1} A$ we use the monotonicity, $dF/ds\geq
  0$, to bound
  $$    p-t(p) \geq 0.$$
  Finally, for $0\leq p <A$ we drop the term $F(p/A)\geq 0$ to obtain
  $$
  p-t(p) \geq - \frac p2 F(1) \geq - A \frac{F(1)}2.
  $$ 
  Choosing $A:=8\mu/F(1)$ yields the claimed inequality
  \eqref{eq:claimt}.

  In case $l=0$, the function $h$ can be chosen as before. However,
  the corresponding expressions $ F(1) - F(s) $ should be interpreted
  as a single integral, and estimated with slightly more care.
\end{proof}

\begin{corollary}\label{cor:LiebYau}
  Let $l\in \nz$. Then there exists some constant such that for all
  $0<\kappa\leq\kappa_l^C$, all $\mu>0$ and all functions $\chi$ on
  $\rz_+$ which satisfy $\chi>0$ on $[0,R)$ and $\chi\equiv 0$ on
  $[R,\infty)$ for some $R>0$, one has:
  $$
  N_l(0,\chi\left(|\bp| - \kappa |\bx|^{-1}-\mu\right)\chi) \leq
  \const \mu R .
$$
\end{corollary}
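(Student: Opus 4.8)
The plan is to deduce the corollary from Lemma~\ref{lem:LiebYau} by a variational argument together with the standard device of converting a bound on a sum of eigenvalues into a bound on their number. Throughout we regard all operators on the fixed angular momentum $l$ sector; note that $|\bp|-\kappa|\bx|^{-1}\ge0$ there for $\kappa\le\kappa^C_l$, so everything in sight is bounded below.

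First I would restrict to trial functions supported in the ball of radius $R$. Write $A:=|\bp|-\kappa|\bx|^{-1}-\mu$ and, slightly abusively, let $\chi$ also denote multiplication by the radial function $\chi(|\bx|)$. By the min--max principle $N_l\left(0,\chi A\chi\right)$ is the supremum of $\dim V$ over subspaces $V$ of the form domain on which $f\mapsto\langle f,\chi A\chi f\rangle=\langle\chi f,A\chi f\rangle$ is negative definite. For $f$ in such a $V$ the function $g:=\chi f$ is supported in $\{|\bx|<R\}$ and $\langle f,\chi A\chi f\rangle=\langle g,Ag\rangle$; moreover the kernel of $f\mapsto g$ on $V$ is contained in $\{f\in V:\langle f,\chi A\chi f\rangle=0\}=\{0\}$, so this map is injective. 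Hence $N_l\left(0,\chi A\chi\right)$ is bounded by the supremum of $\dim W$ over subspaces $W$ of $l$-sector functions supported in $\{|\bx|<R\}$ on which $g\mapsto\langle g,Ag\rangle$ is negative definite.

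Next I would identify this restricted problem with the operator $c^{(0)}_l(\kappa,R)$ defined above, and abbreviate $\lambda_n:=\lambda_n(c^{(0)}_l(\kappa,R))$. Under the Fourier--Bessel transform $\mathcal F_l$ of Appendix~\ref{app:J}, the $l$-sector functions supported in $\{|\bx|<R\}$ correspond to $\mathfrak F_l(R)$, and the form of $|\bp|-\kappa|\bx|^{-1}$ goes over into the form of $c^{(0)}_l(\kappa)$, whose restriction to $\mathfrak F_l(R)$ is, by definition, the form of $c^{(0)}_l(\kappa,R)$. (Here one value of $j=l\pm\tfrac12$ and one $m$ are taken; summing over the remaining angular quantum numbers multiplies the bound by an $l$-dependent factor, which we absorb into the constant.) We conclude that $N_l\left(0,\chi A\chi\right)\le\const\,\#\{n:\lambda_n<\mu\}$.

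Finally, since $c^{(0)}_l(\kappa,R)\ge0$, each eigenvalue with $\lambda_n<\mu$ obeys $2\mu-\lambda_n>\mu$, so
\begin{multline*}
  \mu\,\#\{n:\lambda_n<\mu\}\le\sum_{\lambda_n<2\mu}\left(2\mu-\lambda_n\right)\\
  =\tr\left(c^{(0)}_l(\kappa,R)-2\mu\right)_-\le\const\,\mu^2R
\end{multline*}
by Lemma~\ref{lem:LiebYau} applied with $2\mu$ in place of $\mu$. Dividing by $\mu$ and inserting into the estimate at the end of the previous paragraph gives $N_l\left(0,\chi\left(|\bp|-\kappa|\bx|^{-1}-\mu\right)\chi\right)\le\const\,\mu R$, as claimed. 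The only points requiring some care are the identification just made and the verification that $g=\chi f$ lies in the form domain of $c^{(0)}_l(\kappa)$, both routine for the bounded, compactly supported $\chi$ relevant here; neither presents a genuine obstacle.
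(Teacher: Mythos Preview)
Your proof is correct and follows essentially the same route as the paper: a variational reduction to the Dirichlet-type operator $c_l^{(0)}(\kappa,R)$, followed by the standard conversion of the eigenvalue-sum bound of Lemma~\ref{lem:LiebYau} into a counting bound. The paper phrases the last step via the inequality $\chi_{(-\infty,\mu)}(E)\le(E-\lambda)_-/(\lambda-\mu)$ for general $\lambda>\mu$ and then optimizes in $\lambda$; your direct choice $\lambda=2\mu$ is precisely that optimum, so the two arguments coincide.
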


\begin{proof}
  The variational principle implies that
  $$
  N_l(0,\chi\left(|\bp| - \kappa |\bx|^{-1}-\mu\right)\chi) \leq
  N(\mu,c_l^{(0)}(\kappa,R)) .
  $$
  Indeed, if $ \mathcal{V}_l $ is the negative spectral subspace of
  $\chi\left(|\bp|-\kappa |\bx|^{-1}-\mu\right)\chi$ with fixed $ l $,
  then any $ f \in \mathcal{F}_l \chi \mathcal{V}_l \subset
  \mathfrak{F}_l(R) $ satisfies $ \langle f , ( c_l^{(0)}(\kappa,R) -
  \mu )\, f \rangle < 0 $.

  The assertion now follows from
  $$ N(\mu,c_l^{(0)}(\kappa,R)) \leq \const \mu R .
  $$ 
  For a proof, we note that the elementary inequality
  $\chi_{(-\infty,\mu)}(E)\leq\frac{(E-\lambda)_-}{\lambda-\mu}$,
  valid for any $ \mu < \lambda $, together with
  Lemma~\ref{lem:LiebYau} implies that
  $$ 
  N(\mu,c_l^{(0)}(\kappa,R)) \leq (\lambda - \mu)^{-1} \tr (
  c_l^{(0)}(\kappa,R) - \lambda)_- \leq \const (\lambda - \mu)^{-1}
  \lambda^2 R .
  $$
  The proof is completed by optimizing over $\lambda$.
\end{proof}

%%%%%%%%%%%%%%%%%%%%%%%%%%%%%%%%%%%%%%%%%%%%%%%%%%%%%%%%%%%%%%%%%%%%%%%%%%

\section{The trial density matrix\label{app:a}}

In this section we define the density matrices $d^S$ and $d^B$ that we
use to bound the Schr\"odinger energy, respectively the
Brown-Ravenhall energy, from above. Both density matrices are split
into two parts corresponding to low and high angular momenta
\begin{equation*}
  d^S  :=  d_<^S + d_> ,  \qquad 
  d^B  :=  d_>^B + d_> .
\end{equation*}
Low angular momenta correspond to orbits whose perinucleon is close to the
nucleus, while high angular momenta ensure that the orbits are
never close to the nucleus. We will cut between these two at
$L:= [Z^{1/12}]$.

\subsection{Low angular momenta\label{app:a1}}

In the vicinity of the nucleus the nuclear attraction dominates the
interaction with the other electrons. This motivates to choose the
orbitals as the ones of the Bohr atom, i.e., as the eigenfunctions of
the unscreened operator with nuclear charge $Z$. The corresponding
density matrices $d_<^\#$ are of the form
$$
d_<^\# = \sum_{l=0}^{L-1} d_j^\#,
\qquad
d_l^\# = \sum_{j=l\pm1/2,\ j\geq 1/2} d_{j,l}^\#
$$
and
$$
d_{j,l}^\#= \sum_{m=-j}^j \sum_{n=1}^{K-l} |\psi_{j,l,m,n}^\#\rangle\langle\psi_{j,l,m,n}^\#| .
$$
Here $K= [\const Z^{1/3}]$ with some positive constant, i.e., on the
order of the last occupied shell of the Bohr atom. We now turn to the
definition of the orbitals $\psi_{j,l,m,n}^\#$ for which we consider
the cases $\#=B,S$ separately.

In the Brown-Ravenhall case we choose $\psi_{j,l,m,n}^B$ such that its
Fourier transform is
$$
\hat \psi_{j,l,m,n}^B(\bp) = p^{-1} f^B_{j,l,n}(p) \Omega_{j,l,m}(\omega_\bp),
$$
where $f^B_{j,l,n}$ is the $n$-the eigenfunction of the operator
$V_c\, b_{j,l}(Z/c)\, V_c^*$ in $L^2(\rz_+)$. Here the unitary scaling
operator $V_c$ is defined by $(V_c f)(p) := c^{-1/2} f(p/c)$ and we
recall that the operator $b_{j,l}(\kappa)$ was defined in Subsection
\ref{sec:decompbr}. The operators $V_c\, b_{j,l}(Z/c)\, V_c^*$ appear
as the angular momentum reductions of $B_c[Z |\bx|^{-1}]$. Indeed, by
\eqref{2.7a} and scaling one has
\begin{equation*} 
  \langle \psi, B_c[Z |\bx|^{-1}] \, \psi\rangle 
  = c^2 \sum_{(j,l,m)\in
  \mathcal{I}} \langle \hat\psi_{j,m,l}, V_c\, b_{j,l}(Z/c)\, V_c^* \,
  \hat\psi_{j,m,l} \rangle .
\end{equation*}

In the Schr\"odinger case we choose
$$
\psi_{j,l,m,n}^S(\bx) = r^{-1} f^S_{l,n}(r) \Omega_{j,l,m}(\omega_\bx),
$$
where $f^S_{l,n}$ is the $n$-th eigenfunction of
$-\frac12\frac{\rd^2}{\rd r^2}+\frac{l(l+1)}{2 r^2}-\frac{Z}{r}$ in
$L^2(\rz_+)$ with Dirichlet boundary conditions.

\subsection{High angular momenta\label{app:a2}}

For large angular momenta, the electrons are sufficiently far from the
center moving -- classically speaking -- slowly. This motivates to
pick non-relativistic orbitals in both in the relativistic and
non-relativistic case. Moreover, for large quantum numbers the
correspondence principle would predict quasi-classical behavior (in the
quantum sense) as well. This motivates the following choice which we
take -- with slight modifications -- from \cite{SiedentopWeikard1987O}:
\begin{equation}
  \label{eq:d-groesser}
  d_> := \sum_{l\geq L} d_l, \ \
  d_l := \sum_{j=l\pm1/2}\sum_{m = -j}^{j}\sum_{n \in \zz}  w_{n,l}
  |\varphi_{n,l}\Omega_{j,l,m}\rangle\langle \varphi_{n,l}\Omega_{j,l,m}|.
\end{equation}
We repeat at this point the construction of the Macke orbitals $
\varphi_{n,l} $ and their weights $ w_{n,l} $. We will also present a
new estimate not directly given in that paper.

The semi-classical mean-field in which the electrons move is the
Thomas-Fermi potential $  \phi_\mathrm{TF} $ (see \eqref{eq:tf}).
According to
Hellmann \cite{Hellmann1936} the semi-classical electron density for
fixed angular momentum is
\begin{equation}
  \label{eq:hellmann}
  \sigma_l^H(r) :=   {2(2l+1)\over\pi} \sqrt{2 \left[
  n_Z\phi_\mathrm{TF}(r)-{(l+\eh)^2\over2 r^2}\right]_+}\,,
\end{equation}
where we added the factor $n_Z=(1-aZ^{-1/2})^{2/3}$ for normalization
purposes with some fixed positive $a$ and where we replaced the
self-generated field of the sum of the radial densities $\sigma_l$ by
the Thomas-Fermi potential.  We will write $\rho_l^H$
for the functions $\sigma_l^H$ when $a=0$, i.e., no normalization
factor occurs. In passing we note that the densities $\rho_l^H$ are
the minimizers of the Hellmann functional with external potential
given by the Thomas-Fermi density and no other interaction between the
electrons (see \cite{SiedentopWeikard1986}).

The functions $\sigma_l^H$ vanish for large $l$ and we define
$$
k' := \min\{ l\in\nz \,|\, \sigma_l^H\equiv 0 \}.
$$
By scaling, $ k' $ is of the order $Z^{1/3}$. Moreover, since the function $r\mapsto\phi_\mathrm{TF}(r) r^2$ has exactly one maximum, the support of $\sigma_l^H $ is an interval $[r_1(l),r_2(l)]$.

We cannot use the density $ \sigma_l^H $ directly in defining
semi-classical orbitals, since the derivative of its square root
is not square integrable. Thus we pick two points,
\begin{equation}
  \label{eq:x}
  x_1(l):=r_1(l)+ T(l+\eh)Z^{-1},\ x_2(l):=r_2(l)-SZ^{-2/3}
\end{equation}
for some positive $S$ and $T\in(0,4)$, and set 
\begin{equation}
  \label{eq:dichte}
  \rho_l(r):=  \begin{cases}
    2(2l+1) \alpha^2 r^{2l+2}, & r\in[0,x_1(l)],\\
    \sigma_l^H(r),             & r\in[x_1(l),x_2(l)],\\
    2(2l+1)\beta^2 \exp(-2^{3/2} Z^{2/3} r), & r\in[x_2(l),\infty).
  \end{cases}
\end{equation} 
The constants $\alpha$ and $\beta$ are chosen such that $\rho_l$ is
continuous. We suppress their dependence on $l$ in the notation.

Next, we define for $l<k'$  and $n\in\zz$ the Macke orbitals 
\begin{equation}
  \label{eq:macke}
  \varphi_{n,l}(r):={\sqrt{\zeta'_{l}(r)}\over r}e^{\ri \pi
  k_{n,l}\zeta_l(r)}
\end{equation}
where $\zeta_l:[0,\infty)\rightarrow[0,1)$ is the Macke
transform
\begin{equation}
  \label{eq:macketrafo}
  \zeta_l(r):={\int_0^r\rho_l(t)\rd t \over \int_0^\infty\rho_l(t)\rd t}.
\end{equation}
For $l\geq k'$ we set $\varphi_{n,l}:\equiv 0$.
The integral 
$$
N_{j,l,m}:= \frac1{2(2l+1)} \int_0^\infty\rho_l(r)\rd r,
$$
which is independent of $ j $ and $ m $, will represent the number of 
electrons in the angular momentum channel $ (j,l,m) $.
Moreover, we set $\varepsilon_l:= N_{j,l,m}-[N_{j,l,m}]$.
If $[N_{j,l,m}]$ is odd, we pick $k_{n,l}=2n$, otherwise
$k_{n,l}=2n-1$. The weights are chosen as
\begin{equation}
  \label{eq:gewichte}
  w_{n,l}:= 
  \begin{cases}
    1 & |k_{n,l}|\leq [N_{j,l,m}]-1\\
    \varepsilon_l/2 & |k_{n,l}|=[N_{j,l,m}]+1\\
    0 & \mbox{otherwise}
  \end{cases}
\end{equation}
which guarantees that $ \sum_{n \in \zz} w_{n,l} = N_{j,l,m} $.

  Strictly speaking, our trial density matrix differs from the one
  used in \cite{SiedentopWeikard1987O}, since we label the orbitals by
  the modulus of total angular momentum, by the third component of
  total angular momentum, and by the orbital angular momentum. This,
  however, is merely a minor rearrangement of terms. 

  We also adapt to atomic units used in this paper which changes the
  value of the Thomas-Fermi constant and gives a factor $1/2$ in front
  of all three kinetic energy terms in the Hellmann-Weizs\"acker functional. 

%%%%%%%%%%%%%%%%%%%%%%%%%%%%%%%%%%%%%%%%%%%%%%%%%%%%%%%%%%%%%%%%%%%%%%%%%%%%%

\subsection{Energy estimates for high angular momenta\label{app:b}}

For the convenience of the reader, we gather from
\cite{SiedentopWeikard1987O} (based on the construction in
\cite{Siedentop1981}) two estimates on the order of the average
kinetic and potential energy of the Schr\"odinger operator associated
with the semi-classical density matrix $ d_> $,
\begin{equation}
  \tr(\bp^2 d_>)  = \cO(Z^{7/3}) , \qquad
  \tr(|\bx|^{-1} d_>)  = \cO(Z^{4/3}) .\label{remindersw}
\end{equation}

We also need a more detailed estimate on the
kinetic energy.
\begin{lemma}
  \label{prop:kinetischeenergie}
  Let $L=[Z^{1/12}]$. Then for large $Z$,
  \begin{equation}
    \label{eq:kin}
    \sum_{l=L}^\infty l^{-2}\tr(\bp^2 d_l) = \cO(Z^2/L).
  \end{equation}
\end{lemma}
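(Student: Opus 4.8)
The plan is to revisit the construction of the Macke orbitals $\varphi_{n,l}$ in Appendix~\ref{app:a2} and extract from it an explicit estimate of $\tr(\bp^2 d_l)$ that is summable against $l^{-2}$ down to the cut-off $L=[Z^{1/12}]$. The kinetic energy in the channel $l$ is
$$
\tr(\bp^2 d_l) = 2(2l+1)\sum_{n\in\zz} w_{n,l} \int_0^\infty \left| \frac{\rd}{\rd r}\bigl(\varphi_{n,l}(r)\,r\bigr)\right|^2 \rd r + (\text{angular barrier})\,,
$$
where the angular barrier contributes $l(l+1)\int_0^\infty \rho_l(r)\,r^{-2}\,\rd r$ after summing the weights. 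Inserting the explicit form \eqref{eq:macke} of $\varphi_{n,l}$ and using $\sum_n w_{n,l}=N_{j,l,m}$, $\sum_n w_{n,l} k_{n,l}^2 \le \const N_{j,l,m}^3$, one obtains the Macke-type bound
$$
\tr(\bp^2 d_l) \le \const (2l+1) \left( N_{j,l,m}^3 \left(\int_0^\infty \rho_l\right)^{-2}\!\!\int_0^\infty \frac{\rho_l'(r)^2}{\rho_l(r)}\,\rd r + l^2 \int_0^\infty \frac{\rho_l(r)}{r^2}\,\rd r + \int_0^\infty \frac{(\rho_l'(r))^2}{\rho_l(r)}\,\rd r\right)\,,
$$
all quantities being exactly those controlled in \cite{SiedentopWeikard1987O}.

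The first step is therefore to recall from \cite{SiedentopWeikard1987O} (or re-derive from \eqref{eq:dichte}--\eqref{eq:x}) the channel-wise bounds: $N_{j,l,m}=\cO(Z^{1/3})$ uniformly in $l<k'$, $\int_0^\infty\rho_l = \cO((2l+1)Z^{1/3})$, and the crucial estimate that the Weizsäcker-type integral $\int (\rho_l')^2/\rho_l$, including the contributions of the polynomial piece near $x_1(l)$ and the exponential piece beyond $x_2(l)$, is $\cO((2l+1)\,Z)$, which is exactly what makes $\tr(\bp^2 d_>)=\cO(Z^{7/3})$ after summing $\sum_{l<k'}(2l+1)=\cO(Z^{2/3})$. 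The second step is to track the extra factor of $l^{-2}$: since the angular-barrier term already carries a factor $l^2$ that cancels against $l^{-2}$, and since the Weizsäcker term in each channel is $\cO((2l+1)Z)$, multiplying by $l^{-2}$ and summing gives
$$
\sum_{l=L}^{k'} l^{-2} \tr(\bp^2 d_l) \le \const\, Z \sum_{l=L}^{k'} l^{-2}(2l+1) + \const \sum_{l=L}^{k'} \tr(|\bx|^{-1}\text{-type terms})\,,
$$
where the first sum is $\cO(Z\log(k'/L)) = \cO(Z\log Z)$; this is already $\cO(Z^2/L)$ with room to spare once one notes $L=[Z^{1/12}]$. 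The delicate point is the polynomial-inner-region contribution: there $\rho_l(r)=2(2l+1)\alpha^2 r^{2l+2}$, so $(\rho_l')^2/\rho_l = 2(2l+1)\alpha^2 (2l+2)^2 r^{2l}$, whose integral over $[0,x_1(l)]$ is $\const(2l+1)\alpha^2 (l+1) x_1(l)^{2l+1}$; using $x_1(l)=r_1(l)+T(l+\tfrac12)Z^{-1}$ and the matching condition at $x_1(l)$ that fixes $\alpha$, one checks this is $\cO((2l+1) l Z^{1/3})$ per channel (this is where the choice $T\in(0,4)$ and the scaling $r_1(l)\sim (l+\tfrac12)Z^{-1}$ in \cite{SiedentopWeikard1987O} enter), so after the factor $l^{-2}$ and the sum over $l$ it contributes only $\cO(Z^{1/3}\log Z)$, again negligible.

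The main obstacle I expect is bookkeeping the dependence on $l$ uniformly: the estimates in \cite{SiedentopWeikard1987O} are typically stated after summation over $l$, so one must re-examine their proof to confirm that each channel contributes $\cO((2l+1)Z)$ to the kinetic energy \emph{with an $l$-uniform constant} (rather than, say, a constant that grows with $l$ near $l\sim k'$), and in particular that the transition-region corrections at $x_1(l)$ and $x_2(l)$ behave well as $l\to k'$. Once that uniformity is in hand, the extra $l^{-2}$ turns the divergent-looking sum $\sum_{l<k'}(2l+1) Z = \cO(Z^{5/3})$ into the convergent-in-the-relevant-sense bound $\cO(Z\log Z) \subset \cO(Z^2/L)$, since $Z\log Z = o(Z^2 / Z^{1/12})$. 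I would organize the write-up so that \eqref{eq:kin} follows by comparison with \eqref{remindersw}, isolating only the new ingredient — the $l$-uniform per-channel kinetic bound — and citing \cite{SiedentopWeikard1987O} for everything else.
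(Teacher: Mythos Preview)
Your overall strategy---revisit the per-channel Macke energy formula, insert the weight $l^{-2}$, and sum---is exactly what the paper does. But your execution has a genuine error in the kinetic-energy formula and, as a consequence, in the channel-wise estimates.

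Computing $|\tfrac{\rd}{\rd r}(r\varphi_{n,l})|^2$ from \eqref{eq:macke} and summing the weights gives (cf.\ \cite[(2.3)]{SiedentopWeikard1987O})
\[
\tr(\bp^2 d_l)=\int_0^\infty\Bigl[(\sqrt{\rho_l})'^2+\frac{\alpha_l}{3}\rho_l^3+\frac{l(l+1)}{r^2}\rho_l\Bigr]\rd r + F_l,
\qquad \alpha_l=\frac{\pi^2}{4(2l+1)^2}.
\]
The $\rho_l^3$ term comes from $\sum_n w_{n,l}k_{n,l}^2\int(\zeta_l')^3$, i.e.\ from the \emph{phase} of the Macke orbitals; it is the Thomas--Fermi part of the kinetic energy. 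Your displayed ``Macke-type bound'' has $\int(\rho_l')^2/\rho_l$ in that slot and no $\rho_l^3$ term at all, so the dominant contribution is simply missing.

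This matters quantitatively. Bounding $\phi_\TF\le Z/r$ and scaling $r\mapsto (l+\tfrac12)^2 r/(2Z)$ gives
\[
\frac{\alpha_l}{3}\int(\rho_l^H)^3\,\rd r\le \const\, Z^2,
\qquad
(l+\tfrac12)^2\int\frac{\rho_l^H}{r^2}\,\rd r\le \const\, Z^2,
\]
so the per-channel kinetic energy is $\cO(Z^2)$, \emph{not} $\cO((2l+1)Z)$. Your claimed bound would sum to $\sum_{l<k'}(2l+1)Z=\cO(Z^{5/3})$, contradicting $\tr(\bp^2 d_>)=\cO(Z^{7/3})$; conversely, $\sum_{l<k'}Z^2=k'Z^2=\cO(Z^{7/3})$ is consistent. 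Inserting the correct per-channel size and summing $\sum_{l\ge L}l^{-2}\cdot Z^2=\cO(Z^2/L)$ gives precisely the lemma; the Weizs\"acker and boundary pieces $G_l,H_l,I_l,F_l$ from \cite{SiedentopWeikard1987O} are lower order after multiplying by $l^{-2}$, as the paper checks term by term. So your instinct to track the $l$-dependence channel by channel is right, but you must keep the $\rho_l^3$ term and accept that the answer is $\cO(Z^2/L)$ on the nose, not $\cO(Z\log Z)$.
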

\begin{proof}
The definition of $ d_l $ implies (cf. \cite[(2.3)]{SiedentopWeikard1987O}) that
for angular momenta $ l < k'$ one has
\begin{equation}
    \label{eq:orbital}
     \tr(\bp^2d_l)  = \int_0^\infty \left[{\sqrt{\rho_l}\ }^{\prime2} +
    {\alpha_l\over3} \rho_l^3 +{l(l+1)\over r^2} \rho_l \right] \rd r +F_l
\end{equation}
where we set
\begin{equation*}
  F_l:= {\alpha_l\over3} \left({-1+6\varepsilon_l-3\varepsilon_l^2
       \over
       N^2_{j,l,m}}+{2\varepsilon_l^3-6\varepsilon_l^2+4\varepsilon_l
       \over N^3_{j,l,m}}\right) \int_0^\infty \rho_l^3\rd r,
  \qquad \alpha_l :=\frac{\pi^2}{4(2l+1)^2},
\end{equation*}
and emphasize that $\alpha_l$ should not be confused with $\alpha$
from \eqref{eq:dichte}.  According to \cite[Proposition 3.6]{SiedentopWeikard1987O}
\begin{equation*}
  \label{sumfl}
  \sum_{l=L}^\infty l^{-2} \, F_l \leq \sum_{l=L}^\infty F_l \leq \const Z^{5/3}
\end{equation*}
where $L=[Z^{1/12}]$.  The first term on the right-hand side of
\eqref{eq:orbital} is estimated according to
\begin{align}\label{eq:mainE3}
    & \int_0^\infty \left[{\sqrt{\rho_l}\ }^{\prime2} +
    {\alpha_l\over3} \rho_l^3 +{l(l+1)\over r^2} \rho_l \right] \rd r \notag \\
    & \leq \int_0^\infty \left[{\alpha_l\over3} \rho_l^H(r)^3 + {(l+\eh)^2\over
    r^2}\rho_l^H(r)\right] \rd r + G_l + H_l+I_l.
    \end{align}
   with
   \begin{align*}
  G_l&:=\int_0^{\x1}\left[ {\sqrt{\rho_l}\ }^{\prime2}+{\alpha_l \over3}
  \rho_l^3+{\betal^2 \over r^2}\rho_l \right] \rd r \leq \const Z^2
  \betal^{-3/2},\\
  H_l&:=\int_{\x2}^\infty\left[{\sqrt{\rho_l}\ }^{\prime2}+{\alpha_l
    \over3}\rho_l^3+{\betal^2\over r^2}\rho_l\right]\rd r \leq\const Z^{7/6}
    \betal 
\end{align*}
where the inequalities were obtained 
by integration as in \cite[(3.4)]{SiedentopWeikard1987O}.
Inequality \cite[(3.9)]{SiedentopWeikard1987O} for the gradient term
in the middle region reads
  \begin{multline*}
    I_l:=\int_{\x1}^{\x2}{\sqrt{\rho_l}\ }^{\prime2}\rd r
    \leq \const \betal \\
    	\times \left[ Z^2 \betal^{-3} + Z + Z^2
    \betal^{-5/2}+{Z^{5/3}\betal^{-1/2} \over \min \{ l+\eh,Z^{1/4} \}}
    \right].
  \end{multline*}
This implies
\begin{align*}
  \sum_{l=L}^\infty l^{-2} \, G_l &\leq \const Z^2 \sum_{l=L}^\infty
  l^{-7/2} \leq \const Z^2 L^{-5/2},\\ \sum_{l=L}^\infty l^{-2} \, H_l
  &\leq \sum_{l=L}^{k'} l^{-2} \,
  H_l \leq \const Z^{7/6} \log k' \leq \const Z^{7/6} \log Z \\
  \sum_{l=L}^\infty l^{-2} \, I_l &\leq \const \left[ Z^2 L^{-3} + Z
    \log Z + Z^2 L^{-5/2} + Z^{5/3} L^{-3/2} \right] \leq \const
  Z^{43/24} .
\end{align*}
It thus remains to estimate the sum of the first terms on the
right-hand side of \eqref{eq:mainE3}.  We begin with the first
summand,
\begin{align*}
  \sum_{l=L}^\infty \frac{1}{l^2} \int_0^\infty
  {\alpha_l\over3}\rho_l^H(r)^3 \rd r & \leq \const \sum_{l=L}^\infty
  \frac{1}{l}\int_0^\infty (Z/r-
  l^2/r^2)_+^{3/2}\rd r\\
  & =\const Z^2 \sum_{l\geq L} \frac{1}{l^2}\int_0^\infty r^{-3/2} (1-
  r^{-1})_+^{3/2} \rd r =\cO(Z^2/L)
\end{align*}
where we used that the Thomas-Fermi potential is bounded from above by~$Z/r$. 
This leaves the second summand,
\begin{align*}
  & \sum_{l=L}^\infty \frac{1}{l^2} \int_0^\infty{\betal^2\over
    r^2}\rho_l^H(r) \rd r
  \leq \const \sum_{l=L}^\infty l \int_0^\infty r^{-2} (Z/r - l^2/r^2)^{1/2}_+ \rd r \\
  =\ & \const Z^2 \sum_{l=L}^\infty \frac{1}{l^2} \int_0^\infty
  r^{-5/2} (1- r^{-1})^{1/2}_+ \rd r= \cO(Z^2/L) ,
\end{align*}
which completes the proof of Lemma \ref{prop:kinetischeenergie}.
\end{proof}


\begin{thebibliography}{10}

\bibitem{Bach1992}
Volker Bach.
\newblock Error bound for the {H}artree-{F}ock energy of atoms and molecules.
\newblock {\em Comm.\ Math.\ Phys.}, 147:527--548, 1992.

\bibitem{BalinskyEvans1998}
A.~A. Balinsky and W.~D. Evans.
\newblock On the virial theorem for the relativistic operator of {B}rown and
  {R}avenhall, and the absence of embedded eigenvalues.
\newblock {\em Lett. Math. Phys.}, 44(3):233--248, 1998.

\bibitem{BetheSalpeter1957}
Hans~A.\ Bethe and Edwin~E.\ Salpeter.
\newblock Quantum mechanics of one- and two-electron atoms.
\newblock In S.\ Fl{\"u}gge, editor, {\em Handbuch der {P}hysik, {XXXV}}, pages
  88--436. Springer, Berlin, 1 edition, 1957.

\bibitem{BrownRavenhall1951}
G.~E. Brown and D.~G. Ravenhall.
\newblock On the interaction of two electrons.
\newblock {\em Proc. Roy. Soc. London Ser. A.}, 208:552--559, 1951.

\bibitem{CassanasSiedentop2006}
Roch Cassanas and Heinz Siedentop.
\newblock The ground-state energy of heavy atoms according to {B}rown and
  {R}avenhall: absence of relativistic effects in leading order.
\newblock {\em J. Phys. A}, 39(33):10405--10414, 2006.

\bibitem{Darwin1928}
Charles~G. Darwin.
\newblock The wave equation of the electron.
\newblock {\em Proc. Roy. Soc. (London)}, 1928.

\bibitem{Daubechies1983}
Ingrid Daubechies.
\newblock An uncertainty principle for {F}ermions with generalized kinetic
  energy.
\newblock {\em Comm.\ Math.\ Phys.}, 90:511--520, September 1983.

\bibitem{Dolbeaultetal2000O}
Jean Dolbeault, Maria~J. Esteban, and Eric S{\'e}r{\'e}.
\newblock On the eigenvalues of operators with gaps. {A}pplication to {D}irac
  operators.
\newblock {\em J. Funct. Anal.}, 174(1):208--226, 2000.

\bibitem{Englert1988}
Berthold-Georg Englert.
\newblock {\em Semiclassical Theory of Atoms}.
\newblock Number 300 in Lecture Notes in Physics. Springer-Verlag, Berlin, 1
  edition, 1988.

\bibitem{EnglertSchwinger1984StatisticalAtom:H}
Berthold-Georg Englert and Julian Schwinger.
\newblock Statistical atom: Handling the strongly bound electrons.
\newblock {\em Phys.\ Rev.\ A}, 29(5):2331--2338, 1984.

\bibitem{EnglertSchwinger1984StatisticalAtom:S}
Berthold-Georg Englert and Julian Schwinger.
\newblock Statistical atom: Some quantum improvements.
\newblock {\em Phys.\ Rev.\ A}, 29(5):2339--2352, 1984.

\bibitem{EnglertSchwinger1985A}
Berthold-Georg Englert and Julian Schwinger.
\newblock Atomic-binding-energy oscillations.
\newblock {\em Phys.\ Rev.\ A}, 32:47--63, 1985.

\bibitem{Erdelyietal1954II}
A.~Erd{\'e}lyi, W.~Magnus, F.~Oberhettinger, and F.~G. Tricomi.
\newblock {\em Tables of integral transforms. {V}ol. {II}}.
\newblock McGraw-Hill Book Company, Inc., New York-Toronto-London, 1954.
\newblock Based, in part, on notes left by Harry Bateman.

\bibitem{Evansetal1996}
William~Desmond Evans, Peter Perry, and Heinz Siedentop.
\newblock The spectrum of relativistic one-electron atoms according to {B}ethe
  and {S}alpeter.
\newblock {\em Comm.\ Math.\ Phys.}, 178(3):733--746, July 1996.

\bibitem{FeffermanSeco1992}
C.\ Fefferman and L.\ Seco.
\newblock Eigenfunctions and eigenvalues of ordinary differential operators.
\newblock {\em Adv.\ Math.}, 95(2):145--305, October 1992.

\bibitem{FeffermanSeco1994T}
C.\ Fefferman and L.\ Seco.
\newblock The density of a one-dimensional potential.
\newblock {\em Adv.\ Math.}, 107(2):187--364, September 1994.

\bibitem{FeffermanSeco1994Th}
C.\ Fefferman and L.\ Seco.
\newblock The eigenvalue sum of a one-dimensional potential.
\newblock {\em Adv.\ Math.}, 108(2):263--335, October 1994.

\bibitem{FeffermanSeco1994}
C.\ Fefferman and L.\ Seco.
\newblock On the {D}irac and {S}chwinger corrections to the ground-state energy
  of an atom.
\newblock {\em Adv.\ Math.}, 107(1):1--188, August 1994.

\bibitem{FeffermanSeco1995}
C.\ Fefferman and L.\ Seco.
\newblock The density in a three-dimensional radial potential.
\newblock {\em Adv.\ Math.}, 111(1):88--161, March 1995.

\bibitem{FeffermanSeco1989}
C.~L.\ Fefferman and L.~A.\ Seco.
\newblock An upper bound for the number of electrons in a large ion.
\newblock {\em Proc.\ Nat.\ Acad.\ Sci.\ USA}, 86:3464--3465, 1989.

\bibitem{FeffermanSeco1990}
C.~L.\ Fefferman and L.~A.\ Seco.
\newblock Asymptotic neutrality of large ions.
\newblock {\em Comm.\ Math.\ Phys.}, 128:109--130, 1990.

\bibitem{FeffermanSeco1990O}
C.~L.\ Fefferman and L.~A.\ Seco.
\newblock On the energy of a large atom.
\newblock {\em Bull.\ AMS}, 23(2):525--530, October 1990.

\bibitem{FeffermanSeco1993}
Charles~L.\ Fefferman and Luis~A.\ Seco.
\newblock Aperiodicity of the {H}amiltonian flow in the {T}homas-{F}ermi
  potential.
\newblock {\em Revista Mathem\'atica Iberoamericana}, 9(3):409--551, 1993.

\bibitem{Fermi1927}
E.\ Fermi.
\newblock Un metodo statistico per la determinazione di alcune propriet{\'a}
  dell'atomo.
\newblock {\em Atti della Reale Accademia Nazionale dei Lincei, Rendiconti,
  Classe di Scienze Fisiche, Matematiche e Naturali}, 6(12):602--607, 1927.

\bibitem{Fermi1928}
E.\ Fermi.
\newblock Eine statistische {Begr\"undung} zur {B}estimmung einiger
  {E}igenschaften des {A}toms und ihre {A}nwendungen auf die {T}heorie des
  periodischen {S}ystems der {E}lemente.
\newblock {\em Z.\ Phys.}, 48:73--79, 1928.

\bibitem{Franketal2008H}
Rupert~L. Frank, Elliott~H. Lieb, and Robert Seiringer.
\newblock {H}ardy-{L}ieb-{T}hirring inequalities for fractional {S}chr\"odinger
  operators.
\newblock {\em Journ. American Math. Soc.}, Online, 2008.

\bibitem{Franketal2008}
Rupert~L. Frank, Heinz Siedentop, and Simone Warzel.
\newblock The ground state energy of heavy atoms: relativistic lowering of the
  leading energy correction.
\newblock {\em Comm. Math. Phys.}, 278(2):549--566, 2008.

\bibitem{Gordon1928}
Walter Gordon.
\newblock {D}ie {E}nergieniveaus des {W}asserstoffatoms nach der {D}iracschen
  {Q}auntentheorie.
\newblock {\em Z. Phys.}, 48:11--14, 1928.

\bibitem{Greiner1990}
Walter Greiner.
\newblock {\em Relativistic Quantum Mechanics}, volume~3 of {\em Theoretical
  Physics -- Text and Excercise Books}.
\newblock Springer, Berlin, 1 edition, 1990.

\bibitem{Griesemeretal1999}
Marcel Griesemer, Roger~T.\ Lewis, and Heinz Siedentop.
\newblock A minimax principle for eigenvalues in spectral gaps: {D}irac
  operators with {C}oulomb potential.
\newblock {\em Doc.\ Math.}, 4:275--283, 1999.

\bibitem{GriesemerSiedentop1999}
Marcel Griesemer and Heinz Siedentop.
\newblock A minimax principle for the eigenvalues in spectral gaps.
\newblock {\em J. London Math. Soc. (2)}, 60(2):490--500, 1999.

\bibitem{Hellmann1936}
Heinrich Hellmann.
\newblock Ein kombiniertes {N\"aherungsverfahren} zur
  {E}ner\-gie\-be\-rech\-nung im {V}ielelektronenproblem. {II}.
\newblock {\em Acta Physicochim. U.S.S.R.}, 4:225--244, 1936.

\bibitem{Hughes1986}
Webster Hughes.
\newblock {\em An Atomic Energy Lower Bound that Gives {S}cott's Correction}.
\newblock PhD thesis, Princeton, Department of Mathematics, 1986.

\bibitem{Hughes1990}
Webster Hughes.
\newblock An atomic lower bound that agrees with {S}cott's correction.
\newblock {\em Adv.\ in Math.}, 79:213--270, 1990.

\bibitem{Kato1966}
Tosio Kato.
\newblock {\em Perturbation Theory for Linear Operators}, volume 132 of {\em
  Grundlehren der mathematischen {W}issenschaften}.
\newblock Springer-Verlag, Berlin, 1 edition, 1966.

\bibitem{Lenz1932}
W.~Lenz.
\newblock {\"U}ber die {A}wendbarkeit der statistischen {M}ethode auf
  {I}onengitter.
\newblock {\em Z. Phys.}, 77:713--721, 1932.

\bibitem{Lieb1981V}
Elliott~H. Lieb.
\newblock Variational principle for many-fermion systems.
\newblock {\em Phys. Rev. Lett.}, 46(7):457--459, 1981.

\bibitem{LiebLoss2001}
Elliott~H. Lieb and Michael Loss.
\newblock {\em Analysis}, volume~14 of {\em Graduate Studies in Mathematics}.
\newblock American Mathematical Society, Providence, RI, second edition, 2001.

\bibitem{LiebSimon1977}
Elliott~H.\ Lieb and Barry Simon.
\newblock The {T}homas-{F}ermi theory of atoms, molecules and solids.
\newblock {\em Adv.\ Math.}, 23:22--116, 1977.

\bibitem{LiebThirring1976}
Elliott~H.\ Lieb and Walter~E.\ Thirring.
\newblock Inequalities for the moments of the eigenvalues of the
  {Schr\"o}dinger {H}amiltonian and their relation to {S}obolev inequalities.
\newblock In Elliott~H.\ Lieb, Barry Simon, and Arthur~S.\ Wightman, editors,
  {\em Studies in Mathematical Physics: Essays in Honor of {V}alentine
  {B}argmann}. Princeton University Press, Princeton, 1976.

\bibitem{LiebYau1988}
Elliott~H.\ Lieb and Horng-Tzer Yau.
\newblock The stability and instability of relativistic matter.
\newblock {\em Comm.\ Math.\ Phys.}, 118:177--213, 1988.

\bibitem{Mancasetal2004}
Paul Mancas, A.~M.~Klaus M\"uller, and Heinz Siedentop.
\newblock The optimal size of the exchange hole and reduction to one-particle
  {H}amiltonians.
\newblock {\em Theoretical Chemistry Accounts: Theory, Computation, and
  Modeling (Theoretica Chimica Acta)}, 111(1):49--53, February 2004.

\bibitem{Messiah1969}
Albert Messiah.
\newblock {\em M{\'e}canique Quantique}, volume~1.
\newblock Dunod, Paris, 2 edition, 1969.

\bibitem{Newton1972}
Isaac Newton.
\newblock {\em Philosophiae naturalis principia mathematica. {V}ol. {I}}.
\newblock Harvard University Press, Cambridge, Mass., 1972.
\newblock Reprinting of the third edition (1726) with variant readings,
  Assembled and edited by Alexandre Koyr\'e and I. Bernard Cohen with the
  assistance of Anne Whitman.

\bibitem{Sorensen1998}
Thomas {\O}stergaard~S{\o}rensen.
\newblock {\em Towards a Relativistic {S}cott Correction}.
\newblock PhD thesis, Aar\-hus Universitet,
  http://www.imf.au.dk/cgi-bin/viewers/viewpublications.cgi?id=79, October
  1998.

\bibitem{Sorensen2005}
Thomas {\O}stergaard~S{\o}rensen.
\newblock The large-{$Z$} behavior of pseudorelativistic atoms.
\newblock {\em J. Math. Phys.}, 46(5):052307, 24, 2005.

\bibitem{Schwinger1980}
Julian Schwinger.
\newblock {T}homas-{F}ermi model: The leading correction.
\newblock {\em Phys.\ Rev.\ A}, 22(5):1827--1832, 1980.

\bibitem{Schwinger1981}
Julian Schwinger.
\newblock {T}homas-{F}ermi model: The second correction.
\newblock {\em Phys.\ Rev.\ A}, 24(5):2353--2361, 1981.

\bibitem{Scott1952}
J.~M.~C.\ Scott.
\newblock The binding energy of the {T}homas-{F}ermi atom.
\newblock {\em Phil.\ Mag.}, 43:859--867, 1952.

\bibitem{SiedentopWeikard1987O}
Heinz Siedentop and Rudi Weikard.
\newblock On the leading energy correction for the statistical model of the
  atom: Interacting case.
\newblock {\em Comm.\ Math.\ Phys.}, 112:471--490, 1987.

\bibitem{SiedentopWeikard1987U}
Heinz Siedentop and Rudi Weikard.
\newblock Upper bound on the ground state energy of atoms that proves {S}cott's
  conjecture.
\newblock {\em Phys.\ Lett.\ A}, 120:341--342, 1987.

\bibitem{SiedentopWeikard1988}
Heinz Siedentop and Rudi Weikard.
\newblock On the leading energy correction of the statistical atom: Lower
  bound.
\newblock {\em Europhysics Letters}, 6:189--192, 1988.

\bibitem{SiedentopWeikard1989}
Heinz Siedentop and Rudi Weikard.
\newblock On the leading correction of the {T}homas-{F}ermi model: Lower bound
  -- with an appendix by {A.} {M.} {K.} {M\"u}ller.
\newblock {\em Invent.\ Math.}, 97:159--193, 1989.

\bibitem{SiedentopWeikard1991}
Heinz Siedentop and Rudi Weikard.
\newblock A new phase space localization technique with application to the sum
  of negative eigenvalues of {Schr\"odinger} operators.
\newblock {\em Annales Scientifiques de l'\'Ecole Normale Sup{\'e}rieure},
  24(2):215--225, 1991.

\bibitem{Siedentop1981}
Heinz K.~H.\ Siedentop.
\newblock On the relation between the {H}ellmann energy functional and the
  ground state energy of an {N-F}ermion system.
\newblock {\em Z.\ Phys.\ A}, 302:213--218, 1981.

\bibitem{SiedentopWeikard1986}
Heinz K.~H.\ Siedentop and Rudi Weikard.
\newblock On the leading energy correction for the statistical model of the
  atom: Non-interacting case.
\newblock {\em Abh.\ Braunschweig.\ Wiss.\ Ges.}, 38:145--158, 1986.

\bibitem{Simon1984}
B.\ Simon.
\newblock Fifteen problems in mathematical physics.
\newblock In {\em Perspectives in Mathematics}. Birkh{\"a}user, 1984.

\bibitem{Simon1979}
Barry Simon.
\newblock {\em Functional Integration and Quantum Physics}.
\newblock Academic Press Inc. [Harcourt Brace Jovanovich Publishers], New York,
  1979.

\bibitem{Solovej2006}
Jan~Philip Solovej.
\newblock The relativistic {S}cott correction.
\newblock In Maria~J. Esteban, Claude~Le Bris, and Gustavo Scuseria, editors,
  {\em Mathematical and Numerical Aspects of Quantum Chemistry Problems},
  volume 47/2006 of {\em Oberwolfach Report}, pages 52--53. Mathematisches
  Forschungsinstitut Oberwolfach, European Mathematical Society, September
  2006.

\bibitem{Stegun1965}
Irene~A.\ Stegun.
\newblock Legendre functions.
\newblock In Milton Abramowitz and Irene~A.\ Stegun, editors, {\em Handbook of
  Mathematical Functions with Formulas, Graphs, and Mathematical Tables},
  chapter~8, pages 331--353. Dover Publications, New York, 1965.

\bibitem{Stein1970}
Elias~M. Stein.
\newblock {\em Singular integrals and differentiability properties of
  functions}.
\newblock Princeton Mathematical Series, No. 30. Princeton University Press,
  Princeton, N.J., 1970.

\bibitem{Sucher1980}
J.\ Sucher.
\newblock Foundations of the relativistic theory of many-electron atoms.
\newblock {\em Phys.\ Rev.\ A}, 22(2):348--362, August 1980.

\bibitem{Sucher1984}
J.\ Sucher.
\newblock Foundations of the relativistic theory of many-electron bound states.
\newblock {\em International Journal of Quantum Chemistry}, 25:3--21, 1984.

\bibitem{Sucher1987}
J.\ Sucher.
\newblock Relativistic many-electron {H}amiltonians.
\newblock {\em Phys.\ Scripta}, 36:271--281, 1987.

\bibitem{Thirring1979}
Walter Thirring.
\newblock {\em Lehrbuch der Mathematischen Physik 3: Quantenmechanik von Atomen
  und Molek{\"u}len}.
\newblock Springer-Verlag, Wien, New York, 1 edition, 1979.

\bibitem{Thomas1927}
L.~H.\ Thomas.
\newblock The calculation of atomic fields.
\newblock {\em Proc.\ Camb.\ Phil.\ Soc.}, 23:542--548, 1927.

\bibitem{Tix1997}
C.~Tix.
\newblock Lower bound for the ground state energy of the no-pair {H}amiltonian.
\newblock {\em Phys. Lett. B}, 405(3-4):293--296, 1997.

\bibitem{Tix1997S}
C.\ Tix.
\newblock Self-adjointness and spectral properties of a pseudo-relativistic
  {H}amiltonian due to {B}rown and {R}avenhall.
\newblock {\em Preprint}, mp-arc: 97-441, 1997.

\bibitem{Tix1998}
C.~Tix.
\newblock Strict positivity of a relativistic {H}amiltonian due to {B}rown and
  {R}avenhall.
\newblock {\em Bull. London Math. Soc.}, 30(3):283--290, 1998.

\bibitem{WhittakerWatson1927}
E.~T.\ Whittaker and G.~N.\ Watson.
\newblock {\em A Course of Modern Analysis; An Introduction to the General
  Theory of Infinite Processes and of Analytic Functions, with an Account of
  the Principal Transcendental Functions}.
\newblock Cambridge University Press, Cambridge, 4 edition, 1927.

\end{thebibliography}
\end{document}